\documentclass[11pt,a4paper]{article}

\usepackage{amsmath}
\usepackage{amssymb}
\usepackage{mathrsfs}
\usepackage{amsthm}
\usepackage{graphicx}
\usepackage{float}
\usepackage{color}
\usepackage{mathcomp}
\usepackage{mathtools}
\usepackage{authblk}

\usepackage{tikz}
\usetikzlibrary{calc,shapes,patterns,decorations.pathreplacing}

\usepackage{bm}
\usepackage{setspace}
\usepackage{hyperref}
\hypersetup{
	colorlinks = true,
	linkcolor = red,
	urlcolor = cyan,
	citecolor = blue,
}

\makeatletter
\DeclareRobustCommand\widecheck[1]{{\mathpalette\@widecheck{#1}}}
\def\@widecheck#1#2{%
    \setbox\z@\hbox{\m@th$#1#2$}%
    \setbox\tw@\hbox{\m@th$#1%
       \widehat{%
          \vrule\@width\z@\@height\ht\z@
          \vrule\@height\z@\@width\wd\z@}$}%
    \dp\tw@-\ht\z@
    \@tempdima\ht\z@ \advance\@tempdima2\ht\tw@ \divide\@tempdima\thr@@
    \setbox\tw@\hbox{%
       \raise\@tempdima\hbox{\scalebox{1}[-1]{\lower\@tempdima\box
\tw@}}}%
    {\ooalign{\box\tw@ \cr \box\z@}}}
\makeatother



\newcommand{\triple}[1]{{\left\vert\kern-0.25ex\left\vert\kern-0.25ex\left\vert #1 
    \right\vert\kern-0.25ex\right\vert\kern-0.25ex\right\vert}}


 
\def\sy{\textsc{y}} \def\sY{\textsc{Y}}
\def\xx{\textsc{X}} \def\XX{\textsc{X}}
 \def\XXX{\textbf{X}}
 \def\YYY{\textbf{Y}}
\def\yy{\textsc{Y}} \def\YY{\textsc{Y}}

\newcommand{\re}{\operatorname{Re}}
\newcommand{\im}{\operatorname{Im}}
\newcommand{\ii}{\mathrm{i}}
\newcommand{\ee}{\mathrm{e}}
\newcommand{\eps}{\epsilon}
\newcommand{\dd}{\mathrm{d}}
\newcommand{\G}{\mathbb{G}}
\newcommand{\I}{\mathbf{1}}
\newcommand{\s}{\mathsf{S}}

\newtheorem{theorem}{Theorem}[section] 
\newtheorem{proposition}[theorem]{Proposition} 
\newtheorem{corollary}[theorem]{Corollary}
\newtheorem{lemma}[theorem]{Lemma}

\newtheorem{definition}[theorem]{Definition}  

\newtheorem{remark}[theorem]{Remark}
\numberwithin{equation}{section}

\title{Supersymmetric Cluster Expansions and Applications to Random Schr\"odinger Operators}

\author[1,2]{Luca Fresta}
\affil[1]{Institute of Mathematics, University of Zurich, Winterthurerstrasse 190, 8057 Zurich, Switzerland}
\affil[2]{Department of Mathematics, University of T\"ubingen, Auf der Morgenstelle 10, 72076 T\"ubingen, Germany}

\begin{document}

\maketitle

\begin{abstract}
We study discrete random Schr\"odinger operators via the supersymmetric formalism. We develop a cluster expansion that converges  at both strong and weak disorder. 
We prove the exponential decay of the disorder-averaged Green's function and 
the smoothness of the local density of states either at weak disorder and at energies in proximity of the unperturbed spectrum or at strong disorder and at any energy.
As an application, we establish Lifshitz-tail-type estimates for the local density of states and thus localization at weak disorder.
\end{abstract}

\section{Introduction}

In this paper we consider discrete random Schr\"odinger operators on $\ell^{2}(\mathbb{Z}^{\mathrm{D}}) \otimes \mathbb{C}^{\s}$:
\begin{equation}
H_{\omega} = H + \gamma V_{\omega} \;,
\end{equation}
where $H$ is a Hermitian translation-invariant hopping operator, with fast decaying matrix elements, and $V_{\omega}$ is a local random potential, i.e., $\big(V_{\omega}u\big)_{x} = \omega_{x} u_{x}$, $\{\omega_{x} \}_{x \in \mathbb{Z}^{\mathrm{D}}}$ being i.i.d.~random variables with probability distribution $\nu(\dd \omega_{x})$. The set $\s \subset \mathbb{N}$ is a finite set of indices, which can possibly represent, e.g., spin or sub-lattice ``colour''.

Our work focuses on the study of the disorder-averaged Green's function via the supersymmetric (SUSY) formalism. The SUSY approach to random systems was pioneered in the physics literature by Parisi and Sourals \cite{ParisiSourlas,ParisiSourlas2}, and by Efetov \cite{Efetov82} based on the seminal work of Sch\"afer and Wegner \cite{Wegner79,SchaferWegner}. 
In the mathematics literature, the SUSY formalism has been rigorously applied in the study of random Schr\"odinger operators and random matrices, see \cite{KleinPerez,CampaninoKlein,
KleinMartinelli,ConstantinescuFelder,
BovierKleinPerez, Bovier,SjostrandWang2,Wang,
DisertoriSpencerZirnbauer,
DisertoriSpencer,DisertoriPinson,DisertoriLager2,
Shamis,Shcherbina3,Shcherbina1,Shcherbina2}.

The analysis carried out in this paper is inspired by \cite{FrestaPorta}, where SUSY and renormalization group have been used to study a massless hierarchical model for disordered three-dimensional semimetals. 
One of the main obstacles in the control of the oscillatory SUSY integrals for disordered systems is represented by the presence of mostly complex reference Gaussian ``measures''. 
The extension of \cite{FrestaPorta} to the non-hierarchical case requires the use of a cluster expansion that exploits this strong oscillatory nature of the SUSY integrals.

The scope of this work is thus to develop a methodology that will be useful in future settings. The novelty of the present paper consists in the construction of SUSY cluster expansions based on the so-called Battle-Brydges-Federbush formula, see \cite{BattleFederbush,Brydges84}.
Notice that cluster expansions in the oscillatory SUSY context were previously considered in \cite{BovierKleinPerez} in the form of a Mayer trick that is applicable only to hopping operators without internal degrees of freedom and only at strong disorder,  therefore not general enough for our purposes.
On the other hand, our technique can be applied to lattice operators with internal degrees of freedom and with long range hopping. Consequently, by means of a dual SUSY representation of the Green's function that we introduce below, we are able to handle the weak disorder regime as well.
As a simple application of our method, we review some known results in the context of random Schr\"odinger operators.
%
\medskip

Let us provide some preliminary definitions.
If $\Lambda \subset \mathbb{Z}^{\mathrm{D}}$ is a finite subset, we denote by $H_{\omega,\Lambda}$ the restriction of $H_{\omega}$ to $\ell^{2}(\Lambda) \otimes \mathbb{C}^{\s}$ with zero boundary conditions outside of $\Lambda$. The disorder-averaged Green's function at finite volume is the following $\mathbb{C}^{\s \times \s}$-valued function:
\begin{equation}
\G_{\Lambda}(x,y;z) : = \mathbb{E}_{\omega} \bigg(\frac{1}{H_{\omega,\Lambda} -z }  \bigg)_{x,y} \;,
\qquad z \in \mathbb{C} \setminus \mathbb{R}\;,
\end{equation}
%
where $\mathbb{E}_{\omega} $ denotes the expectation with respect to the product measure $\nu_{\mathbb{Z}^{\mathrm{D}}}(\dd \omega) := \bigtimes_{x \in \mathbb{Z}^{\mathrm{D}}} \nu(\dd\omega_{x})$.
The local density of states (LDOS) $\rho(E)$ at energy $E \in \mathbb{R}$ can be defined as:
\begin{equation}
\begin{split}
\rho(E) :=  \lim _{\eps \to 0^{+}}\lim _{\Lambda \nearrow \mathbb{Z}^{d}}\rho_{\eps,\Lambda}(E)\;,
\qquad \rho_{\eps,\Lambda}(E) :=  \frac{1}{\pi |\s|}\im  \,\mathrm{Tr}_{\s} \,\G_{\Lambda}(0,0;E + \ii \eps) \;.
\end{split}
\end{equation}
Since $H_{\omega}$ is ergodic, $\rho(E)$ exists for almost $E \in \mathbb{R}$, 
see \cite{AizenmanWarzel} and references therein for more details. 
\medskip

We make the following assumptions on the disorder distribution:
\begin{enumerate}
\item[(H1)] The measure $\nu$ is Lebesgue absolutely continuous, is even and satisfies the finite-moment condition $\int  |\omega|^{|\s|+1} \nu(\omega) \dd \omega \,< \infty$. This hypothesis is assumed throughout the rest of the script.

\item[(H2)] The Fourier transform of the density $\nu$ satisfies smoothness and decay conditions that will be made precise at separate times, in (H2-$\mathrm{I}$) which is based on Definition \ref{def: bounds_strong_disorder} and in (H2-$\mathrm{II}$) which is based on Definition \ref{def: bounds_weak_disorder}.
The hypotheses (H2-$\mathrm{I}$) and (H2-$\mathrm{II}$) will be assumed only throughout Section \ref{sec: SUSY_cluster_expansion_strong} and Section \ref{sec: SUSY_cluster_expansion_weak} respectively.
\end{enumerate}
We shall remark that we do not anticipate Definitions \ref{def: bounds_strong_disorder} and \ref{def: bounds_weak_disorder}
because they require the discussion of Section \ref{sec: SUSY_formalism}. In fact, assumption (H2) is formulated in terms of superfunctions rather than of $\nu$: the connection between the two will only be established in the appendix.

As will be clear, assumptions (H1) and (H2) are quite restrictive but yet apply to a large class of disorder distributions. 
This class includes measures with unbounded support like the Gaussian distribution and perturbations of it, but also measures whose density is smooth and compactly supported.

\subsection{Results}

Let us summarize our results based on SUSY cluster expansions and compare them with the literature. Since $\nu$ is even, w.l.o.g.~we shall henceforth restrict to $\gamma > 0$.
In Section \ref{sec: SUSY_cluster_expansion_strong} we prove Theorem \ref{thm: exponential_decay_strong_disorder} and Corollary \ref{thm: DOS_strong_disorder} which respectively establish some properties of $\mathbb{E}_{\omega} G_{\omega,\Lambda}$ and $\rho(E)$ at strong disorder and at any energy. Assuming (H1) and (H2) our claims can be informally stated as follows:
\medskip

\noindent{}
\textbf{Theorem.} 
\textit{\begin{itemize}
\item[\textbf{(i)}] Let $H$ be a hopping operator with exponentially decaying matrix elements and let $E \in \mathbb{R}$. If  $\gamma \geq C$, $C$ depending on $E$, $\nu$ and the decay of the matrix elements of $H$, then  uniformly in $\Lambda$ and in $0 \leq \eps \leq 1$
\begin{equation}
\label{intro_decay_strong}
\sup _{\sigma,\sigma'}\big|\big(\mathbb{E}_{\omega} G_{\omega,\Lambda}(x,y;E \pm \ii \eps) \big)_{\sigma,\sigma'} \big| \leq C' \,\gamma^{-2+\delta_{x,y}} \ee^{-c|x-y|}\;,
\end{equation}
for some constant $C'$ depending on $E$, $\nu$ and the decay of the matrix elements of $H$, and some constant $c$ depending only on the latter.
\item[\textbf{(ii)}] Let $H$ be as above and $\re E$ be in a bounded set. Under the same assumptions on $\gamma$, the LDOS $\rho(E)$ is analytic provided that $|\mathrm{Im}\,E| < c$, $c$ depending on $\gamma$ and $\nu$.
\end{itemize}}
We remark that Theorem \ref{thm: exponential_decay_strong_disorder} also implies the Wegner estimate \cite{Wegner81}, which in turn implies localization via finite volume criteria \cite{AizenmanSchenker}. 

The exponential decay of the disorder-averaged Green's function for a class of disorder distributions that includes the Gaussian one and uniformly as $\eps \to 0^{+}$ is novel. In \cite{SpencerNotes} the author says that this is indeed expected to hold based on simple perturbative arguments, but that the corresponding SUSY model is otherwise difficult to analyze because of oscillations.
Decay estimates similar to \eqref{intro_decay_strong} are implied, e.g., by the probability estimates in \cite{FroehlichSpencer} or by fractional moments \cite{Aizenman}, but the resulting bounds are not uniform down to $\eps = 0$. In \cite{SjostrandWang1,SjostrandWang2} a SUSY representation is used to obtain exponential decay, uniformly in $\Lambda$ and $\eps$; the technique is based on complex deformation of the ``oscillatory measure'' and works for Cauchy distribution of the disorder (or perturbations of it).

Conversely, the regularity properties of the LDOS at strong disorder have been extensively studied, see, e.g., \cite{EdwardsThouless,ConstantinescuSpencer, BovierKleinPerez}. In \cite{BovierKleinPerez} the application of cluster expansion techniques to the SUSY representation of the LDOS is pioneered.
The authors 
consider the Laplacian on $\mathbb{Z}^{\mathrm{D}}$ in the presence of a random potential with uniform distribution of the disorder.
As anticipated, our analysis relies on a different expansion and applies to a larger class of Hamiltonians: the hopping is long-range and due to the presence of internal degrees of freedom, a non-trivial quartic fermionic interaction appears in the SUSY representation of $\mathbb{E}_{\omega}G_{\omega,\Lambda}$, not present in \cite{BovierKleinPerez} by the Pauli exclusion principle. Besides, in \cite{BovierKleinPerez} the constant $C$ is uniform in $E$, which can therefore span the entire real line. The result in (ii) is thus weaker in this regard, but the result in (iv) presented below is complementary and allows us to consider $E$ in unbounded sets at finite $\gamma$.

\medskip

In Section \ref{sec: SUSY_cluster_expansion_weak} we prove Theorem \ref{thm: exp_decay_weak}, Corollary \ref{thm: analyticity_LDOS_weak} and Theorem \ref{thm: Lifshitz_LDOS_weak} which establish some properties of $\mathbb{E}_{\omega} G_{\omega,\Lambda}$ and $\rho(E)$ at weak disorder and at energies outside of the unperturbed spectrum. Assuming (H1) and (H2) our claims can be informally stated as follows:
\medskip

\noindent{}
\textbf{Theorem.} 
\textit{\begin{itemize}
\item[\textbf{(iii)}] Let $E$ be outside of $\sigma(H)$ and define $\delta:= \mathrm{dist}(E,\sigma(H))$. If $\gamma \leq C \delta$ for some small constant $C$ depending on $\nu$, then uniformly in $\Lambda$ and $\eps \geq 0$
\begin{equation}
\sup _{\sigma,\sigma'}\big|\big(\mathbb{E}_{\omega} G_{\omega,\Lambda}(x,y;E \pm \ii \eps) \big)_{\sigma,\sigma'} \big| \leq C'\, \gamma^{-\delta_{x,y}} \ee^{-\theta \sqrt{\delta}|x-y|} \;,
\end{equation}
for some constant $C'$ depending on $\nu$ and for any $\theta \in [0,1)$.
\item[\textbf{(iv)}] Let $\re E$ be in a bounded set outside of $\sigma(H)$. Under the same assumptions on $\gamma$ as in (iii), the LDOS $\rho(E)$ is analytic provided that $|\mathrm{Im}\,E| < c$, $c$ depending on $\gamma$ and $\nu$. 
\item[\textbf{(v)}] Under the same assumptions as in (iii), the following bound holds true uniformly in $\Lambda$ and $\eps \geq 0$
\begin{equation}
\big | \rho_{\epsilon,\Lambda}\big | \leq C' \, \gamma^{-1} \, \ee^{-c(\gamma \, \delta^{-1})^{-1/2p}} \;,
\end{equation}
for some constants $C'$, $c$ and $p$ depending on $\nu$.
\end{itemize}}

We remark that Lifshitz-tail-type estimates, presented in Theorem \ref{thm: Lifshitz_LDOS_weak}, are well-known \cite{Klopp} to imply localization via finite-volume criteria, see Remark~\ref{rmk: IDOS_rmk} for details.

The exponential decay of the disorder-averaged Green's function at weak disorder and at energies close to the spectrum was expected to hold true \cite{SpencerNotes}, but no proof was available to the best of our knowledge. One reason for this, is that the methods based on fractional moments or on probability estimates do not allow the direct control of the disorder-averaged Green's function. On the other hand, the SUSY formalism is suitable for studying the disorder-averaged Green's function, but the control of the estimates is cumbersome because of oscillations.

The LDOS was expected to be analytic at weak disorder and away from the unperturbed spectrum \cite{ConstantinescuSpencer}. In \cite{Bovier} Bovier studied the analyticity of the LDOS in a hierarchical model at weak Gaussian-distributed disorder and at energies in proximity of the ``band edge''.
The work is based on SUSY and on the renormalization group analysis of the hierarchical Laplacian. Our result applies to any hopping Hamiltonian with quadratic energy dispersion relation at the band edge, e.g., the discrete Laplacian $-\Delta_{\mathbb{Z}^{\mathrm{D}}}$.

Localization in the Lifshitz-tail regime has already been established in \cite{Aizenman,Wang, Klopp, Elgart}. In \cite{Aizenman}, Aizenman establishes localization up to $\delta \geq C \gamma^{\frac{1}{1+\mathrm{D} +\eps}}$, $\eps >0$ and $C$ universal constant. The result was improved by Wang \cite{Wang} to $\delta \geq \gamma$ and later on boosted by Klopp \cite{Klopp} up to $\delta \geq \gamma^{1 + \frac{\mathrm{D}}{4\mathrm{D} + 4}}$. Finally, in \cite{Elgart} Elgart proved localization up to $\delta \geq C_{\mathrm{edge}} \gamma^{2} +\gamma^{4 -\eps}$, with $\eps >0$ and optimal $C_{\mathrm{edge}} >0$\footnote{The constant $C_{\mathrm{edge}}$ is precisely the one expected for the mobility edge, and corresponds to the extraction of the tadpole diagram, see \cite{Elgart}.}. The proof is based on the systematic resummation of the ``tadpole graph'' in the perturbative expansion of the Green's function, as analysed by Spencer in \cite{Spencer}.
Notice that in \cite{Aizenman} a very general class of disorder distribution is considered and localization is established at energies close to the unperturbed spectrum. On the other hand,
in \cite{Wang,Klopp} and in \cite{Elgart} they consider disorder distributions with semi-bounded and bounded support respectively; furthermore, the result is established at energies close to the spectrum of the random Hamiltonian (in these cases $\delta \equiv \mathrm{dist}(E,\sigma(H_{\omega}))$).
Our Lifshitz-tail-type estimate in Theorem~\ref{thm: Lifshitz_LDOS_weak} applies to disorder measures with unbounded support, e.g., the Gaussian distribution, and allows us to prove localization in the proximity of the unperturbed spectrum, at energies up to
$\delta \geq \gamma \,|\ln \gamma|^{\alpha}$, for some $\alpha$ sufficiently large.
We believe that this is the best achievable result with a single-step SUSY cluster expansion.
\medskip

The paper is organized as follows. In Section~\ref{sec: SUSY_formalism} we describe the machinery of the ``superformalism'' and we provide two SUSY representations of the disorder-averaged Green's function.
In Section~\ref{sec: SUSY_cluster_expansion_strong} we formalise the assumptions on $\hat{\nu}$ as (H2-$\mathrm{I}$), and we prove the results (i) and (ii) above. In Section~\ref{sec: SUSY_cluster_expansion_weak} we introduce the assumption (H2-$\mathrm{II}$) on $\hat{\nu}$ and we prove the results (iii), (iv) and (v).
In appendix we discuss in more detail examples of disorder distributions that satisfy the hypotheses (H2-$\mathrm{I}$) and (H2-$\mathrm{II}$).

\section{SUSY formalism}
\label{sec: SUSY_formalism}
%
After a brief introduction to
normed Grassmann algebras and superfunctions, we state three main propositions that are crucial in our analysis. We conclude the section with Proposition~\ref{prop: SUSY_representation}: we provide two SUSY representations of $\G_{\Lambda}$ that will be used respectively at strong and weak disorder.
The use of super Fourier transform and the estimation in norm of the SUSY integrals are the novel features of our method.

\subsection{Normed Grassmann Algebras}

Grassmann algebras formalise the algebraic structure of anticommuting variables. They are widely used in statistical mechanics and field theory \cite{SamuelI,SamuelII,Mastropietro}. It is useful to equip these algebras with a suitable norm: this will make the estimates in Sections~\ref{sec: SUSY_cluster_expansion_strong} and \ref{sec: SUSY_cluster_expansion_weak} rather simple and intuitive.
Previous examples of the use of norms in the context of Grassmann integration can be found in \cite{Feldman, BauerschmidtBrydges}.

\begin{definition}
A Grassmann algebra is a complex unital algebra whose generators anticommute.
\end{definition}
We will only consider Grassmann algebras with a finite number of generators. Let $\XX, \YY$ be subsets of $\Lambda$; we use the boldface font to denote the Cartesian product of such sets with $\s$, that is, we write $\XXX := \XX \times \s$, $\YYY := \YY \times \s$ and so on.
We introduce the following Grassmann algebras:
\begin{equation}
\mathscr{G}:= \bigwedge \mathbb{C}^{\s \times \{ \pm\}} \;,
\qquad \qquad
 \mathscr{G}^{\XX}:= \bigwedge\, \mathbb{C}^{\XXX \times \{\pm\}} \;,
\quad \forall \XX \subset \Lambda \;.
\end{equation}
We notice that $\mathscr{G} \cong \mathscr{G}^{\{x\}}$ for any $x \in \Lambda$, so that any discussion on $\mathscr{G}^{\XX}$ includes $\mathscr{G}$ as a special case.
Let $\{ \psi^{\varepsilon}_{x,\sigma}\}^{\varepsilon = \pm}_{(x,\sigma) \in \XXX}$ be the set of generators of $\mathscr{G}^{\XX}$ and let the set $\mathbf{\Lambda} \times \{ \pm\}$ be provided with a total order. It is easy to see that the Grassmann algebra $\mathscr{G}^{\XX}$ is a complex linear space of dimension $2^{2|\XXX|}$, the basis elements being
\begin{equation}
\label{eq:basis_elements_grassmann_algebra}
\psi^{\scriptscriptstyle{\mathcal{X}}} := \sideset{}{'} \prod_{(x,\sigma,\varepsilon) \in {\scriptscriptstyle{\mathcal{X}}}} 
\psi_{x,\sigma}^{\varepsilon} \;,
\qquad \text{for }\quad
{\scriptstyle{\mathcal{X}}} \subset \XXX \times \{ \pm \} \;,
\end{equation}
where the prime in the product operator means that the product is ordered. Accordingly, we can write any element $f \in \mathscr{G}^{\XX}$ as
\begin{equation}
\label{eq: decomposition_Grassmann_element}
f = \sum_{\mathcal{X} \subset \XXX \times \{ \pm \}} f_{\scriptscriptstyle{\mathcal{X}}} \,\psi^{\scriptscriptstyle{\mathcal{X}}} \;,
\qquad
\end{equation}
for some $f_{\scriptscriptstyle{\mathcal{X}}} \in \mathbb{C}$ that will be called coefficients of $f$.
\medskip

The generators of a Grassmann algebra are often referred to as anticommuting variables. It is useful for our purposes to think of $\mathscr{G}^{\XX}$ as a set of functions of such anticommuting variables: if $f \in \mathscr{G}^{\XX}$ and if $\{ \psi^{\varepsilon}_{x,\sigma}\}^{\varepsilon = \pm}_{(x,\sigma) \in \XXX}$ is the set of generators, we will write $f = f(\psi)$. 
Because we think of elements of a Grassmann algebra as functions of anticommuting variables, it is quite natural to introduce a linear operation like integration. Grassmann integration is defined as follows:
\begin{equation}
\begin{split}
&\int \dd \psi _{x,\sigma}^{\varepsilon}   1 = 0 \;,
\\
&\int \dd \psi_{x,\sigma}^{\varepsilon} \int \dd \psi^{\varepsilon'}_{x',\sigma'} f(\psi) = -\int \dd\psi^{\varepsilon'}_{x',\sigma'} \int \dd \psi_{x,\sigma}^{\varepsilon}f(\psi) \;,
\\
&\int \dd \psi_{x,\sigma}^{\varepsilon} \big(\psi^{\varepsilon'}_{x',\sigma'} f(\psi)\big)  = \delta_{\varepsilon,\varepsilon'} \delta_{x,x'}\delta_{\sigma,\sigma'} f(\psi) - \psi^{\varepsilon'}_{x',\sigma'} \int \dd \psi_{x,\sigma}^{\varepsilon} f(\psi)
\;.
\end{split}
\end{equation}
We shall also set
\begin{equation}
\label{def:berezin_integral}
\int \dd \psi_{\XX} \; \cdot \; := \prod_{(x,\sigma) \in \XXX} \int \dd \psi_{x,\sigma}^{+} \int \dd \psi^{-}_{x,\sigma} \; \cdot \;.
\end{equation}
Despite looking very peculiar, Grassmann integration is the cornerstone of the supersymmetric formalism, see Proposition \ref{prop: SUSY_replica_trick}.
Furthermore, we will use Grassmann integration as a tool for representing certain maps between elements of a Grassmann algebra, see, e.g., Lemma \ref{lemma: Grassmann_inversion}. This requires us to work within larger Grassmann algebras generated by two or more sets of independent variables, e.g., $\{\psi^{\varepsilon}_{x,\sigma} \}^{\varepsilon = \pm}_{(x,\sigma) \in \XXX}$ and $\{\eta^{\varepsilon}_{x,\sigma} \}^{\varepsilon = \pm}_{(x,\sigma) \in \XXX}$. In such cases, we will not explicitly refer to the larger algebra and we will only 
say that the Grassmann variables are independent.
\medskip

We now define normed Grassmann algebras, which are introduced already in \cite{Berezin66}.
\begin{definition}
A Grassmann algebra is said to be normed if it is equipped with a norm $\| \cdot \|$ satisfying:
\begin{equation}
\label{eq: Grassmann_norm_property}
\begin{split}
\Vert 1 \Vert = 1 \;, \qquad \Vert f \, g \Vert \leq \Vert f \Vert \Vert g \Vert \;, 
\end{split}
\end{equation}
$1$ denoting the multiplicative identity, $f$ and $g$ being any element of the Grassmann algebra.
\end{definition}
\begin{remark} \label{rmk:banach_algebra}
Notice that a normed Grassmann algebra is a Banach algebra.
\end{remark}
Henceforth, we equip the Grassmann algebra $\mathscr{G}^{\XX}$ with the following $\ell^{1}$-type norm:
\begin{equation}
\label{def: Grassmann_norm}
\|f \| := \sum_{\mathcal{X} \subset \XXX \times \{ \pm \}} | f_{\scriptscriptstyle{\mathcal{X}}} | \;,
\qquad
\forall f \in \mathscr{G}^{\XX}\;.
\end{equation}
\begin{lemma}
The Grassmann algebra $\mathscr{G}^{\XX}$ equipped with $\| \cdot \|$ defined in \eqref{def: Grassmann_norm} is a normed Grassmann algebra.
\end{lemma}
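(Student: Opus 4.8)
The plan is to verify the two defining properties of a normed Grassmann algebra in \eqref{eq: Grassmann_norm_property} for the $\ell^{1}$-type norm \eqref{def: Grassmann_norm}. The normalization $\|1\| = 1$ is immediate: in the decomposition \eqref{eq: decomposition_Grassmann_element} the identity corresponds to the empty index set $\mathcal{X} = \emptyset$, for which the only nonzero coefficient equals $1$, so $\|1\| = |1| = 1$. The only substantive point is submultiplicativity, $\|f g\| \leq \|f\|\,\|g\|$.

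To establish it, I would expand $f = \sum_{\mathcal{X}} f_{\scriptscriptstyle{\mathcal{X}}} \psi^{\scriptscriptstyle{\mathcal{X}}}$ and $g = \sum_{\mathcal{Y}} g_{\scriptscriptstyle{\mathcal{Y}}} \psi^{\scriptscriptstyle{\mathcal{Y}}}$ in the basis and multiply, so that $fg = \sum_{\mathcal{X},\mathcal{Y}} f_{\scriptscriptstyle{\mathcal{X}}} g_{\scriptscriptstyle{\mathcal{Y}}} \, \psi^{\scriptscriptstyle{\mathcal{X}}} \psi^{\scriptscriptstyle{\mathcal{Y}}}$. The key algebraic observation is the behaviour of the products of basis monomials. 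Since the generators anticommute and hence square to zero, $\psi^{\scriptscriptstyle{\mathcal{X}}} \psi^{\scriptscriptstyle{\mathcal{Y}}} = 0$ whenever $\mathcal{X} \cap \mathcal{Y} \neq \emptyset$; and when $\mathcal{X} \cap \mathcal{Y} = \emptyset$, reordering the generators according to the fixed total order on $\mathbf{\Lambda} \times \{\pm\}$ yields $\psi^{\scriptscriptstyle{\mathcal{X}}} \psi^{\scriptscriptstyle{\mathcal{Y}}} = \pm \, \psi^{\scriptscriptstyle{\mathcal{X} \cup \mathcal{Y}}}$, with a sign depending only on the pair $(\mathcal{X},\mathcal{Y})$. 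Consequently the coefficient of a basis element $\psi^{\scriptscriptstyle{\mathcal{Z}}}$ in $fg$ is the signed sum $(fg)_{\scriptscriptstyle{\mathcal{Z}}} = \sum \pm\, f_{\scriptscriptstyle{\mathcal{X}}} g_{\scriptscriptstyle{\mathcal{Y}}}$ taken over the disjoint splittings $\mathcal{X} \sqcup \mathcal{Y} = \mathcal{Z}$.

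The estimate then follows by discarding the signs and relaxing the disjointness constraint. Applying the triangle inequality to the coefficients gives
\begin{equation}
\|fg\| = \sum_{\mathcal{Z}} |(fg)_{\scriptscriptstyle{\mathcal{Z}}}| \leq \sum_{\mathcal{Z}} \sum_{\mathcal{X} \sqcup \mathcal{Y} = \mathcal{Z}} |f_{\scriptscriptstyle{\mathcal{X}}}|\, |g_{\scriptscriptstyle{\mathcal{Y}}}| = \sum_{\mathcal{X} \cap \mathcal{Y} = \emptyset} |f_{\scriptscriptstyle{\mathcal{X}}}|\, |g_{\scriptscriptstyle{\mathcal{Y}}}| \leq \sum_{\mathcal{X},\mathcal{Y}} |f_{\scriptscriptstyle{\mathcal{X}}}|\, |g_{\scriptscriptstyle{\mathcal{Y}}}| = \|f\|\,\|g\| \;,
\end{equation}
where the middle identity reinterprets the double sum over $\mathcal{Z}$ and its disjoint splittings as a single sum over all disjoint pairs, and the last inequality simply drops the disjointness restriction.

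There is no genuine obstacle here; the lemma is a routine verification. The only points requiring a little care are the sign bookkeeping in the reordering and the vanishing of overlapping monomials, and it is precisely the $\ell^{1}$ structure of \eqref{def: Grassmann_norm} that makes both harmless: absolute values together with the triangle inequality annihilate the signs, and the overlap terms only drop out in our favour. This is the reason why the $\ell^{1}$-type norm, rather than, say, an $\ell^{2}$-type one, is the natural choice for obtaining a submultiplicative Grassmann norm.
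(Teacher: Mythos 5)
Your proof is correct and follows essentially the same route as the paper's: expand both factors in the monomial basis, observe that overlapping monomials vanish and disjoint ones recombine up to a sign, then apply the triangle inequality and drop the disjointness constraint to get $\|fg\| \leq \|f\|\,\|g\|$. You are slightly more explicit than the paper in spelling out the vanishing of overlapping terms and in checking $\|1\|=1$, but the argument is the same.
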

\begin{remark}
The same holds true for the larger Grassmann algebras with two or more sets of independent generators.
In the present work, the norm will be always implicitly associated with the largest Grassmann algebra we work with. 
\end{remark}
\begin{proof}
For any $f, g \in \mathscr{G}^{\XX}$ we have
\begin{equation}
\begin{split}
f \,g & = \sum_{\mathcal{X} \subset \XXX \times \{ \pm \}} \sum_{\mathcal{X}' \subset \XXX \times \{ \pm \}} f_{\scriptscriptstyle{\mathcal{X}}} \,g_{\scriptscriptstyle{\mathcal{X}'}} \,\psi^{\scriptscriptstyle{\mathcal{X}}}   \,\psi^{\scriptscriptstyle{\mathcal{X}'}}
\\
& = \sum_{\mathcal{X} \subset \XXX \times \{ \pm \}} \Big( \sum_{\substack{\mathcal{X}' \cap\mathcal{X}'' = \emptyset \\ \mathcal{X}' \cup \mathcal{X}'' = \mathcal{X} }} \mathrm{sign}({\scriptstyle{\mathcal{X}'}},{\scriptstyle{\mathcal{X}''}}) f_{\scriptscriptstyle{\mathcal{X}'}} \,g_{\scriptscriptstyle{\mathcal{X}''}} \Big) \,\psi^{\scriptscriptstyle{\mathcal{X}}} \;,
\end{split}
\end{equation}
for some $\mathrm{sign}({\scriptstyle{\mathcal{X}'}},{\scriptstyle{\mathcal{X}''}}) \in \{ \pm\}$ which we shall leave unspecified. Hence
\begin{equation}
\begin{split}
\| f\,g\| & = \sum_{\mathcal{X} \subset \XXX \times \{ \pm \}}\Big| \sum_{\substack{\mathcal{X}' \cap\mathcal{X}'' = \emptyset \\ \mathcal{X}' \cup \mathcal{X}'' = \mathcal{X} }} \mathrm{sign}({\scriptstyle{\mathcal{X}'}},{\scriptstyle{\mathcal{X}''}}) f_{\scriptscriptstyle{\mathcal{X}'}} \,g_{\scriptscriptstyle{\mathcal{X}''}} \Big| 
\\
& \leq \sum_{\mathcal{X} \subset \XXX \times \{ \pm \}} \sum_{\substack{\mathcal{X}' \cap\mathcal{X}'' = \emptyset \\ \mathcal{X}' \cup \mathcal{X}'' = \mathcal{X} }} \big| f_{\scriptscriptstyle{\mathcal{X}'}} \,g_{\scriptscriptstyle{\mathcal{X}''}} \big| \leq \| f \| \, \| g \| \;.
\end{split}
\end{equation}
\end{proof}

A normed Grassmann algebra is useful because we can estimate Grassmann integrals and thus avoid their exact computation.
The following property holds:
\begin{equation}
\label{eq:fermionic_norm_integral}
\Big| \int \dd \psi_{\xx} f(\psi)\Big| \leq \| f\| \;, \qquad \forall f \in \mathscr{G}^{\XX}\;,
\end{equation}
and will be used extensively below. Notice  that when we estimate Grassmann integrals, property \eqref{eq: Grassmann_norm_property} is particularly useful because the integrand $f(\psi)$ is usually the product of different terms that we want to control separately.

\subsection{Superfields and Superfunctions}


Supervectors are collections of commuting and anticommuting variables:
$\Phi := (\phi,\psi)$,
\begin{equation}
\phi = (\phi_{1,1},\phi_{1,2}\dots,\phi_{|\s|,1},\phi_{|\s|,2}) \in \mathbb{R}^{2|\s|} \;,
 \quad
 \psi = (\psi^{+}_{1},\psi^{-}_{1}, \dots,\psi^{+}_{|\s|},\psi^{-}_{|\s|}) \;,
\end{equation}
where $\{\psi^{\pm}_{\sigma}\}_{\sigma \in \s}$ is a set of generators of the normed Grassmann algebra $\mathscr{G}$.
If $\Phi$ is a supervector, we shall write $\Phi \in \mathcal{S}$. 
It is customary to introduce the complex-variable notation
\begin{equation}
\phi^{\pm}_{\sigma} := \phi_{\sigma,1} \pm \ii \phi_{\sigma,2} \;,
\end{equation}
that is, $\phi_{\sigma,1} = \re \phi^{+}_{\sigma}$ and $\phi_{\sigma,2} = \im \phi^{+}_{\sigma}$, and to denote by $\Phi^{+} = (\phi^{+},\psi^{+})$ and $\Phi^{-} = (\phi^{-},\psi^{-})$ the row and column vectors respectively. Thus, 
\begin{equation}
\Phi^{+} \Phi^{-} = \sum_{\sigma \in \s}\phi^{+}_{\sigma}\phi^{-}_{\sigma} + \psi^{+}_{\sigma}\psi^{-}_{\sigma} \;,
\end{equation}
is an element of $\mathscr{G}$, while $\Phi^{-}\Phi^{+}$ is a $\s \times \s$ supermatrix. 

Functions of a supervector are $\mathscr{G}$-valued maps, $f: \Phi \mapsto f(\Phi) \in \mathscr{G}$. For example, any polynomial in $\Phi^{+} \Phi^{-}$ is a function of a supervector; if $f(\cdot)$ is an analytic function, also $ f(\Phi^{+} \Phi^{-})$, defined by its Taylor expansion, is another instance of such functions.
\medskip

Superfields are maps from subsets of $\Lambda$ to supervectors, that is $
\Phi:  \XX \ni x \mapsto \Phi_{x} \in \mathcal{S}$.
If $\Phi$ is a superfield we shall write 
$\Phi \in \mathcal{S}^{\XX}$. Given a superfield $\Phi$, we define $\Phi^{\pm} : x \mapsto \Phi_{x}^{\pm}$. The contraction
\begin{equation}
\Phi^{+}_{x}A_{x,y}\Phi^{-}_{y} = \sum_{\sigma,\sigma' \in \s}  \Big[
\phi_{x,\sigma}^{+} \big(A_{x,y}\big)_{\sigma,\sigma'}\phi^{-}_{y,\sigma'} + \psi_{x,\sigma}^{+}\big(A_{x,y}\big)_{\sigma,\sigma'}\psi^{-}_{y,\sigma'} \Big] \;,
\end{equation}
where $A_{x,y} \in \mathbb{C}^{\s \times \s}$, will be widely used in the rest of the script.

Superfunctions are maps $f: \mathcal{S}^{\XX} \to \mathscr{G}^{\XX}$. It is clear that a superfunction can be decomposed as in Eq.~\eqref{eq: decomposition_Grassmann_element}:
\begin{equation}
f(\Phi) = \sum_{\mathcal{X} \subset \XXX \times \{ \pm \}} f_{\scriptscriptstyle{\mathcal{X}}}(\phi) \,\psi^{\scriptscriptstyle{\mathcal{X}}}
\end{equation}
where the functions $f_{\scriptscriptstyle{\mathcal{X}}}: \mathbb{R}^{2 \XXX} \to \mathbb{C}$ will be called the \textit{coefficients} of $f$. With abuse of notation we will write $f((\phi,0)) :=  f_{\emptyset}(\phi)$ or $ f(\Phi)|_{\psi =0}:=f_{\emptyset}(\phi)$.

We also introduce some useful spaces of superfunctions. We say that a superfunction $f: \mathcal{S}^{\XX} \to \mathscr{G}^{\XX}$ belongs to the space $L^{p}(\mathcal{S}^{\XX},\mathscr{G}^{\XX})$ if all its coefficients belong to the Banach space $L^{p}(\mathbb{R}^{2\XXX})$ of measurable functions $g: \mathbb{R}^{2\XX} \to \mathbb{C}$ such that $|g|^{p}$ is Lebesgue integrable if $p \in [1,\infty)$ or that are essentially bounded if $p = \infty$. Similarly, we introduce the space of Schwartz superfunctions $\mathscr{S}(\mathcal{S}^{\XX},\mathscr{G}^{\XX})$, that is, those superfunctions whose coefficients are elements of $\mathscr{S}(\mathbb{R}^{2\XXX})$. It is natural to consider the following norm on $L^{p}(\mathcal{S}^{\XX},\mathscr{G}^{\XX})$ for $p \in [1,\infty) $:
\begin{equation}
\|f\|_{L^{p}(\mathcal{S}^{\XX},\mathscr{G}^{\XX})} := \left( \int \dd \phi_{\xx}  \, \big(\|f(\Phi) \|\big)^{p}\right)^{1/p} \;,
\end{equation}
having set set
\begin{equation}
\dd \phi_{x} := 
\bigtimes_{\sigma \in \s}\pi^{-1}\dd \phi_{x,\sigma, 1}\dd \phi_{x,\sigma,2} \;,
\qquad
\dd \phi_{\XX}:= \bigtimes_{x \in \xx} \dd \phi_{x} \;,
\end{equation}
and for $p = \infty$
\begin{equation}
\|f\|_{L^{\infty}(\mathcal{S}^{\XX},\mathscr{G}^{\XX})} : = \mathrm{ess}\, \sup\limits_{\negthickspace\negthickspace\negthickspace\negthickspace\negthickspace\negthickspace \phi \in \mathbb{R}^{2\XXX}} \, \| f(\Phi)\|\;.  
\end{equation}
%

Superintegration is denoted by:
\begin{equation}
\int \dd \Phi_{\XX} \; \cdot \; := \int \dd \phi_{\XX} \, \int \dd \psi_{\XX}\; \cdot \;.
\end{equation}
Superintegration will be used as a tool for representing certain maps between superfunctions, e.g., the super Fourier transform, see Definition \ref{def:superfourier_transform}. This will require additional superfields and thus other independent Grassmann variables. When this is the case, we always implicitly work in a larger Grassmann algebra, generated by all the Grassmann variables we consider, see also discussion below \eqref{def:berezin_integral}, and say that we have \textit{independent} superfields.

As a simple consequence of \eqref{eq:fermionic_norm_integral}, notice the bounds:
\begin{equation}
\label{L1_bound_norms}
\Big | \int \dd \Phi_{\xx} \, f(\Phi) \Big | 
\leq \int \dd \phi_{\xx} \,\Big|\int \dd \psi_{\xx} f(\Phi) \Big|
\leq \int \dd \phi_{\xx} \, \big\|f(\Phi) \big \| = \| f\|_{L^{1}(\mathcal{S}^{\XX},\mathscr{G}^{\XX})} \;,
\end{equation}
which will be repeatedly used in the rest of the script. 

\subsection{Three main propositions}

The first identity that we present is the so-called supersymmetric replica trick, which is a way to write the entries of a matrix via super Gaussian integrals. This trick was first introduced in the context of random Schr\"odinger operators by Efetov \cite{Efetov82}.
\begin{proposition}
\label{prop: SUSY_replica_trick}
Let $A \in \mathbb{C}^{\XXX \times \XXX}$ be a complex matrix with positive definite Hermitian part. The following representation holds true:
\begin{equation}
(A^{-1})_{x,y} =
\int \dd \Phi_{\XX} \, \ee^{-\sum_{x',y'}\Phi_{x'}^{+}A_{x',y'}\Phi_{y'}^{-}} \, \psi^{-}_{x} \psi^{+}_{y}
\end{equation}
where $(A^{-1})_{x,y} \in \mathbb{C}^{\s \times \s}$ and $\psi^{-}_{x} \psi^{+}_{y}$ is a $\s \times \s$ matrix of Grassmann variables.
\end{proposition}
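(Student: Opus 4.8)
The plan is to prove the supersymmetric replica trick (Proposition on the matrix inversion formula) by separately computing the bosonic and fermionic Gaussian integrals and showing they combine to give exactly $(A^{-1})_{x,y}$. The integrand factorizes as $\ee^{-\Phi^+ A \Phi^-} = \ee^{-\phi^+ A \phi^-}\,\ee^{-\psi^+ A \psi^-}$, so the superintegral $\int \dd\Phi_{\XX} = \int \dd\phi_{\XX}\int \dd\psi_{\XX}$ factors into a product of a bosonic (ordinary Gaussian) integral over $\mathbb{R}^{2\XXX}$ and a fermionic (Berezin) integral over the Grassmann variables. First I would isolate the role of the inserted monomial $\psi^-_x \psi^+_y$: since it is purely fermionic, the bosonic part of the integral is the pure Gaussian normalization, while the fermionic part picks up the two-point contraction. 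The key structural fact driving both pieces is the same — $A$ has positive-definite Hermitian part, which guarantees convergence of the bosonic integral and invertibility of $A$ so that $A^{-1}$ makes sense.

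For the \emph{bosonic} factor, the plan is to compute $\int \dd\phi_{\XX}\,\ee^{-\phi^+ A \phi^-}$ with the normalization $\dd\phi_x = \bigtimes_\sigma \pi^{-1}\dd\phi_{x,\sigma,1}\dd\phi_{x,\sigma,2}$. Writing the complex Gaussian over $\mathbb{C}^{\XXX}$ (identified with $\mathbb{R}^{2\XXX}$ via $\phi^\pm = \phi_{\cdot,1}\pm\ii\phi_{\cdot,2}$), the standard complex Gaussian formula gives $\int \dd\phi_{\XX}\,\ee^{-\phi^+ A \phi^-} = (\det A)^{-1}$, where the $\pi^{-1}$ per complex dimension is exactly the factor that makes the normalization come out to a clean $1/\det A$. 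The convergence here is where positive-definiteness of the Hermitian part $\tfrac12(A+A^*)$ enters: it ensures $\re(\phi^+ A \phi^-) \geq c|\phi|^2$, so the integral is absolutely convergent and the formula holds by analytic continuation from the Hermitian case.

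For the \emph{fermionic} factor, the plan is to expand $\ee^{-\psi^+ A \psi^-}$ and apply the Berezin integration rules from \eqref{def:berezin_integral} together with the inserted $\psi^-_x\psi^+_y$. The Berezin integral $\int \dd\psi_{\XX}$ annihilates everything except the top-degree monomial saturating all generators; standard Grassmann Gaussian combinatorics give $\int \dd\psi_{\XX}\,\ee^{-\psi^+ A \psi^-} = \det A$, and inserting the pair $\psi^-_x \psi^+_y$ replaces one factor so as to produce the cofactor, yielding $\int \dd\psi_{\XX}\,\ee^{-\psi^+ A \psi^-}\psi^-_x\psi^+_y = (\mathrm{cof}\,A)_{y,x}$, i.e.\ $\det A \cdot (A^{-1})_{x,y}$. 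Multiplying the fermionic factor $\det A \,(A^{-1})_{x,y}$ by the bosonic factor $(\det A)^{-1}$ cancels the determinants and leaves exactly $(A^{-1})_{x,y}$, as claimed; the matrix structure in $\s$ is carried through by keeping the $\sigma,\sigma'$ indices explicit throughout.

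The main obstacle I expect is \emph{bookkeeping the sign and index conventions} in the fermionic integral — in particular getting the Wick/cofactor combinatorics to land on the correct entry $(A^{-1})_{x,y}$ (rather than its transpose) and verifying that the ordering conventions in \eqref{def:berezin_integral} and in the monomial $\psi^-_x\psi^+_y$ produce the cofactor with the right sign. The bosonic Gaussian is routine once convergence is secured, so the real care is in the anticommuting algebra and in checking that the two determinant factors are genuinely reciprocal (same $\det A$, no stray sign), which is the whole point of the boson–fermion cancellation underlying supersymmetry.
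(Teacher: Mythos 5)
Your proposal is correct and follows essentially the same route as the paper: factor the superintegral into a bosonic Gaussian giving $(\det A)^{-1}$ (with positive-definiteness of the Hermitian part securing convergence) and a fermionic Gaussian with the insertion $\psi^{-}_{x}\psi^{+}_{y}$ giving $\det A\,(A^{-1})_{x,y}$, then multiply. The only cosmetic difference is that the paper imports both Gaussian formulas from the literature (Wegner for the fermionic two-point formula, H\"ormander for the complex bosonic Gaussian, after rewriting the quadratic form via a symmetric matrix $\widetilde{A}$ on $\mathbb{R}^{2\XXX}$), whereas you propose to rederive the fermionic cofactor identity by hand.
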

\begin{proof}
It is well-known, see for example Section 3 in \cite{Wegner16}, that for any invertible complex matrix $A \in \mathbb{C}^{\XXX \times \XXX}$
\begin{equation}
\label{eq: fermionic_gaussian}
(A^{-1})_{x,y} = (\det A)^{-1}
\int \dd \psi_{\xx} \, \ee^{- \sum_{x,y \in \xx} \psi^{+}_{x}A_{x,y} \psi^{-}_{y}} \psi^{-}_{x} \psi^{+}_{y} \;.
\end{equation}
We notice that if $\phi^{\pm} = \phi_{1} \pm \ii \phi_{2}$, $\phi_{i} \in \mathbb{R}^{\XXX}$, we can write
\begin{equation}
\label{manipulation}
\begin{split}
\sum_{x,y \in \xx} \phi^{+}_{x} A_{x,y} \phi^{-}_{y} & =
\sum_{x,y \in \xx} \sum_{\sigma,\sigma' \in \s} \phi^{+}_{x,\sigma} (A_{x,y})_{\sigma,\sigma'} \phi^{-}_{y,\sigma'} 
\\
& =  \begin{pmatrix}
\phi_{1}^{T} & \phi_{2}^{T}
\end{pmatrix}  \underbrace{\begin{pmatrix}
 \frac{A + A^{T}}{2} & - \ii \frac{A - A^{T}}{2} \\ \ii \frac{A - A^{T}}{2} & \frac{A + A^{T}}{2} 
\end{pmatrix}}_{=: \widetilde{A}}
\begin{pmatrix}
\phi_{1} \\ \phi_{2}
\end{pmatrix} \;,
\end{split}
\end{equation}
with $\widetilde{A} \in \mathbb{C}^{2\XXX \times 2\XXX}$ being symmetric.
If $A$ has positive Hermitian part then $\widetilde{A}$ is non-singular and $ \begin{pmatrix}
\phi_{1}^{T} & \phi_{2}^{T}
\end{pmatrix}  \re \widetilde{A}
\begin{pmatrix}\phi_{1} \\ \phi_{2}
\end{pmatrix} \geq 0$ for any $\phi_{i} \in \mathbb{R}^{\XXX}$, therefore
\begin{equation}
\label{eq: bosonic_gaussian}
\int \dd \phi_{\xx} \, \ee^{-\sum_{x,y \in \xx} \phi^{+}_{x} A_{x,y} \phi^{-}_{y}} = (\det A)^{-1} \;,
\end{equation}
see for instance Section 7 in \cite{HormanderI}.
Putting together \eqref{eq: fermionic_gaussian} and \eqref{eq: bosonic_gaussian} proves the claim.
\end{proof}


In the second proposition we state the super Plancherel identity. This identity is the keystone of the dual SUSY cluster expansion that we present in Section~\ref{sec: SUSY_cluster_expansion_weak}.
%
%
It is based on the theory of super Fourier transform, which we will briefly cover. We shall point out that Berezin had already considered the Fourier transform on Grassmann algebras in his pioneering work \cite{Berezin66}, see also \cite{BerezinMarinov,Berezin87}.
\begin{definition}[Super Fourier Transform]
\label{def:superfourier_transform}
Let $f \in L^{1}(\mathcal{S}^{\XX},\mathscr{G}^{\XX})$. The super Fourier transform of $f$, denoted by $\widehat{f}$, is the function $\widehat{f}: \mathcal{S}^{\XX} \ni \xi \mapsto \widehat{f}(\xi) \in \mathscr{G}^{\XX}$ defined by:
\begin{equation}
\label{eq: super_Fourier_Transform}
 \widehat{f}(\xi):= \int \dd \Phi_{\xx} \, \ee^{- \ii \sum_{x \in \xx} \big(\xi^{+}_{x} \Phi_{x}^{-}+\Phi_{x}^{+}\xi_{x}^{-}\big)} \, f(\Phi) \;,
\end{equation}
where $\xi = (\kappa,\eta) \in \mathcal{S}^{\XX}$ is another independent superfield and where
\begin{equation}
\xi^{+}_{x} \Phi_{x}^{-}+\Phi_{x}^{+}\xi_{x}^{-} = \sum_{\sigma \in \s} \xi^{+}_{x,\sigma} \Phi_{x,\sigma}^{-}+\Phi_{x,\sigma}^{+}\xi_{x,\sigma}^{-} \;.
\end{equation}
\end{definition}

Some important properties of the Fourier transform on $L^{1}(\mathbb{R}^{2\XXX})$ and $\mathscr{S}(\mathbb{R}^{2\XXX})$ carry over to $L^{1}(\mathcal{S}^{\XX},\mathscr{G}^{\XX})$ and $\mathscr{S}(\mathcal{S}^{\XX},\mathscr{G}^{\XX})$. In particular, we will see that the super Fourier transform is invertible in the latter space, the inversion being the super Fourier transform with flipped sign.  
\begin{proposition}[Super Plancherel identity]
\label{prop: super_Planchere_identity}
Let $f \in \mathscr{S}(\mathcal{S}^{\XX},\mathscr{G}^{\XX})$ and $g \in L^{1}(\mathcal{S}^{\XX},\mathscr{G}^{\XX})$, then
\begin{equation}
\int \dd \Phi_{\xx} \,f(\Phi) \, g(\Phi) = \int \dd \xi_{\xx}\, \widehat{f}(\xi) \, \widehat{g}(-\xi) \;.
\end{equation}
\end{proposition}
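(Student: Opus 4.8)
\section*{Proof proposal}

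The plan is to prove the identity by inserting the integral definitions of both super Fourier transforms on the right-hand side, collapsing everything to a single superintegral over three \emph{independent} superfields, and then performing the $\xi$-integration first so as to generate a super-delta kernel that reproduces the left-hand side. Before starting I would record that both sides are well defined: since the coefficients of $f\in\mathscr{S}(\mathcal{S}^{\XX},\mathscr{G}^{\XX})$ are Schwartz and the super Fourier transform acts on them through the ordinary (vector-valued) Fourier transform in the bosonic variables and a finite linear map in the fermionic ones, one has $\widehat{f}\in\mathscr{S}(\mathcal{S}^{\XX},\mathscr{G}^{\XX})$; while $g\in L^{1}$ gives, coefficientwise, bounded continuous $\widehat{g}$. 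Hence $\widehat{f}(\xi)\,\widehat{g}(-\xi)$ is integrable in $\xi$ and the left-hand integrand $f(\Phi)g(\Phi)$ is integrable by \eqref{L1_bound_norms}.

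First I would write $\widehat{g}(-\xi)=\int \dd \Phi'_{\xx}\,\ee^{\ii\sum_{x}(\xi^{+}_{x}(\Phi')_{x}^{-}+(\Phi')_{x}^{+}\xi^{-}_{x})}\,g(\Phi')$ with $\Phi'$ a fresh independent superfield, and similarly expand $\widehat{f}(\xi)$ over $\Phi$. Combining the two exponentials and exchanging the order of superintegration leads to
\begin{equation}
\int \dd \xi_{\xx}\, \widehat{f}(\xi)\,\widehat{g}(-\xi)
=\int \dd \Phi_{\xx}\int \dd \Phi'_{\xx}\, f(\Phi)\,g(\Phi')\int \dd \xi_{\xx}\,\ee^{\,\ii\sum_{x\in\xx}\big(\xi^{+}_{x}(\Phi'-\Phi)^{-}_{x}+(\Phi'-\Phi)^{+}_{x}\xi^{-}_{x}\big)}\;.
\end{equation}
The innermost $\xi$-integral factorises into a bosonic integral over $\kappa$ and a fermionic one over $\eta$. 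For the bosonic factor I would use $\kappa^{+}_{x}\varphi^{-}_{x}+\varphi^{+}_{x}\kappa^{-}_{x}=2\sum_{\sigma}(\kappa_{x,\sigma,1}\varphi_{x,\sigma,1}+\kappa_{x,\sigma,2}\varphi_{x,\sigma,2})$ (applied to $\varphi=\phi'-\phi$), reducing it to elementary oscillatory integrals which, with the $\pi^{-1}$ normalisation built into $\dd\phi_{x}$, yield a product of Dirac deltas enforcing $\phi'=\phi$. For the fermionic factor, expanding the exponential (which terminates) and applying the Berezin rules produces the fermionic reproducing kernel that, against the remaining $\psi'$-integration, enforces $\psi'=\psi$. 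The sifting property of both kernels then collapses the $\Phi'$-integration, setting $\Phi'=\Phi$ and leaving exactly $\int \dd\Phi_{\xx}\,f(\Phi)g(\Phi)$, the left-hand side.

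The step I expect to be the main obstacle is the rigorous justification of interchanging the oscillatory bosonic $\xi$-integration with the $\Phi,\Phi'$ integrations, together with the verification that the combined $\xi$-integral is precisely the super-delta kernel with the correct constants and signs. I would handle the analytic part by inserting a Gaussian regulator such as $\ee^{-\varepsilon\sum_{x}|\kappa_{x}|^{2}}$, applying Fubini–Tonelli for each $\varepsilon>0$, and passing to the limit $\varepsilon\to0^{+}$ via dominated convergence, using the Schwartz decay of $\widehat{f}$ and the $L^{1}$-control of $g$; the fermionic interchanges are purely algebraic sign bookkeeping, legitimate because $\psi,\psi',\eta$ are distinct generators of one ambient Grassmann algebra. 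I would remark that, once the super Fourier inversion theorem is available (the inversion being the super Fourier transform with flipped sign, as announced above), the argument shortens considerably: substituting the inversion formula for $f$ into $\int\dd\Phi_{\xx}\,f(\Phi)g(\Phi)$ and exchanging integrals identifies the inner $\Phi$-integral with $\widehat{g}(-\xi)$ directly, so that the delta-kernel computation is subsumed into inversion itself.
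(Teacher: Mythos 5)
Your proposal is correct, but your primary route differs from the paper's in how it organizes the analysis. The paper first proves the inversion formula $f(\Phi)=\int \dd\xi_{\xx}\,\ee^{\ii\sum_{x}(\xi^{+}_{x}\Phi^{-}_{x}+\Phi^{+}_{x}\xi^{-}_{x})}\widehat{f}(\xi)$ on $\mathscr{S}(\mathcal{S}^{\XX},\mathscr{G}^{\XX})$ (by combining the classical Fourier inversion on Schwartz coefficients with the Grassmann inversion of Lemma~\ref{lemma: Grassmann_inversion}), substitutes it into $\int\dd\Phi_{\xx}\,f(\Phi)g(\Phi)$, and swaps the two superintegrals by a single honest Fubini step, legitimate because $\widehat{f}\in L^{1}$, $g\in L^{1}$ and the exponential kernel is bounded in norm. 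You instead expand both transforms and integrate over $\xi$ first, which forces you to confront the non--absolutely-convergent oscillatory $\kappa$-integral and to introduce a Gaussian regulator plus an approximate-identity limit; this works (your normalization bookkeeping with the $\pi^{-1}$ factors is right, and convolving the Schwartz $f$ with the approximate identity before pairing with the merely-$L^{1}$ $g$ is the correct way to pass to the limit), but it is essentially a re-derivation of the inversion theorem inline, entangled with $g$. The structural advantage of the paper's ordering is precisely that the delicate oscillatory integral is quarantined inside the already-established inversion statement, so no regularization is needed at the Plancherel stage --- which is the shortcut you correctly identify in your closing remark. Either route is acceptable; the paper's is shorter and cleaner given that the inversion theorem is stated just above the proposition.
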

The proof of this statement is trivial once the inversion theorem for the Grassmann Fourier transform is established.
\begin{lemma}\label{lemma: Grassmann_inversion}
Let $\{\psi^{\varepsilon}_{x,\sigma} \}^{\varepsilon = \pm}_{(x,\sigma) \in \XXX}$ and $\{\eta^{\varepsilon}_{x,\sigma} \}^{\varepsilon = \pm}_{(x,\sigma) \in \XXX}$ be independent Grassmann variables. For any  $f = f(\psi)$ let
\begin{equation}
\label{eq: Grassmann_fourier}
\widehat{f}(\eta) := \int \dd \psi_{\xx}\,\ee^{-\ii\sum_{x \in \xx} \big(\eta^{+}_{x}\psi^{-}_{x} +  \psi^{+}_{x} \eta^{-}_{x} \big)} \,f(\psi) \;,
\end{equation}
be the Grassmann Fourier transform. Then,
\begin{equation}
\label{eq: Grassmann_fourier_inversion}
f(\psi) = \int \dd \psi_{\xx}\,\ee^{\ii\sum_{x \in \xx} \big( \psi^{+}_{x} \eta^{-}_{x} + \eta^{+}_{x}\psi^{-}_{x} \big)} \, \widehat{f}(\eta) \;.
\end{equation}
\end{lemma}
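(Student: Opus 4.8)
The plan is to substitute the definition of $\widehat{f}$ into the right-hand side of \eqref{eq: Grassmann_fourier_inversion} and to show that the two Grassmann Fourier kernels combine into a reproducing (delta-function) kernel. Since every algebra here is finite-dimensional, all sums are finite and exchanging the order of Berezin integration is unproblematic. Concretely, I would introduce an independent copy $\psi'$ of the variables appearing in $\widehat{f}$, so that the right-hand side of \eqref{eq: Grassmann_fourier_inversion} (where the Berezin integral is taken over the dual variables $\eta$) reads
\[
\int \dd \eta_{\xx} \int \dd \psi'_{\xx} \, \ee^{\ii \sum_{x}(\psi^{+}_{x} \eta^{-}_{x} + \eta^{+}_{x} \psi^{-}_{x})}\,\ee^{-\ii\sum_{x}(\eta^{+}_{x} \psi'^{-}_{x} + \psi'^{+}_{x} \eta^{-}_{x})}\, f(\psi') \;.
\]

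The key structural observation is that each summand in either exponent, such as $\eta^{+}_{x}\psi^{-}_{x}$, is an \emph{even} element of the Grassmann algebra and hence central; therefore the two exponentials commute and may be merged, the combined exponent being $\ii\sum_{x}[(\psi^{+}_{x}-\psi'^{+}_{x})\eta^{-}_{x} + \eta^{+}_{x}(\psi^{-}_{x}-\psi'^{-}_{x})]$. For the same reason the merged exponential factorizes as a product over the single modes $(x,\sigma) \in \XXX$, and the measure $\dd\eta_{\xx}$ factorizes correspondingly, so that the inner integral over $\eta$ reduces to a product of identical single-mode integrals. Each single-mode integral is computed explicitly from the integration rules: since the squares of the (even) summands vanish, the exponential truncates, and keeping only the top Berezin component yields the single-mode value $-(\psi^{+}_{x,\sigma}-\psi'^{+}_{x,\sigma})(\psi^{-}_{x,\sigma}-\psi'^{-}_{x,\sigma})$.

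This identifies the inner $\eta$-integral with the Grassmann delta kernel $\prod_{(x,\sigma)\in\XXX}\big[-(\psi^{+}_{x,\sigma}-\psi'^{+}_{x,\sigma})(\psi^{-}_{x,\sigma}-\psi'^{-}_{x,\sigma})\big]$; integrating it against $f(\psi')$ over $\psi'$ and using the elementary reproducing identity, which I would again verify mode by mode on the four monomials $1,\psi'^{+},\psi'^{-},\psi'^{+}\psi'^{-}$, returns $f(\psi)$, as claimed. An equivalent and entirely elementary alternative, bypassing the delta-function language, is to invoke linearity and check the inversion directly on the basis monomials $\psi^{\scriptscriptstyle{\mathcal{X}}}$ by computing $\widehat{\psi^{\scriptscriptstyle{\mathcal{X}}}}$ and then its inverse transform. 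In either approach the main obstacle is not conceptual but rather the bookkeeping of signs: one must track the signs produced by anticommuting the Berezin differentials and by reordering the product measure across modes, and confirm that these signs match on both sides so that they cancel. The prefactor $\ii$ is arranged precisely so that the two kernels pair correctly and the overall normalization comes out to be $1$.
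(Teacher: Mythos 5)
Your proposal is correct and follows essentially the same route as the paper: substitute the definition of $\widehat{f}$, swap the Berezin integrals, recognize the $\eta$-integral as a Grassmann delta kernel $\prod_{(x,\sigma)}(\psi^{-}_{x,\sigma}-\psi'^{-}_{x,\sigma})(\psi^{+}_{x,\sigma}-\psi'^{+}_{x,\sigma})$ obtained by nilpotency-truncated expansion of the exponential, and verify the reproducing property on basis monomials while tracking permutation signs. The paper carries out the sign bookkeeping on a whole monomial $\psi^{\scriptscriptstyle{\mathcal{X}}}$ at once rather than mode by mode, but this is only a presentational difference.
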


\begin{proof}
Let $\{\psi^{\varepsilon}_{x,\sigma} \}^{\varepsilon = \pm}_{(x,\sigma) \in \XXX}$, $\{\psi'^{\varepsilon}_{x,\sigma} \}^{\varepsilon = \pm}_{(x,\sigma) \in \XXX}$ and
 $\{\eta^{\varepsilon}_{x,\sigma} \}^{\varepsilon = \pm}_{(x,\sigma) \in \XXX}$
 be independent Grassmann variables.   Define
\begin{equation}
\delta^{\xx}(\psi):= \int \dd \eta_{\xx} \, \ee^{-\ii\sum_{x \in \xx} \big(\eta^{+}_{x}\psi^{-}_{x} +  \psi^{+}_{x} \eta^{-}_{x} \big)} \;.
\end{equation}
We are going to show that this function plays the role of the Dirac delta function in the anticommuting setting, that is
\begin{equation}
\label{eq: dirac_delta}
\int \dd \psi_{\xx} \, \delta^{\xx}(\psi - \psi') \, f(\psi) = f(\psi') \;.
\end{equation}
By linearity, it suffices to prove \eqref{eq: dirac_delta} in the case $f(\psi) = \psi^{{\scriptscriptstyle{\mathcal{X}}}}$, for any ${\scriptstyle{\mathcal{X}}} \subset (\XXX \times\{\pm\})$.
First of all, we notice that by nilpotency
\begin{equation}
\ee^{-\ii\sum_{x \in \xx} \big(\eta^{+}_{x}\psi^{-}_{x} +  \psi^{+}_{x} \eta^{-}_{x} \big)} = \prod_{(x,\sigma,\varepsilon) \in \XXX \times \{ \pm\}} \big(1 - \ii\varepsilon \eta^{\varepsilon}_{x,\sigma}\psi^{-\varepsilon}_{x,\sigma}\big) \;.
\end{equation}
The term which contains all the $\eta$'s is by inspection $\prod_{(x,\sigma)} \eta^{-}_{x,\sigma} \eta^{+}_{x,\sigma}\psi^{-}_{x,\sigma} \psi^{+}_{x,\sigma}$, thus
\begin{equation}
 \delta^{\xx}(\psi)= \int \dd \eta_{\xx} \, \ee^{-\ii\sum_{x \in \xx} \big(\eta^{+}_{x}\psi^{-}_{x} +  \psi^{+}_{x} \eta^{-}_{x} \big)} = \prod_{(x,\sigma) \in \XXX} \psi^{-}_{x,\sigma}\psi^{+}_{x,\sigma} \;,
\end{equation}
where we used that $\eta^{-}_{x,\sigma} \eta^{+}_{x,\sigma}$ and $\psi^{-}_{x,\sigma} \psi^{+}_{x,\sigma}$ are even elements and thus commute.
As a consequence, for any set ${\scriptstyle{\mathcal{X}}} \subset (\XXX \times\{\pm\})$
we can write
\begin{equation}
\delta^{\XX}(\psi) = \mathrm{sign}({\scriptstyle{\mathcal{X}}})\psi^{{\scriptscriptstyle{\mathcal{X}}}'}\psi^{{\scriptscriptstyle{\mathcal{X}}} \;,}
\end{equation}
where ${\scriptstyle{\mathcal{X}}}' := (\XXX \times\{\pm\}) \setminus {\scriptstyle{\mathcal{X}}}$ and where $\mathrm{sign}({\scriptstyle{\mathcal{X}}})$ is a permutation sign which we leave unspecified. Again by nilpotency, we notice that the only term in $\delta^{\xx}(\psi - \psi')$ that gives non-vanishing contribution to \eqref{eq: dirac_delta} with $f(\psi) = \psi^{{\scriptscriptstyle{\mathcal{X}}}}$ is $\mathrm{sign}({\scriptstyle{\mathcal{X}}})\psi^{{\scriptscriptstyle{\mathcal{X}}}'}(-\psi')^{{\scriptscriptstyle{\mathcal{X}}}}$, so that
\begin{equation}
\begin{split}
\int \dd \psi_{\xx} \,\delta^{\xx}(\psi - \psi') \psi^{{\scriptscriptstyle{\mathcal{X}}} } & = \int \dd \psi_{\xx} \,\mathrm{sign}({\scriptstyle{\mathcal{X}}})\psi^{{\scriptscriptstyle{\mathcal{X}}}'}(-\psi')^{{\scriptscriptstyle{\mathcal{X}}}}
\psi^{{\scriptscriptstyle{\mathcal{X}}} } 
\\
& = \int \dd \psi _{\xx}\, \Big( \prod_{(x,\sigma) \in \XXX} \psi^{-}_{x,\sigma}\psi^{+}_{x,\sigma} \Big) \psi'^{{\scriptscriptstyle{\mathcal{X}}}} = \psi'^{{\scriptscriptstyle{\mathcal{X}}}}  \;,
\end{split}
\end{equation}
where in the second equality we used that $(-\psi')^{{\scriptscriptstyle{\mathcal{X}}}} \psi^{{\scriptscriptstyle{\mathcal{X}}}} = \psi^{{\scriptscriptstyle{\mathcal{X}}}}(\psi')^{{\scriptscriptstyle{\mathcal{X}}}}$ while in the last one the definition of $\int \dd \psi_{\xx}$. The inversion of the Grassmann Fourier transform is then established:
\begin{equation}
\begin{split}
&\int \dd \eta_{\xx} \, \ee^{\ii \sum_{x \in \xx} \big(\psi^{+}_{x} \eta^{-}_{x} + \eta^{+}_{x}\psi^{-}_{x} \big)} \, \widehat{f}(\eta)
\\
 = & \int \dd \psi'_{\xx} \,\int \dd \eta_{\xx} \, \ee^{-\ii\sum_{x \in \xx} \big(\eta^{+}_{x}(\psi'^{-}_{x}-\psi^{-}_{x}) +  (\psi'^{+}_{x} -\psi^{+}_{x}) \eta^{-}_{x} \big)} f(\psi')
\\
 = &\int \dd \psi'_{\xx} \delta^{\xx}(\psi' - \psi) f(\psi') = f(\psi) \;.
\end{split}
\end{equation}
\end{proof}
\begin{proof}[Proof of Proposition \ref{prop: super_Planchere_identity}]
By swapping the bosonic integration with Grassmann integration and using the invertibility of the Fourier transform in $\mathscr{S}(\mathbb{R}^{2\XXX},\mathbb{C})$ we obtain that
\begin{equation}
f(\Phi) = \int \dd \xi_{\xx} \, \ee^{ \ii \sum_{x \in \xx} \big(\xi^{+}_{x} \Phi_{x}^{-}+\Phi_{x}^{+}\xi_{x}^{-}\big)} \widehat{f}(\xi) \;,
\end{equation}
for any $f \in \mathscr{S}(\mathcal{S}^{\XX},\mathscr{G}^{\XX})$, thus establishing the invertibility of the super Fourier transform on this space. We have,
\begin{equation}
\begin{split}
\int \dd \Phi_{\xx} \,f(\Phi) \, g(\Phi)  & = \int \dd \Phi_{\xx}  \int \dd \xi_{\xx} \, \ee^{ \ii \sum_{x \in \xx} \big(\xi^{+}_{x} \Phi_{x}^{-}+\Phi_{x}^{+}\xi_{x}^{-}\big)} \widehat{f}(\xi)\, g(\Phi) 
\\
& = \int \dd \xi_{\xx}\, \widehat{f}(\xi) \, \widehat{g}(-\xi) \;,
\end{split}
\end{equation}
where we swapped the superintegrals by Fubini-Tonelli theorem, since \linebreak $\| \widehat{f} \|_{L^{1}(\mathcal{S}^{\XX},\mathscr{G}^{\XX})}$, $\| g\|_{L^{1}(\mathcal{S}^{\XX},\mathscr{G}^{\XX})}$ and $ \big \|\ee^{ \ii \sum_{x \in \xx} \big(\xi^{+}_{x} \Phi_{x}^{-}+\Phi_{x}^{+}\xi_{x}^{-}\big)} \big \|_{L^{\infty}(\mathcal{S}^{\XX},\mathscr{G}^{\XX})}$ are finite.
\end{proof}
Before moving to the last proposition of this section, we present a lemma that is the extension of a simple and well-known inequality in the theory of Fourier transform.
\begin{lemma}
\label{lemma: bound_fourier_transform}
Let $f \in L^{1}(\mathcal{S}^{\XX},\mathscr{G}^{\XX})$. The following bound holds true:
\begin{equation}
\|\widehat{f}\, \|_{L^{\infty}(\mathcal{S}^{\XX},\mathscr{G}^{\XX})} \leq \|f \|_{ L^{1}(\mathcal{S}^{\XX},\mathscr{G}^{\XX})} \;.
\end{equation}
\end{lemma}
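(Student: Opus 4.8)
The plan is to transcribe the classical estimate $\|\widehat{f}\|_{\infty}\leq\|f\|_{1}$ into the supersymmetric setting, the one twist being that the target norm is now the Grassmann $\ell^{1}$-norm in the dual variables. Writing $\xi=(\kappa,\eta)$ and $\Phi=(\phi,\psi)$, the first step is to factorise the Fourier kernel, exactly as the bilinear contraction splits into a bosonic and a fermionic piece:
$$\ee^{-\ii\sum_{x\in\xx}(\xi^{+}_{x}\Phi^{-}_{x}+\Phi^{+}_{x}\xi^{-}_{x})}=\ee^{-\ii\sum_{x\in\xx}(\kappa^{+}_{x}\phi^{-}_{x}+\phi^{+}_{x}\kappa^{-}_{x})}\,\ee^{-\ii\sum_{x\in\xx}(\eta^{+}_{x}\psi^{-}_{x}+\psi^{+}_{x}\eta^{-}_{x})}\,.$$
The bosonic factor is a genuine complex number of modulus one (its exponent is real, being $-\ii$ times $2\sum_{x,\sigma}(\kappa_{x,\sigma,1}\phi_{x,\sigma,1}+\kappa_{x,\sigma,2}\phi_{x,\sigma,2})$), whereas the fermionic factor is precisely the Grassmann kernel already met in Lemma \ref{lemma: Grassmann_inversion}.

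Next I would perform the bosonic integration first. Since $\|\cdot\|$ is the $\ell^{1}$-norm on the coefficients, the triangle inequality passes through the $\phi$-integral exactly as in \eqref{L1_bound_norms}, and because the bosonic phase has modulus one it can be discarded after taking the norm. This gives, pointwise in $\kappa$,
$$\big\|\widehat{f}(\xi)\big\|\leq\int\dd\phi_{\xx}\,\Big\|\int\dd\psi_{\xx}\,\ee^{-\ii\sum_{x\in\xx}(\eta^{+}_{x}\psi^{-}_{x}+\psi^{+}_{x}\eta^{-}_{x})}\,f(\Phi)\Big\|\,,$$
where the inner norm is now taken in the dual Grassmann variables $\eta$. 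As the right-hand side no longer depends on $\kappa$, the essential supremum defining $\|\widehat{f}\|_{L^{\infty}}$ costs nothing, and it remains only to establish the pointwise-in-$\phi$ bound $\|\int\dd\psi_{\xx}\,\ee^{-\ii(\cdots)}f(\Phi)\|_{\eta}\leq\|f(\Phi)\|_{\psi}$; integrating the latter over $\phi$ reproduces $\|f\|_{L^{1}(\mathcal{S}^{\XX},\mathscr{G}^{\XX})}$ by definition.

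The heart of the matter is therefore the purely Grassmann statement that the transform \eqref{eq: Grassmann_fourier} does not increase the $\ell^{1}$-norm, and the clean way to see this is to compute its action on a basis monomial $\psi^{\scriptscriptstyle{\mathcal{X}}}$. Using the product representation of the kernel established in the proof of Lemma \ref{lemma: Grassmann_inversion}, the Berezin integral $\int\dd\psi_{\xx}$ selects the top $\psi$-degree, and by nilpotency there is a \emph{unique} way to complete $\psi^{\scriptscriptstyle{\mathcal{X}}}$ to the full $\psi$-monomial out of the kernel factors; consequently $\psi^{\scriptscriptstyle{\mathcal{X}}}$ is mapped to a single $\eta$-monomial carrying a scalar of modulus one (a product of factors $-\ii\varepsilon$ times a permutation sign). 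Thus distinct basis monomials go to distinct $\eta$-monomials with unit-modulus coefficients, so for $f(\Phi)=\sum_{\scriptscriptstyle{\mathcal{X}}}f_{\scriptscriptstyle{\mathcal{X}}}(\phi)\,\psi^{\scriptscriptstyle{\mathcal{X}}}$ the $\ell^{1}$-norm is in fact exactly preserved, which is more than the required pointwise bound.

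The one delicate point — and the step I expect to be the genuine obstacle — is precisely this last isometry. A naive estimate via submultiplicativity \eqref{eq: Grassmann_norm_property} applied to the fermionic kernel is hopelessly lossy: each of its $2|\XXX|$ linear factors has norm $2$, yielding the useless bound $2^{2|\XXX|}$. The sharpening to modulus one comes solely from the Berezin integral annihilating every term except the top-degree one, so the monomial computation, rather than submultiplicativity, is what does the work. Once it is in place, combining the two displays above and taking the essential supremum over $\kappa$ yields $\|\widehat{f}\|_{L^{\infty}(\mathcal{S}^{\XX},\mathscr{G}^{\XX})}\leq\|f\|_{L^{1}(\mathcal{S}^{\XX},\mathscr{G}^{\XX})}$.
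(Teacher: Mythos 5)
Your proposal is correct and follows essentially the same route as the paper: bound the bosonic integral using that the $\phi$-phase has modulus one, then observe that the Grassmann Fourier transform merely permutes the coefficients $f_{\scriptscriptstyle{\mathcal{X}}}$ up to unit-modulus factors (the paper phrases this via the bijection ${\scriptstyle{\mathcal{X}}}\leftrightarrow{\scriptstyle{\overline{\mathcal{X}}'}}$), so the Grassmann $\ell^{1}$-norm is exactly preserved. Your remark that submultiplicativity alone would be hopelessly lossy correctly identifies why the monomial computation is the essential step.
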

\begin{proof}
Let $\Phi = (\phi,\psi)\in \mathcal{S}^{\XX}$ and $\xi = (\kappa,\eta) \in \mathcal{S}^{\XX}$ be independent superfields. Clearly we have
\begin{equation}
\label{eq:lemma_superfourier_transform}
\|\widehat{f}(\xi) \| \leq \int \dd \phi_{\xx} \,\Big \| \int \dd \psi_{\xx} \ee^{- \ii \sum_{x \in \xx} (\eta^{+}\psi^{-}_{x} + \psi^{+}_{x}\eta_{x}^{-})} \,f(\Phi) \Big\|\;.
\end{equation}
We shall prove that the Grassmann norm on the r.h.s.~of \eqref{eq:lemma_superfourier_transform} is equal to $\| f(\Phi)\|$ by showing that the Grassmann Fourier transform rearranges the coefficients up to a phase factor. 
We write the exponential term as 
\begin{equation}
\prod_{(x,\sigma,\varepsilon) \in \XXX \times \{ \pm\}} \big(1 - \ii\varepsilon \eta^{\varepsilon}_{x,\sigma}\psi^{-\varepsilon}_{x,\sigma}\big) = \sum_{{\scriptscriptstyle{\mathcal{X}}} \subset \XXX \times \{\pm \}} \mathrm{phase}({\scriptstyle{\mathcal{X}}}) \eta^{{\scriptscriptstyle{\mathcal{X}}}} \, \psi^{{\scriptscriptstyle{\overline{\mathcal{X}}}}}
\;,
\end{equation}
where $\mathrm{phase}({\scriptstyle{\mathcal{X}}}) \in U(1)$, while $\scriptstyle{\overline{\mathcal{X}}}$ is the subset of $\XXX \times \{\pm \}$ obtained from $\scriptstyle{\mathcal{X}}$ by flipping all the $\varepsilon$'s. By noticing that $\delta^{\xx}(\psi) = \mathrm{sign}({\scriptstyle{\overline{\mathcal{X}}'}})\psi^{{\scriptscriptstyle{\overline{\mathcal{X}}}}}\psi^{{\scriptscriptstyle{\overline{\mathcal{X}}'}}}$, where ${\scriptstyle{\overline{\mathcal{X}}'}} = (\XXX \times \{ \pm\}) \setminus {\scriptstyle{\overline{\mathcal{X}}}}$ (see proof of Lemma \ref{lemma: Grassmann_inversion}) we obtain
\begin{equation}
\int \dd \psi_{\xx} \ee^{- \ii \sum_{x \in \xx} (\eta^{+}\psi^{-}_{x} + \psi^{+}_{x}\eta_{x}^{-})} \,f(\Phi) = \sum_{{\scriptscriptstyle{\mathcal{X}}} \subset \XXX \times \{\pm \}} \mathrm{phase}({\scriptstyle{\mathcal{X}}}) \,\mathrm{sign}({\scriptstyle{\overline{\mathcal{X}}'}}) f_{{\scriptscriptstyle{\overline{\mathcal{X}}'}}}(\phi)
\eta^{{\scriptscriptstyle{\overline{\mathcal{X}}'}}} \;,
\end{equation}
and the claim follows since there is a one to one correspondence between ${\scriptstyle{\mathcal{X}}}$ and ${\scriptstyle{\overline{\mathcal{X}}'}}$.
\end{proof}

Last but not least, supersymmetry is a crucial property in the analysis of superintegrals. The last proposition we present is an instance of the well-known localization formula for supersymmetric functions \cite{ParisiSourlas}. 
%

Let us first define what supersymmetry is.
\begin{definition}[SUSY]
Introduce the differential operator:
\begin{equation}
Q_{\Phi} = \sum_{(x,\sigma)\in \XXX}\sum_{\varepsilon} \Big[ \psi^{\varepsilon}_{x,\sigma} \frac{\partial}{\partial \phi^{\varepsilon}_{x,\sigma}} - \varepsilon \phi_{x,\sigma}^{\varepsilon} \frac{\partial}{\partial \psi^{\varepsilon}_{x,\sigma}} \Big] \;.
\end{equation}
We say that $f$ is supersymmetric if it is $Q$-closed, that is, it is differentiable and  satisfies
\begin{equation}
Q_{\Phi} f(\Phi) = 0 \;. 
\end{equation}
\end{definition}
For the sake of generality, we shall state the SUSY localization formula under weak decay assumptions. 
\begin{proposition}[SUSY localization formula]
\label{prop: SUSY_localization_theorem}
Let $f$ be supersymmetric and decaying at infinity. If $f$ and $\Phi \mapsto (1+|\phi^{\varepsilon}_{x,\sigma}|)^{-1}\,\psi^{\varepsilon}_{x,\sigma} \,(\partial/\partial \phi^{\varepsilon}_{x,\sigma}) f(\Phi)$ are in $L^{1}(\mathcal{S}^{\XX},\mathscr{G}^{\XX})$ for any $(x,\sigma,\varepsilon) \in \XXX \times \{ \pm\}$ then
\begin{equation}
\int \dd \Phi_{\xx} \, f(\Phi) = f(0) \;.
\end{equation}
\end{proposition}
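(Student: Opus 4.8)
The plan is to prove the identity by the Parisi--Sourlas interpolation: I deform $f$ with a Gaussian regulator that concentrates the bosonic integration at the origin while keeping the integrand supersymmetric, and I track how the value of the superintegral is transported between the two endpoints. The two algebraic inputs I would isolate first are: (a) a Stokes-type lemma, namely $\int \dd\Phi_{\xx}\, Q_{\Phi}h = 0$ for every sufficiently decaying $h \in L^{1}(\mathcal{S}^{\XX},\mathscr{G}^{\XX})$; and (b) that the action $S := \Phi^{+}\Phi^{-} = \sum_{(x,\sigma)\in\XXX}\big(\phi^{+}_{x,\sigma}\phi^{-}_{x,\sigma}+\psi^{+}_{x,\sigma}\psi^{-}_{x,\sigma}\big)$ is both $Q$-exact and $Q$-closed, that is $S = Q_{\Phi}\Xi$ with $\Xi := \sum_{(x,\sigma)\in\XXX}\phi^{+}_{x,\sigma}\psi^{-}_{x,\sigma}$ and $Q_{\Phi}S = 0$. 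I would verify (a) term by term in the definition of $Q_{\Phi}$: the pieces $\psi^{\varepsilon}_{x,\sigma}(\partial/\partial\phi^{\varepsilon}_{x,\sigma})h$ integrate to zero by the fundamental theorem of calculus in the bosonic variables, the decay hypotheses killing the boundary contributions, whereas the pieces $\varepsilon\phi^{\varepsilon}_{x,\sigma}(\partial/\partial\psi^{\varepsilon}_{x,\sigma})h$ integrate to zero because $(\partial/\partial\psi^{\varepsilon}_{x,\sigma})h$ no longer contains $\psi^{\varepsilon}_{x,\sigma}$ and the Berezin integral over $\dd\psi^{\varepsilon}_{x,\sigma}$ of a variable-free monomial vanishes. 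Fact (b) is a short Grassmann computation.

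With these in hand I introduce $I(t) := \int\dd\Phi_{\xx}\, f(\Phi)\,\ee^{-tS}$ for $t\ge 0$, so that $I(0) = \int\dd\Phi_{\xx}\,f(\Phi)$ is the quantity of interest. Since $Q_{\Phi}$ is an odd derivation and $S$ is $Q$-closed, the function $\ee^{-tS}$ is supersymmetric; combining this with $Q_{\Phi}f = 0$, the graded Leibniz rule gives the key identity $f\,S\,\ee^{-tS} = Q_{\Phi}\big(f\,\Xi\,\ee^{-tS}\big)$. Differentiating under the integral sign and invoking the Stokes lemma (a) then yields
\[
I'(t) = -\int\dd\Phi_{\xx}\, f\,S\,\ee^{-tS} = -\int\dd\Phi_{\xx}\, Q_{\Phi}\big(f\,\Xi\,\ee^{-tS}\big) = 0 \;,
\]
so that $I$ is constant on $(0,\infty)$.

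It remains to compute the two endpoint values. As $t\to 0^{+}$ one has $I(t)\to I(0)=\int\dd\Phi_{\xx}f$ by dominated convergence: using sub-multiplicativity of the Grassmann norm one bounds $\|f(\Phi)\ee^{-tS}\| \le \|f(\Phi)\|\,(1+t)^{|\XXX|}\ee^{-t\sum|\phi_{x,\sigma}|^{2}} \le 2^{|\XXX|}\|f(\Phi)\|$ for $t\le 1$, which is an $L^{1}$ dominating function, and the bosonic integral converges by \eqref{L1_bound_norms}. As $t\to\infty$ I would expand $\ee^{-tS}=\prod_{(x,\sigma)}\ee^{-t\phi^{+}_{x,\sigma}\phi^{-}_{x,\sigma}}\big(1-t\,\psi^{+}_{x,\sigma}\psi^{-}_{x,\sigma}\big)$ and perform the Berezin integral: the leading term in $t$ is the one in which every $(x,\sigma)$ supplies the factor $-t\,\psi^{+}_{x,\sigma}\psi^{-}_{x,\sigma}$ from the regulator while $f$ contributes its purely bosonic coefficient $f(\Phi)|_{\psi=0}$. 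Since $\int\dd\phi_{\xx}\,\prod_{(x,\sigma)\in\XXX}\ee^{-t\phi^{+}_{x,\sigma}\phi^{-}_{x,\sigma}} = t^{-|\XXX|}$ with the normalisation of $\dd\phi_{\xx}$, the factors $(-t)^{|\XXX|}$ and $t^{-|\XXX|}$ cancel and a Laplace/concentration argument gives $\lim_{t\to\infty}I(t) = f(\Phi)|_{\Phi=0} = f(0)$, every remaining term carrying a strictly negative net power of $t$. Chaining $\int\dd\Phi_{\xx}f = I(0) = I(t) = \lim_{t\to\infty}I(t) = f(0)$ closes the argument.

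The main obstacle is not the algebra but the rigorous justification of the Stokes lemma (a) and of the differentiation under the integral sign defining $I'(t)$ under the weak hypotheses stated. Concretely, when $Q_{\Phi}$ differentiates $f$ in a bosonic direction inside $Q_{\Phi}\big(f\,\Xi\,\ee^{-tS}\big)$ one meets exactly the combination $\psi^{\varepsilon}_{x,\sigma}(\partial/\partial\phi^{\varepsilon}_{x,\sigma})f$, multiplied by the factor $\phi^{+}_{x,\sigma}$ supplied by $\Xi$; the assumption that $\Phi\mapsto(1+|\phi^{\varepsilon}_{x,\sigma}|)^{-1}\psi^{\varepsilon}_{x,\sigma}(\partial/\partial\phi^{\varepsilon}_{x,\sigma})f$ belongs to $L^{1}(\mathcal{S}^{\XX},\mathscr{G}^{\XX})$ is precisely what controls this contribution, since $|\phi^{+}_{x,\sigma}|(1+|\phi^{\varepsilon}_{x,\sigma}|)^{-1}$ is bounded. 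I would therefore establish (a) with an explicit cutoff $\phi\mapsto\chi(\phi/R)$, let $R\to\infty$, and use the decay of $f$ together with this weighted $L^{1}$ bound to make the boundary terms vanish and to provide a dominating function legitimising both the limit $R\to\infty$ and the interchange of $\dd/\dd t$ with the superintegral.
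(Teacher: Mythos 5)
Your proposal is correct and follows exactly the Parisi--Sourlas deformation argument ($Q$-exactness of $\Phi^{+}\Phi^{-}$, a Stokes-type vanishing lemma, interpolation with $\ee^{-t\Phi^{+}\Phi^{-}}$, and evaluation of the $t\to\infty$ limit) that the paper itself invokes: the paper omits the proof and defers to the treatments in Disertori--Spencer--Zirnbauer and Bauerschmidt--Brydges--Slade, which proceed along precisely these lines. The only point to tidy is that the factor supplied by $\Xi$ is $\phi^{+}_{y,\tau}$ for \emph{every} $(y,\tau)$, not just the differentiated variable, so for the off-diagonal terms the integrability comes from the Gaussian regulator at $t>0$ rather than from the weighted $L^{1}$ hypothesis alone --- but this is exactly how your cutoff argument closes.
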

Even though our assumptions are somewhat weaker than, e.g., \cite{DisertoriSpencerZirnbauer, BauerschmidtBrydges}, the proof outlined in those references carries over and is therefore here omitted.
A more geometrical perspective on this statement can be found in \cite{Berline, SchwarzZaboronsky,BlauThompson}. 

The following lemma will be useful for the application of the SUSY localization formula.
\begin{lemma}
\label{lemma: decomposition_Q_operator}
Let $f \in L^{1}(\mathcal{S},\mathscr{G})$ be even, supersymmetric and invariant under $U(1)^{\times \s}$ fermionic transformations, $\psi^{\varepsilon}_{\sigma} \mapsto \ee^{\ii \varepsilon \theta_{\sigma}}\psi^{\varepsilon}_{\sigma}$ for $\theta \in [0,2\pi)^{\s}$. Then
\begin{equation}
\label{eq: decomposition_Q_operator}
\psi^{\varepsilon}_{\sigma} \frac{\partial}{\partial \phi^{\varepsilon}_{\sigma}} f(\Phi) = -\varepsilon \phi_{\sigma}^{-\varepsilon} \frac{\partial}{\partial \psi^{-\varepsilon}_{\sigma}} f(\Phi) \;,
\qquad
\forall \sigma \in \s, \,\varepsilon \in \{\pm\} \;.
\end{equation}
\end{lemma}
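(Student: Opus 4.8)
The plan is to fix a colour $\sigma \in \s$ and an index $\varepsilon \in \{\pm\}$ and reduce the claimed identity \eqref{eq: decomposition_Q_operator} to a single scalar relation between two coefficient functions, which supersymmetry will then supply. The entry point is the observation that the fermionic $U(1)^{\times\s}$ invariance dictates exactly which monomials in $\psi^{\pm}_{\sigma}$ may occur. Under $\psi^{\varepsilon}_{\sigma} \mapsto \ee^{\ii\varepsilon\theta_{\sigma}}\psi^{\varepsilon}_{\sigma}$ a Grassmann monomial acquires the phase $\ee^{\ii \sum \varepsilon\theta_{\sigma}}$ (summed over its factors), so invariance for all $\theta$ forces, separately for each colour, that $\psi^{+}_{\sigma}$ and $\psi^{-}_{\sigma}$ appear either both or neither. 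Singling out the fixed colour $\sigma$, I can therefore write
\begin{equation}
f(\Phi) = f_{0} + \psi^{+}_{\sigma}\psi^{-}_{\sigma}\,f_{1} \;,
\end{equation}
where $f_{0}$ and $f_{1}$ are even superfunctions of $\phi$ and of the remaining generators $\{\psi^{\pm}_{\sigma'}\}_{\sigma'\neq\sigma}$, containing no $\psi^{\pm}_{\sigma}$ themselves.

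Next I would evaluate the two sides of \eqref{eq: decomposition_Q_operator} on this decomposition. On the left, nilpotency kills the $f_{1}$ term, since it already carries a factor $\psi^{\varepsilon}_{\sigma}$, leaving $\psi^{\varepsilon}_{\sigma}\,\partial_{\phi^{\varepsilon}_{\sigma}}f_{0}$. On the right, the $f_{0}$ term is annihilated by $\partial/\partial\psi^{-\varepsilon}_{\sigma}$, and a short sign computation using the evenness of $f_{1}$ gives $-\varepsilon\phi^{-\varepsilon}_{\sigma}\,\partial_{\psi^{-\varepsilon}_{\sigma}}\big(\psi^{+}_{\sigma}\psi^{-}_{\sigma}f_{1}\big) = \psi^{\varepsilon}_{\sigma}\,\phi^{-\varepsilon}_{\sigma}f_{1}$. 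Because multiplication by $\psi^{\varepsilon}_{\sigma}$ is injective on the subalgebra $\mathscr{G}_{\hat\sigma}$ generated by the colours $\sigma'\neq\sigma$, the identity \eqref{eq: decomposition_Q_operator} is then equivalent to the single scalar relation
\begin{equation}
\partial_{\phi^{\varepsilon}_{\sigma}}f_{0} = \phi^{-\varepsilon}_{\sigma}\,f_{1} \;.
\end{equation}

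Finally, I would extract this relation from supersymmetry, namely from $Q_{\Phi}f = 0$. Substituting the decomposition into $Q_{\Phi}$ and projecting the resulting identity onto the fermionic sector $\psi^{\varepsilon}_{\sigma}\,\mathscr{G}_{\hat\sigma}$ (exactly one factor $\psi^{\varepsilon}_{\sigma}$, no $\psi^{-\varepsilon}_{\sigma}$), only the colour-$\sigma$ summand of $Q_{\Phi}$ can contribute: the summands with colour $\sigma'\neq\sigma$ neither create nor destroy $\psi^{\pm}_{\sigma}$, hence land in the complementary sectors. The two surviving contributions are $\psi^{\varepsilon}_{\sigma}\,\partial_{\phi^{\varepsilon}_{\sigma}}f_{0}$, from the bosonic half acting on $f_{0}$, and $-\psi^{\varepsilon}_{\sigma}\,\phi^{-\varepsilon}_{\sigma}f_{1}$, from the fermionic half acting on $\psi^{+}_{\sigma}\psi^{-}_{\sigma}f_{1}$; their sum must vanish, which is precisely the scalar relation above.

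I expect the main obstacle to lie in the bookkeeping of this last step: one must check that, after projecting onto the chosen fermionic sector, nothing survives from the $\sigma'\neq\sigma$ part of $Q_{\Phi}$ nor from the ``wrong'' monomials of $f_{0}$ and $f_{1}$, and one must keep all Grassmann signs consistent (in particular in $\partial_{\psi^{-\varepsilon}_{\sigma}}(\psi^{+}_{\sigma}\psi^{-}_{\sigma}f_{1})$ and when commuting the even factor $f_{1}$ past $\psi^{\varepsilon}_{\sigma}$). This is exactly where the evenness hypothesis is used: it guarantees that $f_{0}$ and $f_{1}$ commute with the generators, so that all the accumulated signs collapse into the single factor $-\varepsilon$ appearing in the statement.
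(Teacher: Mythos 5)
Your proof is correct, but it takes a genuinely different route from the paper's. The paper argues at the operator level: it decomposes $Q_{\Phi}=\sum_{\varepsilon,\sigma}\widetilde{Q}_{\varepsilon,\sigma}$ with $\widetilde{Q}_{\varepsilon,\sigma}=\psi^{\varepsilon}_{\sigma}\partial_{\phi^{\varepsilon}_{\sigma}}+\varepsilon\phi^{-\varepsilon}_{\sigma}\partial_{\psi^{-\varepsilon}_{\sigma}}$, observes that conjugation by the fermionic transformation $\mathcal{U}_{\theta}$ multiplies $\widetilde{Q}_{\varepsilon,\sigma}$ by the character $\ee^{\ii\varepsilon\theta_{\sigma}}$, and then isolates the single relation $\widetilde{Q}_{\bar\varepsilon,\bar\sigma}f=0$ from $Q_{\Phi}f=0$ by a finite character average over four choices of $\theta$. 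You instead work in components: the $U(1)^{\times\s}$ invariance forces the decomposition $f=f_{0}+\psi^{+}_{\sigma}\psi^{-}_{\sigma}f_{1}$ with $f_{0},f_{1}$ free of $\psi^{\pm}_{\sigma}$, both sides of \eqref{eq: decomposition_Q_operator} reduce to $\psi^{\varepsilon}_{\sigma}$ times a scalar, and the required relation $\partial_{\phi^{\varepsilon}_{\sigma}}f_{0}=\phi^{-\varepsilon}_{\sigma}f_{1}$ is read off by projecting $Q_{\Phi}f=0$ onto the sector with exactly one $\psi^{\varepsilon}_{\sigma}$ and no $\psi^{-\varepsilon}_{\sigma}$; your accounting of which summands of $Q_{\Phi}$ survive this projection is accurate, and the sign computation in $\partial_{\psi^{-\varepsilon}_{\sigma}}(\psi^{+}_{\sigma}\psi^{-}_{\sigma}f_{1})$ checks out. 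The trade-off: the paper's argument is shorter and avoids all Grassmann sign bookkeeping, while yours is more explicit and, as a by-product, exhibits the canonical form of an invariant supersymmetric superfunction (namely the constraint linking $f_{0}$ and $f_{1}$), which is the structural fact underlying the localization formula. One small remark: the evenness of $f$ is not actually load-bearing in your sign computations (you keep $f_{1}$ to the right of all the colour-$\sigma$ generators throughout), just as it plays no visible role in the paper's proof; invoking it does no harm, but the essential hypotheses are invariance and $Q$-closedness.
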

\begin{proof}
Denote by $\mathcal{U}_{\theta}$ the $U(1)^{\times \s}$ fermionic transformation. Since $0$ and $f$ are invariant, we have that $\mathcal{U}_{\theta}\, Q_{\Phi}\,\mathcal{U}_{\theta}^{-1} f(\Phi) = 0$ for any $\theta \in [0,2\pi)^{\s}$. We decompose $Q_{\Phi} = \sum_{\varepsilon,\sigma} \widetilde{Q}_{\varepsilon,\sigma}$, with $\widetilde{Q}_{\varepsilon,\sigma}:= \psi^{\varepsilon}_{\sigma} \frac{\partial}{\partial \phi^{\varepsilon}_{\sigma}} +\varepsilon \phi_{\sigma}^{-\varepsilon} \frac{\partial}{\partial \psi^{-\varepsilon}_{\sigma}}$ satisfying $\mathcal{U}_{\theta} \,\widetilde{Q}_{\varepsilon,\sigma}\,\mathcal{U}_{\theta}^{-1} = \ee^{\ii \varepsilon \theta_{\sigma}} \widetilde{Q}_{\varepsilon,\sigma}$. Fix $\bar{\sigma}$ and $\bar{\varepsilon}$ and choose $\theta_{\sigma} =  \delta_{\bar{\sigma},\sigma}\pi$ and $\theta'_{\sigma} = \delta_{\bar{\sigma},\sigma} \pi/2$. Then, the linear combination
\begin{multline}
0 = \frac{1}{4}Q_{\Phi}f(\Phi) - \frac{1}{4}\mathcal{U}_{\theta} \,Q_{\Phi}\,\mathcal{U}_{\theta}^{-1} f(\Phi) 
\\
-  \frac{\ii}{4} \bar{\varepsilon}\, \mathcal{U}_{\theta'} \,Q_{\Phi}\,\mathcal{U}_{\theta'}^{-1} f(\Phi)+
\frac{\ii}{4} \bar{\varepsilon}\, \mathcal{U}_{\theta+\theta'} \,Q_{\Phi}\,\mathcal{U}_{\theta+\theta'}^{-1} f(\Phi)
= \widetilde{Q}_{\bar{\varepsilon},\bar{\sigma}} f(\Phi)  \;,
\end{multline}
which is exactly Eq.~\eqref{eq: decomposition_Q_operator} for $\bar{\sigma}$ and $\bar{\varepsilon}$.
\end{proof}

\subsection{Disorder-averaged Green's function}
\label{sec:SUSY_repr_Green}

In the proposition below two SUSY representations of the disorder-averaged Green's function are finally discussed. The first representation is well-known and has already been applied to the study of the Anderson model at strong disorder \cite{Efetov82}, see also \cite{BovierKleinPerez}.
The second representation is new, to the best of our knowledge. It is particularly useful at weak disorder and energies close to the spectrum of $H$.
We call these two SUSY representations
respectively ``direct SUSY integral'' and ``dual SUSY integral''.

We provide some preliminary definitions.
By assumption on the disorder distribution, see (H1), we have that $\int  |\omega|^{|\s|+1}\nu(\omega)\dd \omega < \infty$ and thus we can define the following function of a supervector:
\begin{equation}
\label{eq: definition_F_z}
F_{z}(\Phi) := \ee^{ \gamma^{-1}z (\Phi^{+}\Phi^{-})} \hat{\nu}(\Phi^{+} \Phi^{-}) \;,
\end{equation}
where $z \in \mathbb{C}$ and where
\begin{equation}
\hat{\nu}(\Phi^{+}\Phi^{-}) :=
\int \ee^{\ii \omega \Phi^{+} \Phi^{-}}\,\nu(\omega) \,\dd \omega
= \sum_{n = 0}^{|\s|}
\hat{\nu}^{(n)}(\phi^{+}\phi^{-})\frac{ (\psi^{+}\psi^{-})^{n}}{n!}
 \;.
\end{equation}
The condition on the moments of $\nu$ is necessary in order to have $\hat{\nu}^{(n)}$ well-defined for $n = 0,\dots,|\s|$. 
It is important to notice that $F_{z}$ is supersymmetric $Q_{\Phi}F_{z}(\Phi) = 0$,  even $F_{z}(\Phi) = F_{z}(-\Phi)$ and invariant under $U(1)^{\times \s}$ fermionic transformations, see Lemma~\ref{lemma: decomposition_Q_operator}.

We will also need to measure the decay of the lattice operators we consider. To this end, we introduce the following norm: for any $A \in \mathbb{C}^{(\mathbb{Z}^{\mathrm{D}} \times \s) \times (\mathbb{Z}^{\mathrm{D}} \times \s)}$
\begin{equation}
\label{lattice_operator_norm}
\| A\|_{\infty,1} := \sup _{x \in \mathbb{Z}^{\mathrm{D}}} \sum_{y \in \mathbb{Z}^{\mathrm{D}}} \sum_{\sigma, \sigma' \in \s} \big |(A_{x,y})_{\sigma,\sigma'} \big | \;.
\end{equation}
%

We can finally state the proposition.
\begin{proposition}[SUSY representation]
\label{prop: SUSY_representation}
Let $\Lambda \subset \mathbb{Z}^{\mathrm{D}}$ be finite and $\eps >0$.
Assume that $F_{0} \in L^{1}(\mathcal{S},\mathscr{G})$. Then for any $\gamma > 0$, and any $x,y \in \Lambda$ the functions $ \G_{ \Lambda}(x,y; E + \ii \epsilon)$ and $ \G_{ \Lambda}(x,y; E - \ii \epsilon)$ are analytic respectively in $ \im E > 0$ and in $ \im E < 0$, and the following representations hold true:
\begin{itemize}
\item[(i)] Direct SUSY integral:
\begin{equation}\label{eq: SUSY_greens_function_direct}
   \G_{ \Lambda}(x,y; E \pm \ii \epsilon) =\pm \ii  \, \gamma^{-1} \,\int \dd \mu ^{\pm } _{\Lambda}(\Phi) F^{\Lambda}_{\pm \ii E -  \eps}(\Phi) \psi^{-}_{x} \psi^{+}_{y} \; ,
\end{equation}
where for any $\XX \subset \Lambda$ we define:
\begin{align}\label{eq: functional_and_measure_direct}
& \dd \mu _{\xx}^{\pm}(\Phi) = \dd \Phi_{\xx} \, \mu _{\xx}^{\pm}(\Phi)\,,
&& 
\mu _{\xx}^{\pm}(\Phi) := \ee^{\mp \ii \gamma^{-1}\sum_{x,y \in \xx}  \Phi_{x}^{+} H_{x,y}\Phi_{y}^{-}} \,, \nonumber
\\
& F^{\xx}_{\pm \ii E -  \eps}(\Phi) =  \prod_{x \in \XX} F_{\pm \ii E -  \eps}(\Phi_{x}) \;. &&
\end{align}
\item[(ii)] Dual SUSY integral:
\begin{equation}\label{eq: SUSY_greens_function_fourier}
 \G_{\Lambda}(x,y; E \pm \ii  \epsilon) =\pm \ii  \,  \gamma^{-1} \,\int \dd \widehat{\mu}^{\pm} _{\Lambda}(\xi)\frac{\partial}{\partial\eta^{+}_{x}} \frac{\partial}{\partial\eta^{-}_{y}} \widehat{F_{0}}^{\Lambda}(\xi)  \; ,
\end{equation}
where for any $\XX \subset \Lambda$ we define:
\begin{align}\label{eq: functional_and_measure_fourier}
& \dd \widehat{\mu} _{\xx}^{\pm}(\xi) = \dd \xi_{\xx} \, \widehat{\mu} _{\xx}^{\pm}(\xi)\;,
& 
&\widehat{\mu} _{\xx}^{\pm}(\xi) := \ee^{\pm\ii \gamma \sum_{x,y \in \XX}  \xi_{x}^{+} C_{x,y}^{E \pm\ii \eps}\xi_{y}^{-}} \;, 
\nonumber
\\
&  \widehat{F_{0}}^{\xx}(\xi) =  \prod_{x \in \XX} \widehat{F_{0}}(\xi_{x})\;,&& 
\end{align}
with $C^{z} = (H -z)^{-1}$ and $\xi = (\kappa,\eta) \in \mathcal{S}^{\Lambda}$, see below Eq.~\eqref{eq: super_Fourier_Transform}.
%
\end{itemize}
\end{proposition}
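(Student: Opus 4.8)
The plan is to establish the direct representation (i) first, by combining the supersymmetric replica trick with the disorder average, and then to obtain the dual representation (ii) from it by a single application of the super Plancherel identity. Analyticity is elementary and can be dispatched at the outset: for finite $\Lambda$ the matrix $H_{\omega,\Lambda}$ is Hermitian, so $(H_{\omega,\Lambda}-z)^{-1}_{x,y}$ is analytic in $z$ off the real axis, with $\|(H_{\omega,\Lambda}-z)^{-1}\| \le |\im z|^{-1}$ uniformly in $\omega$; since for $z = E\pm\ii\eps$ with $\pm\im E>0$ one has $\im z \neq 0$, dominated convergence (or Morera's theorem) transfers analyticity in $E$ to the disorder average $\G_{\Lambda}$.

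For (i), fix $\omega$ and $\eps>0$ and set $A := \pm\ii\gamma^{-1}(H_{\omega,\Lambda}-z)$ with $z = E\pm\ii\eps$. Because $H_{\omega,\Lambda}-E$ is Hermitian, the Hermitian part of $A$ equals $\gamma^{-1}\eps\,\I>0$, so Proposition~\ref{prop: SUSY_replica_trick} applies and gives $(H_{\omega,\Lambda}-z)^{-1}_{x,y} = \pm\ii\gamma^{-1}\int\dd\Phi_{\Lambda}\,\ee^{-\sum\Phi^{+} A\Phi^{-}}\psi^{-}_{x}\psi^{+}_{y}$. Expanding $\Phi^{+}A\Phi^{-}$ splits the exponent into a hopping term $\mp\ii\gamma^{-1}\sum\Phi^{+}_{x}H_{x,y}\Phi^{-}_{y}$, a spectral term $\gamma^{-1}(\pm\ii E-\eps)\sum_{x}\Phi^{+}_{x}\Phi^{-}_{x}$, and a disorder term $\mp\ii\sum_{x}\omega_{x}\Phi^{+}_{x}\Phi^{-}_{x}$. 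Only the last depends on $\omega$; averaging factorises over sites and, using $\hat{\nu}(\Phi^{+}\Phi^{-})=\int\ee^{\ii\omega\Phi^{+}\Phi^{-}}\nu(\omega)\dd\omega$ together with the evenness of $\nu$, produces $\prod_{x}\hat{\nu}(\Phi^{+}_{x}\Phi^{-}_{x})$. Recombining the spectral factor into $\ee^{\gamma^{-1}(\pm\ii E-\eps)\Phi^{+}\Phi^{-}}$ reconstitutes $F_{\pm\ii E-\eps}$, yielding exactly \eqref{eq: SUSY_greens_function_direct}. The interchange of $\mathbb{E}_{\omega}$ with $\int\dd\Phi_{\Lambda}$ is justified by Fubini, since the bosonic weight is dominated by $\ee^{-\gamma^{-1}\eps\sum_{x}|\phi_{x}|^{2}}$ uniformly in $\omega$ and (H1) guarantees finiteness of the moments entering $\hat{\nu}$.

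To pass to (ii), I absorb the spectral factor into the Gaussian: since $\ee^{\mp\ii\gamma^{-1}\sum\Phi^{+}H\Phi^{-}}\ee^{\gamma^{-1}(\pm\ii E-\eps)\sum\Phi^{+}\Phi^{-}}=\ee^{\mp\ii\gamma^{-1}\sum\Phi^{+}(H-z)\Phi^{-}}$, the direct integral reads $\pm\ii\gamma^{-1}\int\dd\Phi_{\Lambda}\,\ee^{-\sum\Phi^{+}M\Phi^{-}}F_{0}^{\Lambda}(\Phi)\,\psi^{-}_{x}\psi^{+}_{y}$ with $M:=\pm\ii\gamma^{-1}(H-z)$ and $\re M=\gamma^{-1}\eps\,\I>0$. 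I then apply Proposition~\ref{prop: super_Planchere_identity} with the Schwartz factor $f(\Phi):=\ee^{-\sum\Phi^{+}M\Phi^{-}}$ and the $L^{1}$ factor $g(\Phi):=F_{0}^{\Lambda}(\Phi)\,\psi^{-}_{x}\psi^{+}_{y}$. The super Fourier transform of the Gaussian is the crux: carrying out the bosonic and fermionic Gaussian integrals separately produces the factors $(\det M)^{-1}$ and $\det M$, which cancel by supersymmetry (the same $M$ acts on both sectors), leaving $\widehat{f}(\xi)=\ee^{-\sum\xi^{+}M^{-1}\xi^{-}}=\ee^{\pm\ii\gamma\sum\xi^{+}C^{z}\xi^{-}}=\widehat{\mu}^{\pm}_{\Lambda}(\xi)$, where $C^{z}=(H-z)^{-1}$. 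For $g$, the product structure factorises the transform into $\prod_{x}\widehat{F_{0}}(\xi_{x})=\widehat{F_{0}}^{\Lambda}(\xi)$, and the monomial $\psi^{-}_{x}\psi^{+}_{y}$ is generated by differentiating the fermionic source, so that $\widehat{g}(\xi)=\pm\,\partial_{\eta^{+}_{x}}\partial_{\eta^{-}_{y}}\widehat{F_{0}}^{\Lambda}(\xi)$ up to a sign. Finally, because $F_{0}$ is even its transform $\widehat{F_{0}}^{\Lambda}$ is even, which removes the reflection $\xi\mapsto-\xi$ appearing in the Plancherel identity; collecting the prefactor and the remaining signs gives \eqref{eq: SUSY_greens_function_fourier}.

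The main obstacle is the dual representation, and within it the Gaussian super Fourier transform. Two points require care: first, the supersymmetric cancellation of the bosonic and fermionic determinants, which is precisely what makes $\widehat{\mu}^{\pm}$ depend on $C^{z}$ rather than on an ill-behaved superdeterminant, together with the verification that $\re M>0$ and the definite sign of $\im C^{E\pm\ii\eps}$ render both the $\Phi$- and the $\xi$-integrals absolutely convergent, so that Proposition~\ref{prop: super_Planchere_identity} is genuinely applicable; second, the bookkeeping of Grassmann signs in identifying $\psi^{-}_{x}\psi^{+}_{y}$ with $\partial_{\eta^{+}_{x}}\partial_{\eta^{-}_{y}}$ and in exploiting the evenness of $\widehat{F_{0}}^{\Lambda}$ to discard the reflection. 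The remaining steps—the replica trick, the Fubini interchange, and the analyticity—are routine by comparison.
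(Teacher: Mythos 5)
Your proposal is correct and follows essentially the same route as the paper: the replica trick of Proposition~\ref{prop: SUSY_replica_trick} plus a Fubini interchange for the direct representation, and the super Plancherel identity of Proposition~\ref{prop: super_Planchere_identity} applied to the Gaussian factor for the dual one. The only (harmless) variations are that you apply the replica trick directly to $\pm\ii\gamma^{-1}(H_{\omega,\Lambda}-z)$ rather than rescaling $\Phi\to\gamma^{1/2}\Phi$, you obtain analyticity from the uniform resolvent bound rather than from the SUSY integral itself, and you spell out the supersymmetric determinant cancellation in $\widehat{\mu}^{\pm}_{\Lambda}$ and the parity argument for the reflection $\xi\mapsto-\xi$, which the paper leaves implicit.
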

\begin{proof}
We apply Proposition~\ref{prop: SUSY_replica_trick} to 
$\pm \ii (H_{\omega} - E \mp \ii \eps) \in \mathbb{C}^{\mathbf{\Lambda}\times\mathbf{\Lambda}}$, which has positive definite Hermitian part for any $\omega$ ($H_{\omega} $ is Hermitian) provided that $\eps >0$. After rescaling $\Phi \to \gamma^{1/2} \Phi$ (which preserves $\dd \Phi_{\Lambda}$, see, e.g., \cite{Berezin66}) we can write:
\begin{equation}
\pm \ii \, \Big(\!\big(H_{\omega,\Lambda} -E \mp \ii \eps \big)^{-1} \Big)_{x,y}   = \gamma^{-1} \int \dd \mu ^{\pm}_{\Lambda} (\Phi) \,\psi^{-}_{x} \psi^{+}_{y} \, \ee^{\mp \ii  \sum_{x'} [\omega_{x'} -\gamma^{-1}( E \pm \ii \eps)] \Phi^{+}_{x'}\Phi^{-}_{x'}}
\end{equation}
where $\psi^{-}_{x} \psi^{+}_{y}$ is a $\s \times \s$ matrix of Grassmann variables and where $\dd \mu ^{\pm}_{\Lambda} (\Phi)$ is as in the statement. We shall swap disorder-expectation with superintegration by Fubini-Tonelli theorem. To this end, we need to prove that
\begin{equation}
\label{fub_ton_chap2}
\sup _{\sigma,\sigma '}\, \mathbb{E}_{\omega} \,\int \dd \phi_{\Lambda} \, \Big |\int \dd \psi_{\Lambda} \,\, \mu ^{\pm}_{\Lambda} (\Phi) \,\psi^{-}_{x,\sigma} \psi^{+}_{y,\sigma'}\, \ee^{\mp \ii  \sum_{x'} [\omega_{x'} -\gamma^{-1}( E \pm \ii \eps)] \Phi^{+}_{x'}\Phi^{-}_{x'}} \Big | < \infty \;.
\end{equation}
By means of \eqref{eq:fermionic_norm_integral}, see also \eqref{L1_bound_norms}, we have that
\begin{multline}
\label{eq: swapping_condition}
 \text{l.h.s.~\eqref{fub_ton_chap2}} \\
 \leq  \sup _{\sigma,\sigma '}\, \mathbb{E}_{\omega} \,\int \dd \phi_{\Lambda} \, \Big \|\mu ^{\pm}_{\Lambda} (\Phi) \,\psi^{-}_{x,\sigma} \psi^{+}_{y,\sigma'}\, \ee^{\mp \ii  \sum_{x'} [\omega_{x'} -\gamma^{-1}( E \pm \ii \eps)] \Phi^{+}_{x'}\Phi^{-}_{x'}} \Big\|  \;.
\end{multline}
Since $\Lambda$ is finite, for any hopping Hamiltonian the following quantity is bounded 
\begin{equation}
\|  H_{\Lambda} \|_{\infty,1} = \sup_{x \in \Lambda}\sum_{x' \in \Lambda} \sum_{\sigma,\sigma' \in \s } \big|(H_{x,x'})_{\sigma,\sigma' } \big| < \infty \;,
\end{equation}
see \eqref{lattice_operator_norm}, where $H_{\Lambda} = \mathbf{1}_{\Lambda}H\mathbf{1}_{\Lambda}$, $\mathbf{1}_{\Lambda}$ being the characteristic function of $\Lambda$. It appears that we can bound the Grassmann norm of $\mu ^{\pm}_{\Lambda} (\Phi)$ as follows:
\begin{equation}
\label{eq: bound_gibbs_free}
\begin{split}
\big \|\mu ^{\pm}_{\Lambda} (\Phi) \big \|  & = \Big \| \ee^{\mp \ii \gamma^{-1}\sum_{x,y \in \Lambda}  \phi_{x}^{+} H_{x,y}\phi_{y}^{-}} \ee^{\mp \ii \gamma^{-1}\sum_{x,y \in \Lambda}  \psi_{x}^{+} H_{x,y}\psi_{y}^{-}}   \Big \|
\\
& \leq \ee^{\gamma^{-1} \| \sum_{x,y \in \Lambda}  \psi_{x}^{+} H_{x,y}\psi_{y}^{-} \|} 
\leq \ee^{\gamma^{-1} \| H_{\Lambda}\|_{\infty,1} \,| \Lambda|} \;,
\end{split}
\end{equation}
where we used the properties of the norm and that $\sum_{x,y \in \Lambda} \phi^{+}_{x}H_{x,y}\phi^{-}_{y} \in \mathbb{R}$.
 
If we denote by $\mathrm{Tay}_{n}$ the $n$-th order Taylor expansion in zero, then the following bound
\begin{multline}
\label{eq: a_priori_omega_integration_bound}
\sup _{\sigma,\sigma '}\,\Big \|\mu ^{\pm}_{\Lambda} (\Phi) \,\psi^{-}_{x,\sigma} \psi^{+}_{y,\sigma'} \, \ee^{\mp \ii  \sum_{x'} [\omega_{x'} -\gamma^{-1}( E \pm \ii \eps)] \Phi^{+}_{x'}\Phi^{-}_{x'}} \Big\| \\ 
\leq \big \|\mu ^{\pm}_{\Lambda} (\Phi) \big \| \, \sup _{\sigma,\sigma '}\,\Big \| 
\psi^{-}_{x,\sigma} \psi^{+}_{y,\sigma'} \Big \| \,\prod_{x' \in \Lambda}  \Big \|\ee^{\mp \ii   [\omega_{x'} -\gamma^{-1}( E \pm \ii \eps)] \Phi^{+}_{x'}\Phi^{-}_{x'}} \Big\|
\\
\leq 
 \ee^{\gamma^{-1} \| H_{\Lambda}\|_{\infty,1} \, |\Lambda|}
 \left [ \,\prod_{x' \in \Lambda} 
 \ee^{-\eps \gamma^{-1}\phi^{+}_{x'}\phi^{-}_{x'} } \Big \|\ee^{\mp \ii   [\omega_{x'} -\gamma^{-1}( E \pm \ii \eps)] \psi^{+}_{x'}\psi^{-}_{x'}} \Big\| \right ]
\\
  \leq  \ee^{\gamma^{-1} \| H_{\Lambda}\|_{\infty,1} \, |\Lambda|}
\left [ \,  
  \prod_{x' \in \Lambda} 
  \ee^{-\eps \gamma^{-1}\phi^{+}_{x'}\phi^{-}_{x'} } \,
\,\mathrm{Tay}_{|\s|}\ee^{|\omega_{x'} -\gamma^{-1}(E \pm \ii \eps)| |\s|} \right ]
\end{multline}
proves the finiteness of the r.h.s.~in \eqref{eq: swapping_condition}, since
\begin{equation}
\begin{split}
& \mathbb{E}_{\omega} \,\prod_{x' \in \Lambda} 
\mathrm{Tay}_{|\s|}\ee^{|\omega_{x'} -\gamma^{-1}(E \pm \ii \eps)| |\s|} < \infty \;,
\\
& \int \dd \phi _{\Lambda} \prod_{x' \in \Lambda} 
\ee^{-\eps \gamma^{-1}\phi^{+}_{x'}\phi^{-}_{x'} } = (\gamma \eps ^{-1})^{|\Lambda| \,|\s|} \;,
\end{split}
\end{equation}
respectively by assumption (H1) and by explicit computation. The claim in \textit{(i)} then follows because 
\begin{equation}
\begin{split}
\mathbb{E}_{\omega} \prod_{x' \in \Lambda}\ee^{\mp \ii \omega_{x'} \Phi^{+}_{x'}\Phi_{x'}^{-}} &=\prod_{x'} \hat{\nu}(\Phi^{+}_{x'}\Phi_{x'}^{-}) 
 \;, 
\end{split}
\end{equation}
where we used that $\{ \omega_{x}\}_{x \in \Lambda}$ are i.i.d.~and  $\nu$ is even. 

We notice that the function $\Phi \mapsto \mu ^{\pm}_{\Lambda}(\Phi)\ee^{\gamma^{-1}(\pm \ii E - \eps) \sum_{x \in \Lambda} \Phi_{x}^{+}\Phi_{x}^{-}} $ belongs to $ \mathscr{S}(\mathcal{S}^{\Lambda},\mathscr{G}^{\Lambda})$ if $\eps>0$ and if respectively $\pm\im E \geq 0$. Furthermore, the Grassmann coefficients of the function $E \mapsto \ee^{\gamma^{-1}(\pm \ii E - \eps) \sum_{x \in \Lambda} \Phi_{x}^{+}\Phi_{x}^{-}} $ are holomorphic. Since $\psi^{-}_{x}\psi^{+}_{y}\,F_{0}^{\Lambda} \in L^{1}(\mathcal{S}^{\XX},\mathscr{G}^{\XX})$, by standard application of dominated convergence and Morera's theorem the functions $\G_{\Lambda}(x,y;E + \ii \eps)$ and $\G_{\Lambda}(x,y;E - \ii \eps)$ are analytic respectively in $ \im E > 0$ and $\im E < 0$. 

Finally, to prove claim \textit{(ii)}, we apply super Plancherel identity (see Proposition~\ref{prop: super_Planchere_identity}) to the r.h.s.~of \eqref{eq: SUSY_greens_function_direct}, and use the fact that
\begin{equation}
\frac{\partial}{\partial\eta^{+}_{x}} \frac{\partial}{\partial\eta^{-}_{y}} \widehat{F_{0}}^{\Lambda}(\xi) = 
\int \dd \Phi_{\Lambda} \ee^{-\ii \sum_{x' \in \Lambda} \big(\xi^{+}_{x'}\Phi^{-}_{x'} + \Phi^{+}_{x'}\xi^{-}_{x'} \big) } \, \psi^{-}_{x} \, \psi^{+}_{y} F_{0}^{\Lambda}(\Phi) \;.
\end{equation}
\end{proof}
\begin{remark}
\label{remark_Disertori}
Our proof of the SUSY representation relies on $\nu$ having finite moments (H1). This hypothesis can be weakened by use of, e.g., supersymmetric polar coordinates \cite{DisertoriLager}. In such a case, the SUSY representation could possibly involve a more complicated expression than $\hat{\nu}(\Phi^{+}\Phi^{-})$.
\end{remark}

With stronger assumptions on $\nu$, we can analytically extend the SUSY integral and hence $ \G_{\Lambda}(x,y; E \pm \ii \epsilon)$ in the variable $E$.
\begin{lemma}[Analytic continuation]
\label{lemma: analytic_continuation_apriori}
Let $\Lambda \subset \mathbb{Z}^{\mathrm{D}}$ be finite and $\eps \geq 0$.
If for some $\beta \geq 0$ $F_{\beta} \in L^{1}(\mathcal{S},\mathscr{G})$, then $ \G_{\Lambda}(x,y; E  +\ii \epsilon )$ and $ \G_{\Lambda}(x,y; E  - \ii \epsilon )$ can be continued to functions that are analytic respectively in $ \im E > -\beta$ and $\im E  < \beta$ and continuous respectively in $ \im E \geq -\beta$ and $\im E  \leq \beta$.
\end{lemma}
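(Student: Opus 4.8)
The object to be continued is, for fixed $\eps$, the direct SUSY integral on the right of \eqref{eq: SUSY_greens_function_direct}: set $\widetilde\G(E) := \ii\gamma^{-1}\int \dd\mu^+_\Lambda(\Phi)\,F^\Lambda_{\ii E-\eps}(\Phi)\,\psi^-_x\psi^+_y$, which by Proposition \ref{prop: SUSY_representation} equals $\G_\Lambda(x,y;E+\ii\eps)$ for $\im E>0$. The plan is to prove that this same integral in fact converges and defines a holomorphic function on the larger region $\{\im E>-\beta\}$, continuous up to the boundary $\{\im E=-\beta\}$; the identity theorem then identifies it with the desired continuation. The statement for $\G_\Lambda(x,y;E-\ii\eps)$ on $\{\im E<\beta\}$ is entirely analogous, with the opposite sign in \eqref{eq: functional_and_measure_direct} and $z=-\ii E-\eps$, so I would treat only the $+$ case.

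\textbf{The crux: a uniform dominating bound.} The heart of the argument is a single, locally $E$-uniform, $L^1$ dominating bound. First I would unpack the hypothesis $F_\beta\in L^1(\mathcal{S},\mathscr{G})$: expanding \eqref{eq: definition_F_z} in the nilpotent element $\psi^+\psi^-$, the coefficient of $(\psi^+\psi^-)^m$ in $F_\beta$ equals $\ee^{\gamma^{-1}\beta\,\phi^+\phi^-}$ times a triangular, invertible combination of $\hat\nu^{(0)},\dots,\hat\nu^{(m)}$; hence $F_\beta\in L^1$ is equivalent to $\ee^{\gamma^{-1}\beta\,\phi^+\phi^-}\hat\nu^{(n)}(\phi^+\phi^-)\in L^1(\dd\phi)$ for every $n=0,\dots,|\s|$. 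Next, writing $z=\ii E-\eps$ so that $\re z=-\im E-\eps\le\beta$ on $\{\im E\ge-\beta\}$, a term-by-term estimate of the Grassmann norm gives a pointwise bound $\|F_z(\Phi)\|\le \ee^{\gamma^{-1}(\re z)\phi^+\phi^-}\sum_{n=0}^{|\s|}c_n(|z|)\,|\hat\nu^{(n)}(\phi^+\phi^-)|$ with polynomials $c_n$. Since $\phi^+\phi^-\ge0$ and $\re z\le\beta$, the weight is at most $\ee^{\gamma^{-1}\beta\phi^+\phi^-}$, so the right-hand side is integrable, uniformly for $z$ in bounded sets with $\re z\le\beta$. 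Combining this with the $\Phi$-independent bound $\|\mu^+_\Lambda(\Phi)\|\le \ee^{\gamma^{-1}\|H_\Lambda\|_{\infty,1}|\Lambda|}$ of \eqref{eq: bound_gibbs_free}, the submultiplicativity $\|F^\Lambda_z\|\le\prod_{x'}\|F_z(\Phi_{x'})\|$, and $\|\psi^-_x\psi^+_y\|\le1$, the norm inequality \eqref{L1_bound_norms} produces, for every compact $K\subset\{\im E>-\beta\}$, a dominator of product form $\ee^{\gamma^{-1}\|H_\Lambda\|_{\infty,1}|\Lambda|}\prod_{x'\in\Lambda}g_K(\Phi_{x'})$ that is independent of $E$ and integrable over $\dd\phi_\Lambda$.

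\textbf{Standard conclusion.} With this dominator the remaining steps are routine. Each Grassmann coefficient of the integrand is entire in $E$, the sole dependence entering through the entire factor $\ee^{\gamma^{-1}(\ii E-\eps)\sum_{x'}\Phi^+_{x'}\Phi^-_{x'}}$, so analyticity on the open region follows either from Morera's theorem, swapping $\oint_\Gamma \dd E$ with $\int\dd\Phi_\Lambda$ by Fubini (legitimate since the dominator times the length of $\Gamma$ is integrable) and using that the inner $E$-integral vanishes, or by differentiating under the integral, noting that for $\im E\ge-\beta+\delta$ the extra decay $\ee^{-\gamma^{-1}\delta\phi^+\phi^-}$ absorbs the factor $\phi^+\phi^-$ generated by $\partial_E$. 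Continuity up to $\{\im E=-\beta\}$ follows from dominated convergence with the boundary dominator at $\re z=\beta$, where the equivalence $\ee^{\gamma^{-1}\beta\phi^+\phi^-}\hat\nu^{(n)}\in L^1$ is used exactly. Finally, $\widetilde\G$ agrees with $\G_\Lambda(x,y;E+\ii\eps)$ on $\{\im E>0\}$ by Proposition \ref{prop: SUSY_representation} (for $\eps=0$ one repeats the Fubini step of that proof with the stronger hypothesis $F_\beta\in L^1$ in place of $F_0\in L^1$, or lets $\eps\to0^+$ under the same dominator), so by the identity theorem $\widetilde\G$ is the unique analytic continuation.

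\textbf{Main obstacle.} The only delicate point is the construction of the uniform $L^1$ dominator, i.e.\ converting the single hypothesis $F_\beta\in L^1$ into control of $\|F_z(\Phi)\|$ for all $z$ with $\re z\le\beta$, while tracking the polynomial-in-$z$ corrections coming from the fermionic Taylor expansion and all coefficients $\hat\nu^{(n)}$ with $n>0$. Once this pointwise comparison is secured, holomorphy, boundary continuity and uniqueness are standard applications of Morera's theorem, dominated convergence and the identity theorem.
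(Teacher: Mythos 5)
Your proposal is correct and follows essentially the same route as the paper: a locally $E$-uniform $L^{1}$ dominator for the Grassmann norm of the integrand (via \eqref{L1_bound_norms} and \eqref{eq: bound_gibbs_free}), followed by Morera's theorem and dominated convergence. The only cosmetic difference is that the paper obtains the pointwise bound by factoring $F_{\pm\ii E-\eps}=\ee^{\gamma^{-1}(\pm\ii E-\eps-\beta)\Phi^{+}\Phi^{-}}F_{\beta}$ and estimating the prefactor's norm directly, rather than unpacking $F_{\beta}\in L^{1}$ into the coefficients $\hat{\nu}^{(n)}$ as you do.
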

\begin{proof}
To begin, we notice that for any $E \in \mathbb{C}$ such that $0\leq \mp \im E \leq \beta$, we respectively have
\begin{equation}
\label{eq: bound_for_analytic_extension}
\begin{split}
\big\| F_{\pm \ii E - \eps}(\Phi)\big\| 
& = \big \| \ee^{\gamma^{-1}(\pm \ii \mathrm{Re} E -(\eps+\beta\pm \im E)) \Phi^{+}\Phi^{-}} F_{\beta}(\Phi)\big \|
\\
& \leq  \ee^{-\gamma^{-1}(\eps+\beta\pm \im E) \phi^{+}\phi^{-}} \big \| \ee^{\gamma^{-1}(\pm \ii  E -(\eps+\beta)) \psi^{+}\psi^{-}}\big \| 
\big \| F_{\beta}(\Phi)\big \|
\\
& \leq \ee^{\gamma^{-1}|\pm \ii E - (\eps+\beta)|\,|\s|}   \big \| F_{\beta}(\Phi)\big \| \;,
\end{split}
\end{equation}
where we used that $\eps + \beta \pm \im E \geq 0$ respectively.
Following \eqref{fub_ton_chap2}, we need to control in absolute value the bosonic integral in the direct SUSY integral, that is, we want to prove the finiteness of
\begin{equation}
\label{intermediate_bosonic_bound}
\sup _{\sigma,\sigma'}\int \dd \phi_{\Lambda} \, \Big | \int \dd \psi_{\Lambda} \, \mu _{\Lambda}^{\pm}(\Phi) \, \psi^{-}_{x,\sigma} \psi^{+}_{y,\sigma'} F^{\Lambda}_{\pm \ii E- \eps}(\Phi) \Big | \;.
\end{equation}
We do it by exploiting \eqref{L1_bound_norms}, thus:
\begin{multline}
\label{eq: chain_analyticity_lemma2}
\eqref{intermediate_bosonic_bound} \leq 
\int \dd \phi_{\Lambda} \big \|  \mu _{\Lambda}^{\pm}(\Phi)\big\|\, \big\|   \psi^{-}_{x,\sigma} \psi^{+}_{y,\sigma'} \big \| \prod_{x' \in \Lambda} \big\| F_{\pm \ii E - \eps}(\Phi_{x'}) \big \|
\\
\leq \big(
\ee^{\gamma^{-1} \|H_{\Lambda}\|_{\infty,1} } \ee^{ \gamma^{-1}|\pm \ii  E - (\eps+\beta)|\,|\s|}\big)^{ |\Lambda|}  \big(\| F_{\beta}\|_{L^{1}(\mathcal{S},\mathscr{G})} \big)^{|\Lambda|} \;,
\end{multline}
where we also used the properties of the Grassmann norms together with the bounds \eqref{eq: bound_gibbs_free} and \eqref{eq: bound_for_analytic_extension}. Thus, at finite $\Lambda$ and finite $\eps \geq 0$ the first line in \eqref{intermediate_bosonic_bound} is finite, uniformly on compacts respectively in the strips $0 < \mp \im \,E <\beta$. The claims follow again by application of 
dominated convergence theorem and Morera's theorem.
\end{proof}
We will henceforth assume that $\G_{\Lambda}(x,y; E \pm \ii \epsilon)$ is the analytic extension if possible.
\medskip

The main technical difficulty in the analysis of the SUSY integrals in Eq.~\eqref{eq: SUSY_greens_function_direct} and Eq.~\eqref{eq: SUSY_greens_function_fourier} is to obtain estimates that are uniform in the volume.
We will achieve this goal by means of SUSY cluster expansions, which we discuss in detail in the next two sections.


\section{SUSY cluster expansion at strong disorder}
\label{sec: SUSY_cluster_expansion_strong}

In this section, we prove the exponential decay of the disorder-averaged Green's function (Theorem~\ref{thm: exponential_decay_strong_disorder}) and we establish analiticity of the LDOS (Corollary~\ref{thm: DOS_strong_disorder}) at strong disorder and arbitrary energies.
The analysis of the direct SUSY integral is based on the SUSY cluster expansion presented in Proposition~\ref{thm: cluster_expansion_green_function_strong}. The proof of the theorem is then completed by means of tree estimates together with some suitable bounds on the norm of the superfunctions to be integrated. The latter bounds can be achieved under some reasonable assumptions on the disorder distribution that we anticipated in (H2), see the Introduction. In order to make the assumption precise, we need the following definition.
\begin{definition}[Integrable Multiplicative Bounds]
\label{def: bounds_strong_disorder}
Let $f= f(\Phi)$ be a function of a supervector $\Phi = (\phi,\psi) \in \mathcal{S}$. 
We say that $f$ satisfies integrable multiplicative bounds (IMB) if for some $K \geq 0$, $  M \geq 1$ and $p \geq 0$ the following holds true:
\begin{equation}
\int \dd \phi \, \Big\| \Big(\prod_{\varepsilon = \pm} \prod_{ \sigma \in \s} \big(\phi^{\varepsilon}_{\sigma} \big)^{n^{\varepsilon}_{\sigma}}  \Big)f(\Phi) \Big\|\leq K \, M^{n} \, (n!)^{p}  \;,
\end{equation}
for all $ \{n^{\varepsilon}_{\sigma} \} \in \mathbb{N}^{\s \times \{\pm\}}$, having set $n := \sum_{\varepsilon, \sigma}n^{\varepsilon} _{\sigma}$.
\end{definition}

Throughout this section we will assume the following:
\medskip \medskip 

\noindent
$\,$ (H2-$\mathrm{I}$) $ \,$Let $\beta \geq 0$. The superfunction 
$ F_{\beta}(\Phi) = \ee^{\gamma^{-1}\beta \Phi^{+}\Phi^{-}} \hat{\nu}(\Phi^{+}\Phi^{-})$ satisfies \textcolor{white}{wwwwl} IMB for some $K,M$ and $p$.
\medskip \medskip

We believe that our analysis could be extended to the case of weakly positively correlated disorder if we make assumptions on the decay of $\nu$ that are stronger than (H2-$\mathrm{I}$), but still applicable to Gaussian disorder.

We can now state the main results of this section.


\begin{theorem}\label{thm: exponential_decay_strong_disorder}
Let $E \in \mathbb{R}$, $\theta \in [0,1)$ and set $z_{\pm} = E \mp \ii \beta$.
Assume that the matrix elements of the Hamiltonian decay as
\begin{equation}
\label{eq: decay_H}
\sum _{\sigma,\sigma' \in \s}\big | (H_{x,y})_{\sigma,\sigma'}\big| \leq \mathcal{C} \ee ^{-\alpha | x - y |}
\end{equation}
for some $\mathcal{C},\alpha >0$.
There exists a constant $C_{K,M,p,\theta} = C_{K,M,p,\theta}(E) >0$, depending also on $\mathcal{C}$ and $\alpha$, such that if $\gamma \geq C_{K,M,p,\theta} $ then:
\begin{equation}
\label{eq: decay_Greens_function_strong}
\sup _{\sigma,\sigma'}\Big| \Big(\G_{\Lambda}(x,y;z_{\pm}  \pm \ii \eps) \Big)_{\sigma,\sigma'} \Big|\leq \gamma^{-2 + \delta_{x,y}} \, (C_{K,M,p,\theta})^{2-\delta_{x,y}} \,\ee^{-\theta \alpha |x-y|} \;,
\end{equation}
uniformly in $\Lambda \subset \mathbb{Z}^{\mathrm{D}}$ and $0 \leq \eps \leq 1$.
The constant $C_{K,M,p,\theta}$ grows as $|E|^{|\s|/1+|\s|}$ at large $|E|$.
\end{theorem}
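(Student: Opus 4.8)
The plan is to start from the direct SUSY integral representation \eqref{eq: SUSY_greens_function_direct}, which at $\eps \geq 0$ and after the analytic continuation of Lemma~\ref{lemma: analytic_continuation_apriori} reads $\G_{\Lambda}(x,y;z_{\pm} \pm \ii\eps) = \pm\ii\gamma^{-1}\int \dd\mu^{\pm}_{\Lambda}(\Phi)\,F^{\Lambda}_{\pm\ii E - (\eps+\beta)}(\Phi)\,\psi^{-}_{x}\psi^{+}_{y}$. The overall structure is: (a) decouple the sites by a cluster expansion of the hopping term $\mu^{\pm}_{\Lambda}$, so that the bosonic/fermionic integral factorizes over connected components; (b) estimate each factor in the $\ell^{1}$-Grassmann norm using \eqref{eq:fermionic_norm_integral} and \eqref{L1_bound_norms}, controlling the single-site weights by the IMB hypothesis (H2-$\mathrm{I}$); and (c) sum over the resulting forest/tree structures with a tree-graph (Battle--Brydges--Federbush) bound, extracting one factor of $\gamma^{-1}$ per hopping line and the spatial decay $\ee^{-\theta\alpha|x-y|}$ from the exponential decay \eqref{eq: decay_H} of $H$.

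First I would interpolate the off-diagonal hopping: write $\mu^{\pm}_{\Lambda}(\Phi) = \exp(\mp\ii\gamma^{-1}\sum_{x,y}\Phi^{+}_{x}H_{x,y}\Phi^{-}_{y})$ and introduce interpolation parameters $s_{\ell}\in[0,1]$ on the edges $\ell$ of the complete graph on $\Lambda$, replacing $H_{x,y}$ for $x\neq y$ by $s_{\{x,y\}}H_{x,y}$. The Battle--Brydges--Federbush formula then expands the partition-function-like integral as a sum over forests $\mathcal{F}$ on $\Lambda$, where each tree edge $\{x,y\}$ contributes a derivative $\mp\ii\gamma^{-1}\Phi^{+}_{x}H_{x,y}\Phi^{-}_{y}$ integrated against the interpolated measure. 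This is presumably the content of the referenced Proposition~\ref{thm: cluster_expansion_green_function_strong}, which I would invoke to obtain a representation of $\G_{\Lambda}(x,y;\cdot)$ as a sum over connected trees $T$ containing both $x$ and $y$ (the insertion $\psi^{-}_{x}\psi^{+}_{y}$ forces $x,y$ into the same connected component), each weighted by a product of single-site integrals $\int\dd\Phi_{v}(\cdots)F_{\pm\ii E-(\eps+\beta)}(\Phi_{v})$ and of edge factors $H_{e}$.

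Next I would bound each tree contribution in norm. Using \eqref{L1_bound_norms} the whole integral is bounded by the product over sites of the $L^{1}$-Grassmann norms of the local factors; each edge $e=\{u,v\}$ carries $\gamma^{-1}|(H_{u,v})_{\sigma,\sigma'}|$ and attaches a monomial $\Phi^{+}_{u}\cdots\Phi^{-}_{v}$ to the adjacent vertices, while the two external legs $\psi^{-}_{x},\psi^{+}_{y}$ attach to $x,y$. The key point is that a vertex $v$ of degree $d_{v}$ in the tree carries a monomial of total degree $\sim d_{v}$ in $\Phi_{v}$, so its local bound is exactly of the form controlled by Definition~\ref{def: bounds_strong_disorder}: $\int\dd\phi_{v}\|\,(\text{monomial})\,F_{\beta}(\Phi_{v})\|\leq K\,M^{n_{v}}(n_{v}!)^{p}$ with $n_{v}\sim d_{v}$, after absorbing the bounded factor $\ee^{\gamma^{-1}|\pm\ii E-(\eps+\beta)||\s|}$ from \eqref{eq: bound_for_analytic_extension} into the constant. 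The factorials are then tamed by the combinatorial tree-counting: the number of trees with prescribed degree sequence is bounded by $\prod_{v}d_{v}!$ up to $C^{|T|}$, which cancels $\prod_{v}(n_{v}!)^{p}$ at the cost of a fixed constant per vertex, provided $\gamma$ is large enough that each $\gamma^{-1}$ from an edge beats the geometric growth $C^{|T|}M^{n_{v}}$. The spatial decay comes from splitting $\ee^{-\alpha|u-v|} = \ee^{-\theta\alpha|u-v|}\ee^{-(1-\theta)\alpha|u-v|}$ on every edge: the factors $\ee^{-\theta\alpha|u-v|}$ along the unique path from $x$ to $y$ in $T$ combine, via the triangle inequality, into $\ee^{-\theta\alpha|x-y|}$, while the residual $\ee^{-(1-\theta)\alpha|u-v|}$ on every edge makes the sum over tree shapes and positions convergent (this is the standard mechanism that turns summable decay into the one-dimensional nearest-neighbour-in-the-tree estimate).

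The main obstacle will be the convergence of the sum over clusters, i.e. showing that the per-vertex constant $C_{v}$ obtained by multiplying the IMB bound $KM^{n_{v}}(n_{v}!)^{p}$, the summability constant from $\sum_{y}\ee^{-(1-\theta)\alpha|x-y|}$, the factorial from tree-counting, and the $\gamma^{-d_{v}}$ from the incident edges, can be made smaller than $1$ uniformly in the degree $d_{v}$ by taking $\gamma\geq C_{K,M,p,\theta}$. Because the IMB bound contributes $(n_{v}!)^{p}$ and the number of labelled trees contributes another factorial in the degrees, I must check that the $\gamma^{-1}$ per edge, raised to the degree, dominates; this is exactly where the threshold $C_{K,M,p,\theta}$ and its growth $|E|^{|\s|/(1+|\s|)}$ enter, the $E$-dependence arising from optimizing the constant $\ee^{\gamma^{-1}|\pm\ii E-(\eps+\beta)||\s|}$ against the available $\gamma^{-1}$ factors. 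Finally, the prefactor $\gamma^{-2+\delta_{x,y}}$ is read off from the shortest tree: for $x\neq y$ the minimal connecting tree has at least one edge (giving $\gamma^{-1}$ beyond the overall $\gamma^{-1}$ in \eqref{eq: SUSY_greens_function_direct}, hence $\gamma^{-2}$), while for $x=y$ the single-vertex term survives (giving only the overall $\gamma^{-1}$), which accounts for the $\delta_{x,y}$ in the exponent. Uniformity in $\Lambda$ and in $0\leq\eps\leq1$ is automatic once all bounds are expressed through the $\eps$-independent constant in \eqref{eq: bound_for_analytic_extension} and the cluster sum converges.
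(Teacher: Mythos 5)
Your overall strategy coincides with the paper's: start from the direct SUSY representation, apply the BBF cluster expansion of Proposition~\ref{thm: cluster_expansion_green_function_strong} to reduce to a single tree containing $x$ and $y$, estimate the local factors in the Grassmann $\ell^{1}$-norm via the IMB hypothesis, and close with a tree-graph bound that extracts one $\gamma^{-1}$ per edge and the decay $\ee^{-\theta\alpha|x-y|}$ along the path joining $x$ to $y$. The origin of the prefactor $\gamma^{-2+\delta_{x,y}}$ and of the $E$-dependence of the threshold are also identified correctly.

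There is, however, a genuine gap at the step you yourself flag as the main obstacle. You propose to cancel the IMB factorials $(n_{v}!)^{p}$ (with $n_{v}\sim d_{v}$ the degree of vertex $v$) against the tree-counting combinatorics and then let the factors $\gamma^{-d_{v}}$ from the incident edges dominate what remains. Cayley's formula gives $(|\YY|-2)!/\prod_{v}(d_{v}-1)!$ trees with prescribed degrees; after the $(|\YY|-2)!$ is absorbed by the $1/N!$ of the polymer expansion this yields only \emph{one} inverse factorial per vertex, so the net local factor still carries $(d_{v}!)^{p-1}$, and no choice of $\gamma$ makes the geometric factor $\gamma^{-d_{v}}$ beat a factorial. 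As written, your accounting closes only for $p\leq 1$, whereas the theorem is stated for arbitrary $p\geq 0$. The missing ingredient is the paper's tree-stripping estimate \eqref{eq: detail_stripping_procedure}: when a vertex of degree $d$ is summed over \emph{distinct} lattice points weighted by the reserved decay $\ee^{-(1-\theta)\alpha|\cdot|/2}$, the distinctness constraint forces $\sum_{j}|\bar{x}-x_{j}|\gtrsim (d-1)^{1+1/\mathrm{D}}$, so the sum is bounded by $C_{q,\theta}^{\,d-1}\,(d!)^{-q}\,\|H_{\theta}\|_{\infty,1}^{d-1}$ for \emph{any} fixed $q$; choosing $q=p+1$ then beats the IMB factorials outright. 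You correctly set aside the fraction $(1-\theta)$ of the decay for summability, but you do not extract from it this superexponential-in-$d$ gain, and without it the sum over coordination numbers diverges. A minor additional slip: the analytically continued weight is $F_{\pm\ii E+\beta-\eps}=\ee^{\gamma^{-1}(\pm\ii E-\eps)\Phi^{+}\Phi^{-}}F_{\beta}$, not $F_{\pm\ii E-(\eps+\beta)}$; the sign of $\beta$ matters because it is $F_{\beta}$, not $F_{-\beta}$, that is assumed to satisfy IMB.
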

\begin{remark}
In particular, the theorem holds for $E \in \sigma(H)$ and $\eps = 0$, which is the most interesting range of parameters.
As soon as $E \pm \ii \eps$ is sufficiently away from the unperturbed spectrum, the result in Theorem \ref{thm: exp_decay_weak} is better suited and gives uniform estimates as $|E|$ or $\eps$ become large.
\end{remark}
\begin{corollary}
\label{thm: DOS_strong_disorder}
Let $\mathcal{B} \subset \mathbb{R}$ bounded. Assume that the matrix elements of the Hamiltonian decay as \eqref{eq: decay_H} for some $\mathcal{C},\alpha >0$. There exists a constant $C_{K,M,p} = C_{K,M,p}(\mathcal{B})>0$, depending also on $\mathcal{C}$ and $\alpha$, such that if $\gamma \geq C_{K,M,p} $ then $\rho(E)$ can be extended to an analytic function in $(\re E, \im E) \in \mathcal{B} \times (-\beta,\beta)$.
\end{corollary}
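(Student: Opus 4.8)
The plan is to exhibit $\rho(E)$ as a locally uniform limit of functions that are manifestly analytic in the strip, and then to invoke Vitali's convergence theorem. The starting point is the Hermitian symmetry of the averaged Green's function: since $H_{\omega,\Lambda}$ is self-adjoint, $\overline{\mathrm{Tr}_{\s}\,\G_{\Lambda}(0,0;z)} = \mathrm{Tr}_{\s}\,\G_{\Lambda}(0,0;\bar z)$ for all $z$, so that writing $\im w = (w-\bar w)/2\ii$ recasts the regularised LDOS for real $E$ as
\begin{equation}
\rho_{\eps,\Lambda}(E) = \frac{1}{2\pi \ii |\s|}\,\mathrm{Tr}_{\s}\big[\G_{\Lambda}(0,0;E+\ii\eps) - \G_{\Lambda}(0,0;E-\ii\eps)\big]\;.
\end{equation}
By Lemma~\ref{lemma: analytic_continuation_apriori}, which applies because (H2-$\mathrm{I}$) grants $F_{\beta}\in L^{1}(\mathcal{S},\mathscr{G})$, the two summands continue analytically to $\im E>-\beta$ and $\im E<\beta$ respectively; hence, for every finite $\Lambda$ and every $\eps\geq 0$, the right-hand side furnishes an analytic continuation of $\rho_{\eps,\Lambda}$ to the open strip $\{\re E\in\mathcal{B},\,|\im E|<\beta\}$, agreeing with the original real quantity on the real axis.

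Next I would establish uniform bounds on this family. Applying Theorem~\ref{thm: exponential_decay_strong_disorder} in the diagonal case $x=y=0$ (so $\delta_{x,y}=1$ and the prefactor is $\gamma^{-1}C_{K,M,p,\theta}$) and letting $\eps$ range over $[0,1]$, the evaluation points $z_{\pm}\pm\ii\eps$ sweep out a two-dimensional region of complex energies, giving a bound of size $|\s|\,\gamma^{-1}C_{K,M,p,\theta}$ uniformly in $\Lambda$ for energies near the real axis; in the complementary region, where the energy is far from the spectrum, one uses instead the elementary resolvent bound $\|(H_{\omega,\Lambda}-z)^{-1}\|\leq|\im z|^{-1}$. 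Crucially, since $C_{K,M,p,\theta}(E)$ grows only sublinearly in $|E|$ (like $|E|^{|\s|/(1+|\s|)}$) and $\mathcal{B}$ is bounded, I may set $C_{K,M,p}(\mathcal{B})$ to be the supremum of these constants over the relevant compact region; imposing $\gamma\geq C_{K,M,p}(\mathcal{B})$ then guarantees the hypotheses of the theorem throughout, and yields a family $\{\rho_{\eps,\Lambda}\}$ that is analytic and locally uniformly bounded on the strip.

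Finally I would remove the cutoffs by two applications of Vitali's theorem. For fixed $\eps>0$ the functions $E\mapsto\rho_{\eps,\Lambda}(E)$ are analytic and uniformly bounded on the strip and converge, as $\Lambda\nearrow\mathbb{Z}^{\mathrm{D}}$, for every real $E$ (convergence of the averaged Green's function off the real axis, a by-product of the convergent cluster expansion); since the real points accumulate inside the strip, Vitali upgrades this to locally uniform convergence to an analytic $\rho_{\eps}$ inheriting the same bound. Repeating the argument for $\{\rho_{\eps}\}_{\eps\in(0,1]}$, which converges as $\eps\to 0^{+}$ to $\rho(E)$ for almost every real $E$ (where the ergodic LDOS is defined), I conclude that $\rho$ extends to an analytic function on $\mathcal{B}\times(-\beta,\beta)$.

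The main obstacle I anticipate is the bookkeeping of the uniform bound rather than any deep new idea: one must verify that, as $(\re E,\eps)$ range over $\mathcal{B}\times[0,1]$, the $\pm$ branches of Theorem~\ref{thm: exponential_decay_strong_disorder} together with the resolvent estimate genuinely cover every compact subset of the open strip (this forces a mild constraint relating $\beta$ to the admissible range $\eps\leq 1$, comfortably met for $\beta\leq 1$), and that the continued $g^{\pm}$ stay controlled there. Once this geometry is in place the analytic-continuation and Vitali steps are essentially soft, the only other point meriting care being the justification of the $\Lambda\nearrow\mathbb{Z}^{\mathrm{D}}$ limit of $\G_{\Lambda}$ at real energies.
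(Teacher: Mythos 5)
Your proposal is correct and follows essentially the same route as the paper's proof: the same decomposition of $\rho_{\eps,\Lambda}$ into the two boundary values $\mathrm{Tr}_{\s}\G_{\Lambda}(z\pm\ii\eps)$, finite-volume analyticity from Lemma~\ref{lemma: analytic_continuation_apriori}, uniform bounds from Theorem~\ref{thm: exponential_decay_strong_disorder} with the constant taken as a supremum over $\mathcal{B}$, and Vitali--Porter to pass to the $\Lambda\nearrow\mathbb{Z}^{\mathrm{D}}$ and $\eps\to 0^{+}$ limits. Your extra care about covering the strip with the two branches plus the trivial resolvent bound, and your use of the cluster expansion rather than the second resolvent identity for the pointwise $\Lambda$-convergence, are only minor variations on the paper's argument.
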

\begin{remark}
Of course, this statement is meaningful if $\beta > 0$.
\end{remark}
\begin{proof}
Our proof is a rephrasing of the one provided in \cite{BovierKleinPerez}. For the sake of brevity, set $\G_{\Lambda}(z) := \G_{\Lambda}(0,0;z) \in \mathbb{C}^{\s \times \s}$.
The objective is to prove the analyticity of
\begin{equation}
\lim _{\eps \to 0^{+}} \lim _{\Lambda \nearrow \mathbb{Z}^{\mathrm{D}}} \Big(\mathrm{Tr}_{\s}\G_{\Lambda}(z+\ii \eps) -\mathrm{Tr}_{\s}\G_{\Lambda}(z -\ii \eps)
 \Big)
\end{equation}
for $z \in \mathcal{B} \times (-\beta,\beta) \subset \mathbb{C} $.
Since $F_{\beta}$ satisfies IMB, the hypothesis of Lemma~\ref{lemma: analytic_continuation_apriori} is satisfied and hence for any fixed $\Lambda \subset \mathbb{Z}^{\mathrm{D}}$ the function $\mathrm{Tr}_{\s}\G_{\Lambda}(z) $ is analytic on $\in\mathcal{B} \times (-\beta,\beta)$ and continuous on $\in\mathcal{B} \times [-\beta,\beta]$.
The hypotheses of Theorem~\ref{thm: exponential_decay_strong_disorder} are also satisfied: we use the theorem with $\theta = 0$, and with $\re z \in \mathcal{B}$. Define $C_{K,M,p} := \sup_{E \in \mathcal{B}} C_{K,M,p, \theta = 0}(E)$, $C_{K,M,p, \theta = 0}(E)$ being the constant in Theorem~\ref{thm: exponential_decay_strong_disorder}. As a consequence of the theorem, if $\gamma \geq C_{K,M,p} $ then uniformly in $z \in \mathcal{B} \times (-\beta,\beta)$, $\eps \geq 0$
and $\Lambda \subset \mathbb{Z}^{\mathrm{D}}$, the quantities $\mathrm{Tr}_{\s}\G_{\Lambda}(z + \ii \eps) $ and $\mathrm{Tr}_{\s}\G_{\Lambda}(z- \ii \eps) $ are bounded. Since by the second resolvent identity the limit
\begin{equation}
\lim _{\Lambda \nearrow \mathbb{Z}^{\mathrm{D}}} \mathrm{Tr}_{\s}\G_{\Lambda}(z\pm \ii \eps) = \mathrm{Tr}_{\s}\G_{\mathbb{Z}^{\mathrm{D}}}(z\pm \ii \eps)
\end{equation}
exists for any $\eps >0$, we can apply Vitali-Porter theorem and obtain that the convergence is uniform with $\mathrm{Tr}_{\s}\G_{\mathbb{Z}^{\mathrm{D}}}(z\pm \ii \eps)$ analytic in $z \in \mathcal{B} \times (-\beta,\beta)$, $\eps >0$ and continuous in $\eps \to 0^{+}$. Therefore, both
\begin{equation}
\lim _{\eps \to 0^{+}} \mathrm{Tr}_{\s}\G_{\mathbb{Z}^{\mathrm{D}}}(z+ \ii \eps) \;
\qquad
\lim _{\eps \to 0^{+}} \mathrm{Tr}_{\s}\G_{\mathbb{Z}^{\mathrm{D}}}(z- \ii \eps) 
\end{equation}
exist and are analytic in $z \in \mathcal{B} \times (-\beta,\beta)$, hence the claim.
\end{proof}
\begin{remark}
In \cite{SjostrandWang1} they also prove directly that the limit
\begin{equation*}
\lim _{\eps \to 0^{+}} \lim _{\Lambda\nearrow\mathbb{Z}^{\mathrm{D}}} \G_{\Lambda}(x,y;E \pm \ii \eps)
\end{equation*}
exists provided that either $\gamma$ or $E$ is large. We point out that the expansion presented in Proposition~\ref{thm: cluster_expansion_green_function_strong} and the estimates in the proof of Theorem~\ref{thm: exponential_decay_strong_disorder}
allow us to prove that uniformly in $0 \leq \eps \leq 1$, $\big\{\big(\G_{\Lambda}(x,y;E \pm \ii \eps)\big)_{\sigma,\sigma'}\big\}_{\Lambda}$ is a Cauchy sequence in $\Lambda$, at any fixed $x,y \in \mathbb{Z}^{\mathrm{D}}$, $\sigma,\sigma' \in \s$ and $E \in \mathbb{R}$. 
\end{remark}

The SUSY cluster expansion presented in Proposition~\ref{thm: cluster_expansion_green_function_strong} is crucially based on the following important result in statistical mechanics.
\begin{theorem}[Battle-Brydges-Federbush representation]
\label{thm: FBB_representation}
For any $\XX \subset \Lambda$ let $V_{\XX}:= 1/2\sum_{x,y \in \XX} v_{x,y}$, where $v_{x,y} = v_{y,x}$ is an even element of a Grassmann algebra.
The following representation holds true:
\begin{equation}
\begin{split}
& \ee^{V_{\XX}} = \sum_{\Pi \in \mathrm{Part}(\XX)} \,  \,\prod_{\YY \in \Pi} \, K(\YY) \;,
\end{split}
\end{equation}
where the sum is over the partitions of $\XX$ and where $K(\YY) = \ee^{V_{\YY}}$ if $|\YY| = 1$, otherwise
\begin{equation}
\label{eq: BBF_polymer_formula}
K(\YY) = \sum_{\textsc{T} \mathrm{\,on\,}
 \YY} \, \bigg(\prod_{\{x,y\} \in \textsc{T}} \,v_{x, y} \bigg)\,\int \mathrm{d}p_{\textsc{t}}(s) \, \ee^{V_{\YY}(s)}\;.
\end{equation}
For $s = (s_{x,y}) \in [0,1]^{\mathcal{P}(\YY)}$, $ \mathcal{P}(\YY)$ being the set of unordered pairs in $\YY$, we have defined
\begin{equation}
V_{\YY}(s):= \frac{1}{2} \sum_{x,y \in \YY} s_{x,y} \, v_{x,y} \;,
\end{equation}
and denoted by $\dd p_{\textsc{t}}$ a probability measure supported on $s \in [0,1]^{\mathcal{P}(\YY)}$ such that $V_{\YY}(s)$ is a convex decoupling of $V_{\YY}$, that is, it is a convex linear combination of quantities of the form $\sum_{\yy' \in \Pi} V_{\yy'}$, with $\Pi$  being a partition of $\YY$.
\end{theorem}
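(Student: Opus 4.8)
The plan is to reduce the statement to the classical Battle--Brydges--Federbush identity for commuting scalars and then transcribe the standard interpolation proof. The first observation is that each $v_{x,y}$ is an \emph{even} element of the Grassmann algebra, hence all the $v_{x,y}$ commute with one another and lie in the commutative (finite-dimensional) subalgebra of even elements, which together with $\|\cdot\|$ is a commutative Banach algebra (Remark~\ref{rmk:banach_algebra}). Consequently $\ee^{V_\XX}$ is a norm-convergent power series, the maps $s \mapsto \ee^{V_\YY(s)}$ are real-analytic in the interpolation parameters, and the fundamental theorem of calculus, differentiation under the integral sign, and Fubini all apply verbatim. Thus no genuinely non-commutative phenomenon occurs, and it suffices to prove the identity treating the $v_{x,y}$ as commuting indeterminates.

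Second, I would introduce for each unordered pair $\{x,y\}\subset \XX$ an interpolation parameter $s_{x,y}\in[0,1]$ and set $V_\XX(s):=\frac12\sum_{x,y} s_{x,y} v_{x,y}$ (with $s_{x,x}\equiv 1$, so that the diagonal self-terms are never switched off), and then run the Brydges--Kennedy inductive interpolation: fix a vertex $x_0\in\XX$, scale the bonds joining $x_0$ to $\XX\setminus\{x_0\}$ by a single parameter $t\in[0,1]$, and write $\ee^{V_\XX}=\ee^{V_\XX|_{t=0}}+\int_0^1\dd t\,\frac{\dd}{\dd t}\ee^{V_\XX(t)}$. The derivative brings down $\sum_{y}v_{x_0,y}$, thereby creating a first bond out of $x_0$; iterating the procedure on the remaining, partially decoupled system generates at each step exactly one new bond and never closes a loop. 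Carried out systematically, this is the forest expansion: one obtains $\ee^{V_\XX}=\sum_{F}\big(\prod_{\{x,y\}\in F} v_{x,y}\big)\int \dd w_F(s)\,\ee^{V_\XX(s)}$, the sum running over all forests $F$ on $\XX$ and $\dd w_F$ being an explicit probability measure on $[0,1]^{\mathcal{P}(\XX)}$ under which $V_\XX(s)$ is a convex decoupling of $V_\XX$.

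Third, I would reorganize the forest sum into the asserted partition-times-tree form. A forest $F$ is the disjoint union of its connected components; the vertex sets of these components, together with the isolated vertices, form a partition $\Pi\in\mathrm{Part}(\XX)$, and conversely a forest is the datum of a partition $\Pi$ together with a spanning tree $\textsc{T}$ on each block $\YY\in\Pi$ of size $\geq 2$. The essential structural property of the Brydges--Kennedy measure is that it \emph{factorizes across connected components}: parameters joining two different blocks are set to $0$, and the restriction of $\dd w_F$ to each component is the component measure $\dd p_{\textsc{t}}$. Using this factorization and the fact that $\ee^{V_\XX(s)}=\prod_{\YY\in\Pi}\ee^{V_\YY(s)}$ once all inter-block couplings vanish, the forest sum collapses into $\sum_{\Pi}\prod_{\YY\in\Pi}K(\YY)$, with $K(\YY)$ equal to the tree sum \eqref{eq: BBF_polymer_formula} for $|\YY|\geq 2$ and to $\ee^{V_\YY}$ for singletons (where only the empty tree contributes), which is the claim.

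The main obstacle is the second step: establishing the interpolation identity together with the two properties of the measures that make the third step work, namely that each $\dd w_F$ (and hence each $\dd p_{\textsc{t}}$) is a probability measure realizing a convex decoupling, and that $\dd w_F$ factorizes over the connected components of $F$. I would prove all of this simultaneously by induction on $|\XX|$, tracking the measure explicitly through each application of the fundamental theorem of calculus: the normalization and convex-combination properties are preserved because the single-parameter splitting $\ee^{V_\XX|_{t=0}}+\int_0^1\dd t\,(\cdots)$ is itself a convex decomposition, while factorization is automatic once a vertex has been fully decoupled. The Grassmann nature of the $v_{x,y}$ plays no role here beyond the commutativity already secured in the first step.
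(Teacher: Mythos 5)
Your proposal is correct and follows essentially the same route as the paper: both reduce the statement to the classical commutative interpolation argument of Brydges by observing that even Grassmann elements commute (so that the exponential, the fundamental theorem of calculus and the product rule behave as in a commutative Banach algebra, the Grassmann coefficients being smooth in the interpolation parameters), after which the tree/forest combinatorics is the standard one. The only cosmetic differences are that you spell out the forest-to-(partition times tree) reorganization and the factorization of the measure, which the paper delegates to the cited reference, and that you label the scheme Brydges--Kennedy while actually describing the iterative BBF decoupling; neither affects correctness.
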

The proof of this statement can be found in Appendix B of \cite{Brydges84}. See also \cite{BattleFederbush,BrydgesKennedy,AbdesselamRivasseau}
for other proofs of the Battle-Brydges-Federbush formula or generalizations of it.
Technically speaking, the proof in \cite{Brydges84} is stated for the case of $v_{x,y} \in \mathbb{C}$, however, as we briefly show below, the strategy can be applied to the case in which $v_{x,y}$ is an even element of a Grassmann algebra. 

In the proof, one iteratively decouples clusters of increasing size from the whole $\XX$ by means of an interpolation formula, see \cite{Brydges84}. As it turns out, we only need to check the validity of such interpolation formula in our setting, the rest of the proof being of combinatoric nature. For example, we discuss the first decoupling step. Let $x_{1} \in \XX$ and for $s_{1} \in [0,1]$ define 
\begin{equation}
W(\{x_{1}\};s_{1}):= \frac{1}{2}\sum_{x,y \in \XX} s_{1}(\{x,y\}) \, v_{x,y}  \;,
\end{equation}
together with
\begin{align}
s_{1}(\{x,y\})  :=  \begin{cases} s_{1}  &\qquad \text{if $\{ x,y\}$ couples across $\partial \{x_{1}\}$} \;,
\\
1  & \qquad\text{otherwise} \;,
\end{cases}
\end{align}
where $\{ x,y\}$ is said to couple across $\partial \YY$, $\YY \subset \Lambda$, if $\{x,y\} \cap \YY$ and $\{x,y \} \cap \YY^{c}$ are both non-empty.
Clearly, $W(\{x_{1}\},1) = V_{\XX}$ while $W(\{x_{1}\},0)$ has $x_{1}$ completely decoupled.

The interpolation formula is a simple application of the fundamental theorem of calculus, see formula (B.1) in \cite{Brydges84}:
\begin{equation}
\label{interpolation_formula}
\begin{split}
\ee^{V_{\xx}} & =  \int_{0}^{1} \dd s_{1} \, \frac{\partial}{\partial s_{1}} \ee^{W(\{x_{1}\},s_{1})} +\ee^{W(\{x_{1}\},0)} 
\\ 
& = \sum_{\substack{ \{x,y \} \subset \XX: \\ \{x,y \} \text{ couples across } \partial \{x_{1}\}  }} v_{x,y} \;\int_{0}^{1} \dd s_{1} \,  \ee^{W(\{x_{1}\},s_{1})}  +\ee^{W(\{x_{1}\},0)}  \;.
\end{split}
\end{equation}
The first equality holds true also when the Gibbs' weight $\exp(W(\{x_{1}\},1))$ is Grassmann-valued because its Grassmann coefficients are smooth in $s_{1}$. On the other hand, the second equality is true because $\{v_{x,y}\}_{x,y \in \XX}$ commute, being even element of a Grassmann algebra, and this shows that the proof in \cite{Brydges84} is valid in our setting as well.
In passing, we also notice that such interpolation formula is valid more generally if $\{v_{x,y}\}_{x,y \in \XX}$
belong to a commutative Banach algebra, see also \cite{AbdesselamRivasseau}, where more general formulas are covered.

%
%
\begin{remark}
\label{rmk: BBF_representation}
%
We say that $V_{\XX}$ is stable if the real part of $(V_{\XX})_{\emptyset}$ is non-positive. If $V_{\XX}$ is stable also its convex decouplings are stable: this property is crucial in order to have well-defined integrals.
All the cases we consider below have this stability property provided that $\epsilon \geq 0$. For example, $V_{\XX} = \ii \sum_{x,y \in \XX} \Phi^{+}_{x} \big( H_{x,y} + (E +\ii \eps) \delta_{x,y} \big) \Phi^{-}_{y}$ 
and $\hat{V}_{\XX} = \ii \sum_{x,y \in \XX}  \xi_{x}^{+} C_{x,y}^{E +\ii \eps}\xi_{y}^{-}$
with  $\eps \geq 0$ are stable, in fact, $\re (V_{\XX})_{\emptyset} = \re \ii \sum_{x,y \in \XX} \phi^{+}_{x} \big( H_{x,y} + (E+\ii\eps) \delta_{x,y} \big) \phi^{-}_{y} \leq 0$ and $\re (\hat{V}_{\XX})_{\emptyset} = \re \ii \sum_{x,y \in \XX} \kappa^{+}_{x} C^{E+\ii \eps}_{x,y} \kappa^{-}_{y} \leq 0$. Incidentally, notice that this particular $V_{\XX}(s)$  is stable for any $s \in [0,1]^{\mathcal{P}(\YY)}$.
%
\end{remark}
\begin{remark}
With abuse of notation we shall introduce the ``empty tree'' and denote it by $\emptyset$, as the only tree-graph on a set $\YY$ with $|\YY| = 1$. Furthermore, $\int \dd p_{\textsc{t} = \emptyset}(s)$ is a complicated way to denote the multiplication identity.
Thus, if $|\YY|=1$ we define
\begin{equation}
\label{eq: empty_tree}
\ee^{V_{\YY}} =: \sum_{\textsc{T} \mathrm{\,on\,}
 \YY} \, \bigg(\prod_{\{x,y\} \in \textsc{T}} \,v_{x, y} \bigg)\,\int \mathrm{d}p_{\textsc{t}}(s) \, \ee^{V_{\YY}(s)}\;.
\end{equation}
\end{remark}

We apply the BBF representation to the super Gibbs' weights $\mu^{\pm} _{\Lambda}(\Phi)$.  
The expansion is well-suited for our purposes because it exploits the smallness of $\gamma^{-1}$: it formalises the fact that $\mu^{\pm} _{\Lambda}(\Phi)$ is somewhat close to one.
The expansion can also be considered as an improvement of the simple Taylor expansion of $\mu^{\pm} _{\Lambda}(\Phi)$.
\begin{proposition}
\label{thm: cluster_expansion_green_function_strong}
Let $E\in \mathbb{R}$ and set $z_{\pm} : = E\mp \ii \beta$. The following representation holds true for any $\eps \geq 0$:
\begin{eqnarray}
\label{eq: cluster_expansion_green_function_strong}
&& \!\!\!\!\!\!\! \G_{\Lambda}(x,y; z_{\pm} \pm \ii \epsilon) = \big(\pm\ii \gamma^{-1}\big)^{2-\delta_{x,y}}\sum_{N = 0}^{|\Lambda'|}
\big(\pm\ii \gamma^{-1}\big)^{N} g_{N}(x,y; z_{\pm} \pm \ii \epsilon)
\\
&& \!\!\!\!\!\!\! g_{N}(x,y; z_{\pm} \pm \ii \epsilon):= \frac{1}{N!}
 \sum_{\substack{\textsc{t} \,\,\, \mathrm{on} \\
 \,\{1,...,N+2 - \delta_{x,y}\}}} \,\,
\sum_{\substack{x_{1},...,x_{N} \in \Lambda' \\ \mathrm{distinct}}}  
\int \dd p_{\textsc{t}}(s) \;
\\ \nonumber
&&
\qquad \qquad \qquad \qquad  \;
\int \dd \mu_{\yy}^{\pm}(\Phi,s) \,
 \Big(\prod_{\{i,j\} \in \textsc{t}} v_{x_{i},x_{j}}(\Phi)\Big) \,\, F^{\yy}_{\pm \ii E- \eps}(\Phi)\, \psi^{-}_{x} \, \psi^{+}_{y}\,, \;
 \label{eq: g_N_definition}
\end{eqnarray}
where we have set $\dd \mu^{\pm}_{\yy}(\Phi,s) : = \dd \Phi_{\yy} \, \mu^{\pm}_{\yy}(\Phi,s)$ and 
\begin{equation}
\label{eq: definition_mu_v_strong_disorder}
\begin{split}
&  \mu^{\pm}_{\yy}(\Phi,s) : = \ee ^{ \pm \frac{\ii}{2 }\gamma^{-1}\sum_{x,y \in \yy} s_{x,y} v_{x,y}(\Phi)} \;,
\\
& v_{x,y}(\Phi): =  - \Phi_{x}^{+}H_{x,y}\,\Phi_{y}^{-}  - \Phi_{y}^{+}H_{y,x}\,\Phi_{x}^{-}\;.
\end{split}
\end{equation}
with $s = (s_{x,y}) \in [0,1]^{\mathcal{P}(\YY)}$ and $\dd p_{\textsc{t}}(s)$ being a probability measure.
Furthermore, we have set $x_{N+1} := x$ and $x_{N+2} := y$, $\sY := \{x_{1},...,x_{N+2} \}$ and $\Lambda' := \Lambda \setminus \{x,y\}$.
\end{proposition}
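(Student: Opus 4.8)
The plan is to start from the direct SUSY integral of Proposition \ref{prop: SUSY_representation}(i), analytically continued to the line $\im E = -\beta$ by Lemma \ref{lemma: analytic_continuation_apriori} (whose hypothesis $F_\beta \in L^1(\mathcal{S},\mathscr{G})$ is guaranteed by (H2-$\mathrm{I}$)), and to apply the Battle--Brydges--Federbush formula to the super Gibbs' weight $\mu^{\pm}_{\Lambda}$. First I would recognise that $\mu^{\pm}_{\Lambda}(\Phi)=\ee^{V_{\Lambda}}$ with $V_{\Lambda}=\tfrac12\sum_{x,y\in\Lambda}\tilde v_{x,y}$ and $\tilde v_{x,y}:=\pm\ii\gamma^{-1}v_{x,y}(\Phi)$, where $v_{x,y}(\Phi)$ is the symmetric quadratic form in \eqref{eq: definition_mu_v_strong_disorder}. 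Since each $v_{x,y}(\Phi)$ is an even element of the Grassmann algebra and $v_{x,y}=v_{y,x}$, Theorem \ref{thm: FBB_representation} applies verbatim and expresses $\mu^{\pm}_{\Lambda}$ as a sum over partitions $\Pi$ of $\Lambda$ of $\prod_{\yy\in\Pi}K(\yy)$, each $K(\yy)$ carrying a sum over spanning trees of $\yy$, the edge factors $\tilde v$, and the convex-decoupling $s$-integral against $\dd p_{\textsc{t}}$.

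Next I would insert this partition sum into the superintegral and factorise over blocks: since $F^{\Lambda}=\prod_{z\in\Lambda}F(\Phi_z)$, $K(\yy)$ depends only on $\{\Phi_z:z\in\yy\}$, the insertion $\psi^-_x\psi^+_y$ is local, and $\int\dd\Phi_\Lambda=\prod_z\int\dd\Phi_z$, the integral becomes a product over $\yy\in\Pi$ of block integrals. The heart of the argument is to show that only one partition type survives. A block $\yy$ that contains neither $x$ nor $y$ gives a supersymmetric integrand (each $v_{z,z'}$, hence $\ee^{V_\yy(s)}$, and each $F$ are $Q$-closed, and $Q$ is a derivation), so the SUSY localization formula (Proposition \ref{prop: SUSY_localization_theorem}) collapses it to its value at $\Phi_\yy=0$; for $|\yy|=1$ this value is $F(0)=\hat\nu(0)=1$, while for $|\yy|\ge 2$ the nonempty spanning tree forces at least one factor $\tilde v_{z,z'}(0)=0$ and the block vanishes. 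A block containing exactly one of $x,y$ carries a single unpaired Grassmann insertion $\psi^-_x$ (resp.\ $\psi^+_y$); the fermionic $U(1)$ change of variables $\psi^\varepsilon_{z,\tau}\mapsto\ee^{\ii\varepsilon\theta}\psi^\varepsilon_{z,\tau}$, which leaves $\dd\Phi_\yy$, $K(\yy)$ and the $F$'s invariant, multiplies such a block integral by $\ee^{\mp\ii\theta}$ and hence forces it to vanish. Therefore $x$ and $y$ must lie in the same block and every other block is a vacuum singleton contributing $1$.

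Finally I would assemble the surviving contribution. The single block $\yy\supset\{x,y\}$ retains the tree sum, edge factors and $s$-integral; pulling the scalar $\pm\ii\gamma^{-1}$ out of each of the $|\yy|-1$ tree edges turns $\prod\tilde v$ into $\prod v$ and $\ee^{V_\yy(s)}$ into $\mu^\pm_\yy(\Phi,s)$, producing $(\pm\ii\gamma^{-1})^{|\yy|-1}$, which combines with the overall prefactor $\pm\ii\gamma^{-1}$ from the representation to give $(\pm\ii\gamma^{-1})^{2-\delta_{x,y}+N}$ once we set $|\yy|=N+2-\delta_{x,y}$. Writing $\yy\setminus\{x,y\}$ as an ordered tuple of distinct points $x_1,\dots,x_N\in\Lambda'$ compensated by $1/N!$, relabelling the tree to live on $\{1,\dots,N+2-\delta_{x,y}\}$ with $x_{N+1}=x$ and $x_{N+2}=y$, and using the empty-tree convention \eqref{eq: empty_tree} to absorb the singleton $\yy=\{x\}$ case when $x=y$, yields precisely \eqref{eq: cluster_expansion_green_function_strong}--\eqref{eq: g_N_definition}; the series terminates at $N=|\Lambda'|$ because the points are distinct.

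I expect the main obstacle to be the rigorous vanishing of the ``wrong'' blocks: verifying the hypotheses of the SUSY localization formula for the vacuum blocks (supersymmetry of the full integrand for each fixed $s$ and tree, together with the $L^1$/decay conditions supplied by (H2-$\mathrm{I}$) and the stability of $V_\yy(s)$ from Remark \ref{rmk: BBF_representation}), and making the $U(1)$ change-of-variables argument precise for the single-endpoint blocks. A secondary technical point is justifying all the interchanges of the (finite) partition sum, the Berezin and bosonic integrations, and the $s$-integral by Fubini--Tonelli, using the norm bounds \eqref{L1_bound_norms}, \eqref{eq:fermionic_norm_integral} and the uniform control of $\|\ee^{V_\yy(s)}\|$.
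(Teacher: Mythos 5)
Your proposal is correct and follows essentially the same route as the paper: BBF expansion of the super Gibbs weight, factorisation over blocks, SUSY localization to kill vacuum blocks of size at least two, and the same combinatorial reassembly into a sum over trees on polymers containing $\{x,y\}$ with the $1/N!$ over distinct points. The only cosmetic difference is that you dispose of blocks containing exactly one of $x,y$ via a fermionic $U(1)$ phase argument, whereas the paper uses parity of the Berezin integral; both are valid, and the technical points you flag as the main obstacle (verifying the $L^{1}$ hypotheses of Proposition~\ref{prop: SUSY_localization_theorem} via the IMB and Lemma~\ref{lemma: decomposition_Q_operator} when the derivative hits $F$) are exactly the ones the paper's proof addresses.
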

\begin{remark}
It is important to notice that the expansion of $\G_{\Lambda}$ is in terms of connected graphs, specifically tree-graphs, while usually one would expect an expansion in forests \cite{AbdesselamRivasseau}.
This is ultimately a consequence of supersymmetry which bypasses the need for logarithms.
\end{remark}
\begin{remark}
Notice that only the empty-tree term contributes to the quantity $g_{0}(x,x;z_{\pm}\pm \ii \eps)$, for which definition \eqref{eq: empty_tree} applies.
\end{remark}
\begin{proof}
Since $F_{\beta}$ satisfies IMB, it decays in norm faster than any power of $\phi$. In particular, $F_{\beta} \in L^{1}(\mathcal{S},\mathscr{G})$ and we can make sense of the SUSY integral in Eq.~\eqref{eq: SUSY_greens_function_direct} when $E \to z_{\pm}$, $\eps \geq 0$.
Since $v_{x,y}(\Phi)$ are elements of a Grassmann algebra, we apply Theorem~\ref{thm: FBB_representation} to $\mu^{\pm} _{\Lambda} =: \ee^{V_{\Lambda}^{\pm}}$ and after simple manipulations we obtain the following polymer expansion:
\begin{eqnarray}
&&\int \dd \mu^{\pm} _{\Lambda}(\Phi) F^{\Lambda}_{\pm \ii z_{\pm} - \eps}(\Phi) \psi^{-}_{x} \psi^{+}_{y}
= \sum_{\Pi \in \mathrm{Part}(\Lambda)} \prod_{\YY \in \Pi} 
K(\YY)\;, \label{eq: polymer_expansion_strong_disorder}
\end{eqnarray}
where 
\begin{equation}
\label{eq: K_fomula_proof}
\begin{split}
 K(\YY)  = &\sum_{\textsc{T} \mathrm{\,on\,} \YY} \,
 \, \, \int \dd p_{\textsc{t}}(s) \, \int \dd \mu ^{\pm}_{\yy} (\Phi,s) 
 \\
 & \;\Big(\prod_{\{x',y'\} \in \textsc{T}} 
\pm\ii \gamma^{-1} \,v_{x',y'}(\Phi)\Big) F^{\yy}_{\pm \ii z_{\pm}-\eps}(\Phi) \, \big(\psi^{-}_{x}\big)^{\I_{\yy}(x)} \big(\psi^{+}_{y}\big)^{\I_{\yy}(y)}\;,
\end{split}
\end{equation}
$\I_{\yy}(\cdot)$ denoting the indicator function of $\YY$ and $(\psi^{\varepsilon}_{x,\sigma})^{0} = 1$ by convention. 
The final expression in \eqref{eq: K_fomula_proof} has been obtained by swapping the integration over the superfields $\Phi$ with the integration over $s$, which we can do since uniformly in $s$ the superintegral is bounded. This swap is mainly a matter of taste. 
Notice that by the BBF formula $\dd p_{\textsc{t}}(s)$ is supported on $s$ such that $V_{\XX}^{\pm}(s)$ are convex decouplings of $V_{\XX}^{\pm}$. However, since $V_{\XX}^{\pm}(s)$ are stable for any $s \in [0,1]^{\mathcal{P}(\XX)}$, this information is superfluous in this special case.

We shall now make the expansion in \eqref{eq: polymer_expansion_strong_disorder} simpler by symmetry considerations. To begin, we notice that $\mu ^{\pm}_{\yy} (\cdot,s) $, $v_{x,y}$ and $F^{\yy}_{\pm \ii z_{\pm}-\eps}$ are even and supersymmetric, see also discussion before Proposition~\ref{prop: SUSY_representation}.
Thus, by parity $K(\YY) = 0$ unless $\YY \cap \{x,y\}$ is either $\{x,y \}$ or $\emptyset$. 
On the other hand, if $\YY \cap \{x,y\} = \emptyset$, the superfunction to be integrated with respect to $\int \dd \Phi_{\yy}$ is supersymmetric. If we show that such integrand satisfies the hypotheses of the SUSY localization formula, then $K(\YY) = \delta_{1,|\YY|}$ whenever $\YY \cap \{x,y\} = \emptyset$, since in fact $\mu ^{\pm}_{\yy} (0,s) = F^{\yy}_{\pm \ii z_{\pm}-\eps}(0) =1$ and $v_{x,y}(0) = 0$. Thus, we are left with proving that for any $(x,\sigma,\varepsilon) \in \YYY \times \{ \pm\}$ the superfunction
\begin{equation}
\label{eq: integrand}
\Phi \mapsto \frac{\psi^{\varepsilon}_{x,\sigma}}{(1+|\phi^{\varepsilon}_{x,\sigma}|)}
\frac{\partial}{\partial \phi^{\varepsilon}_{x,\sigma}} \bigg(
\mu ^{\pm}_{\yy} (\Phi,s) \Big(\prod_{\{x',y'\} \in \textsc{T}} 
v_{x',y'}(\Phi)\Big)
  F^{\yy}_{\pm \ii z_{\pm}-\eps}(\Phi)\bigg)
\end{equation}
is in $L^{1}(\mathcal{S}^{\YY},\mathscr{G}^{\YY})$, see Proposition \ref{prop: SUSY_localization_theorem}. When the derivative acts on $\mu ^{\pm}_{\yy} (\cdot,s)$ or $v_{x',y'}$ no dangerous terms are generated and the integrability follows once more because $F_{\beta}$ satisfies IMB. If, instead, the derivative acts on $F_{\pm \ii z_{\pm} - \eps}^{\yy}$, since $F_{z}$ is invariant under $U(1)^{\times \s}$ fermionic transformations, Lemma \ref{lemma: decomposition_Q_operator} implies the identity
\begin{equation}
\frac{\psi^{\varepsilon}_{\sigma}}{(1+|\phi^{\varepsilon}_{\sigma}|)}
\frac{\partial}{\partial \phi^{\varepsilon}_{\sigma}}F_{z}(\Phi) =-\varepsilon 
\frac{\phi^{-\varepsilon}_{\sigma}}{(1+|\phi^{\varepsilon}_{\sigma}|)}
\frac{\partial}{\partial \psi^{-\varepsilon}_{\sigma}} F_{z}(\Phi) \;,
\end{equation}
and hence integrability of \eqref{eq: integrand} because $F_{\beta}$ satisfies IMB.

The previous considerations imply that the sum in Eq.~\eqref{eq: polymer_expansion_strong_disorder} is really just
\begin{equation}\label{eq: trimmed_polymer_expansion_strong_disorder}
\begin{split}
\sum_{\yy \supset \{x,y\}} K(\YY)
&= \sum_{N = 0}^{|\Lambda'|} \; \sum_{\substack{\yy \supset \{x,y\} \\ |\yy| = N}} K(\YY)
\\
 &= \sum_{N = 0}^{|\Lambda'|} \frac{1}{N!}\sum_{\substack{ x_{1},\dots,x_{N} \in \Lambda'\\ \mathrm{distinct}}} K(\{x_{1},\dots,x_{N},x,y \}) \;.
\end{split}
\end{equation}
The $1/N!$ factor is needed because each set $Y$ is counted that many times in the sum over the \textit{distinct} points $x_{1},\dots,x_{N}$.
We then set $x_{N+1} := x$, $x_{N+2} := y$ and swap the sum over distinct points with the sum over trees on $\YY$, the latter becoming a sum over trees on $\{1,\dots,N+2 - \delta_{x,y} \}$. Notice that if $x = y$, then the set $\YY$ contains only $N+1$ points.
\end{proof}
Before delving into the proof of Theorem \ref{thm: exponential_decay_strong_disorder}, we want to make a general remark about the strategy. In interacting fermionic systems determinant bounds provide a useful tool to control the convergence of the perturbative expansion \cite{GawedzkiKupiainen,Lesniewski,
GentileMastropietro,Salmhofer,Mastropietro}.
We avoid this type of bounds and instead follow a different approach, based on a combination of the BBF formula (through Proposition~\ref{thm: cluster_expansion_green_function_strong}) and Grassmann norms. 
%

\subsection*{Proof of Theorem \ref{thm: exponential_decay_strong_disorder}}
\label{sec:proof_theorem}
Since $F_{\beta}$ satisfies IMB we can apply Proposition~\ref{thm: cluster_expansion_green_function_strong}.
We will prove that for some constant $\overline{C}_{K,M,p,\theta} >1$ not depending on $E$ the following bound holds true:
\begin{multline}
\label{ineq: g_N_bound}
\sup _{\sigma,\sigma'}\big| \big(g_{N}(x,y; z_{\pm} \pm \ii \eps)\big)_{\sigma,\sigma'} \big| 
\\
\leq \big( 1+ \gamma^{-|\s|}(1+|E|)^{|\s|} \big)^{|\YY|}  \, (\overline{C}_{K,M,p,\theta})^{| \YY|} \,\alpha^{-\mathrm{D} \,N } \,\ee^{-\theta \alpha |x - y|} \;.
\end{multline}
By plugging this bound into the expansion of Proposition~\ref{thm: cluster_expansion_green_function_strong}
we see that if
\begin{equation}
\gamma \geq 4 \,\overline{C}_{K,M,p,\theta}\,\big(\alpha^{-\mathrm{D}} + \alpha^{-\mathrm{D}/1+|\s|} \big)  (1+|E|)^{|\s|/1+|\s|}
\end{equation}
the expansion is convergent and the claim of the theorem ensues. It appears that the constant $C_{K,M,p,\theta}$ grows like $|E|^{|\s|/1+|\s|}$ at large $|E|$.

To prove \eqref{ineq: g_N_bound} we proceed as follows. Let us consider a polymer $\YY \ni x,y$ and an oriented tree $\textsc{T}_{\wp}$ on $\YY$, that is, a tree-graph whose links are oriented, $\wp$ denoting the choice of the orientations. We denote by $\ell^{+}$ ($\ell^{-}$) the starting (ending) vertex of the oriented link $\ell \in \textsc{T}_{\wp}$. Links have to be oriented in order to select one of the two elements in $v_{\ell}(\Phi)$, see in Eq.~\eqref{eq: definition_mu_v_strong_disorder}. In other words, in Eq.~\eqref{eq: g_N_definition} we want to sum over oriented trees instead of trees and for each link keep one of the two terms in $v_{\ell}(\Phi)$ depending on the orientation $\wp$.
We thus introduce the sequences $\underline{\sigma} = \big\{ \sigma^{\varepsilon}_{\ell} \in \s \big\}_{\ell \in \textsc{T},\,\varepsilon = \pm}$ and $\underline{\sharp} = \big\{ \sharp_{\ell} \in \{B,F\} \big\}_{\ell \in \textsc{T}}$. Also, we set $\Phi^{\varepsilon}_{B,x,\sigma} = \phi^{\varepsilon}_{x,\sigma}$ and $\Phi^{\varepsilon}_{F,x,\sigma} = \psi^{\varepsilon}_{x,\sigma}$, and we shall henceforth write $|\YY|$ instead of $N+2 - \delta_{x,y} = |\YY|$. We define:
\begin{equation}
\begin{split}
\mathcal{F}_{\textsc{t}_{\wp},\,\underline{\sharp}, \,\underline{\sigma}}^{\sy}(\Phi) &:=
\Big ( \prod_{\ell \in \mathrm{T}_{\wp}} \,\Phi^{+}_{\sharp_{\ell},x_{\ell^{+}},\sigma_{\ell}^{+}} \, \Phi^{-}_{\sharp_{\ell}, \,x_{\ell^{-}},\sigma_{\ell}^{-}} \Big )
\, \psi^{-}_{x} \psi^{+}_{y} \,F_{\pm \ii z_{\pm} - \eps}^{\yy}(\Phi) \;.
\end{split}
\end{equation}
All in all, we can rewrite Eq.~\eqref{eq: g_N_definition} as
\begin{multline}
\label{eq: polarised_tree_expansion_green_function_strong}
g_{N}(x,y;z_{\pm} \pm \ii \eps) =  \frac{1}{N!}\sum_{\substack{\textsc{t} \,\,\, \mathrm{on} \\
 \,\{1,...,|\yy|\}}} \,\sum_{\wp, \, \underline{\sharp}, \,\underline{\sigma}}\,\,
\sum_{\substack{x_{1},...,x_{N} \in \Lambda' \\ \mathrm{distinct}}}  
\,
\int \dd p_{\textsc{t}}(s) \; \cdot
\\
\cdot \; \int \dd \mu^{\pm}_{\yy}(\Phi,s) \,
\mathcal{F}_{\textsc{t}_{\wp},\,\underline{\sharp}, \,\underline{\sigma}}^{\yy}(\Phi)  \,\Big (\prod_{\ell \in \textsc{T}_{\wp}}
\big(H_{x_{\ell^{+}},x_{\ell^{-}}}\big)_{\sigma_{\ell}^{+},\sigma_{\ell}^{-}}\Big ) \;,
\end{multline}
which can be bounded as follows:
\begin{multline}
\sup _{\sigma,\sigma'}\big| \big(g_{N}(x,y;z_{\pm} \pm \ii \eps) \big)_{\sigma,\sigma'}\big| \leq  
  \\ 
\frac{4^{| \YY |}}{N!}\sum_{\substack{\textsc{t} \,\,\, \mathrm{on} \\  \,\{1,...,|\yy | \}}} \,
  \sup _{x_{1}, \dots, x_{N}} 
 \sup _{\substack{\wp, \, \underline{\sharp}, \,\underline{\sigma},\, s \\ \sigma,\sigma'}} \Big|
\int \dd \mu^{\pm}_{\yy}(\Phi,s) \,
\big(\mathcal{F}_{\textsc{t}_{\wp},\,\underline{\sharp}, \,\underline{\sigma}}^{\yy}(\Phi) \big)_{\sigma,\sigma'} \Big|   \; 
\\
\sup _{\wp} \,\sum_{\underline{\sigma}} \sum_{\substack{x_{1},...,x_{N} \in \Lambda' \\ \mathrm{distinct}}} \,  
\prod_{\ell \in \textsc{T}_{\wp}}
\big|\big( H_{x_{\ell^{+}},x_{\ell^{-}}}\big)_{\hat{\sigma}_{\ell}^{+},\hat{\sigma}_{\ell}^{-}}\big|  \,.
\label{eq: partial_bound_g_N}
\end{multline}
%
To bound the superintegral we exploit the properties of the Grassmann norms and obtain
\begin{equation}
\begin{split}
\Big|
\int \dd \mu^{\pm}_{\yy}(\Phi,s) \,
\big(\mathcal{F}_{\textsc{t}_{\wp},\,\underline{\sharp}, \,\underline{\sigma}}^{\yy}(\Phi) \big)_{\sigma,\sigma'} \Big| & \leq
\int \dd \phi_{\yy} \big \| \mu ^{\pm}_{\yy}(\Phi,s)\big \| \, \big \| \big(\mathcal{F}_{\textsc{t}_{\wp},\,\underline{\sharp}, \,\underline{\sigma}}^{\yy}(\Phi) \big)_{\sigma,\sigma'}\big \|
\\
& \leq \ee^{\gamma^{-1} \| H\|_{\infty,1} |\YY|} \int \dd \phi_{\yy} \,\big \| \big(\mathcal{F}_{\textsc{t}_{\wp},\,\underline{\sharp}, \,\underline{\sigma}}^{\yy}(\Phi) \big)_{\sigma,\sigma'}\big \|
\;,
\end{split}
\end{equation}
where in the last step we used the same bound as in Eq.~\eqref{eq: bound_gibbs_free}, noticing that $s \in [0,1]^{\mathcal{P}(\YY)}$ and that $\| H_{\Lambda} \|_{\infty,1} \leq \| H \|_{\infty,1} < \infty$, the latter being bounded because of \eqref{eq: decay_H}.

It is clear that 
$\mathcal{F}_{\textsc{t}_{\wp},\,\underline{\sharp}, \,\underline{\sigma}}^{\yy}(\Phi) $ is a local function, i.e., it factorizes 
\begin{equation}
\label{eq: factorization_F_expansion_strong}
\mathcal{F}_{\textsc{t}_{\wp},\,\underline{\sharp}, \,\underline{\sigma}}^{\YY}(\Phi) = \pm \prod_{i = 1}^{|\YY|} \, \mathcal{F}_{x,y}^{\, \underline{d_{i}}}(\Phi_{x_{i}}) \;,
\end{equation}
having set:
\begin{equation}
\label{eq: mathcal_F_local_function}
\mathcal{F}_{x,y}^{\,\underline{d}}(\Phi_{x'}):=
\Big (
\prod_{\varepsilon, \sharp, \sigma}  \big (\Phi_{\sharp,x',\sigma}^{\varepsilon}\big )^{d_{\sharp,\sigma}^{\varepsilon}}
\Big )
\,\left (\psi_{x}^{-}\right )^{\delta_{x',x}} \left (\psi_{y}^{+}\right )^{\delta_{x',y}} F_{\pm \ii z_{\pm} - \eps}(\Phi_{x'}) \;,
\end{equation}
together with the sequence $\underline{d}_{i}= \{(d_{i})_{\sharp,\sigma}^{\varepsilon}\}_{\varepsilon = \pm, \,\sigma \in \s,\,\sharp = B,F} $, where $(d_{i})_{\sharp,\sigma}^{\varepsilon}$ is defined by 
\begin{equation}
(d_{i})_{\sharp,\sigma}^{\varepsilon} :=
\sum_{\ell \in \textsc{t}_{\wp}} \delta_{i,\ell^{\eps} } \delta_{\sharp,\sharp_{\ell}} \delta_{\sigma, \sigma_{\ell}^{\eps}} \;,
\end{equation}
that is, given $\textsc{T}_{\wp}$, $\underline{\sigma}$ and $\underline{\sharp}$, it counts how many times $ (\ell^{\varepsilon}, \sharp_{\ell},\sigma _{\ell}^{\varepsilon}) = (i,\sharp,\sigma)$.
Notice that the sign in front of the r.h.s.~in Eq.~\eqref{eq: factorization_F_expansion_strong} is unimportant and thus left unspecified.
For the sake of notation, we also define 
$d_{i,\sharp} := \sum_{\varepsilon,\sigma} (d_{i})_{\sharp,\sigma}^{\varepsilon}$ together with $d_{i}: = \sum_{\sharp} d_{i,\sharp}$.
Since $F_{\pm \ii z_{\pm} - \eps}(\Phi) = \ee^{\gamma^{-1}(\pm \ii E - \eps) \Phi^{+}\Phi^{-}} F_{\beta}(\Phi)$ we have, compare with \eqref{eq: a_priori_omega_integration_bound}:
\begin{equation}
\big \| \big(\mathcal{F}_{x,y}^{\,\underline{d}}(\Phi_{x'})\big)_{\sigma,\sigma'}\big \|
\leq  
\Big(\mathrm{Tay}_{|\s|}
\ee^{\gamma^{-1}|\ii E - \eps|\, |\s|} \Big)\Big \| \Big (
\prod_{\varepsilon,\sigma}  \big (\phi_{x',\sigma}^{\varepsilon}\big )^{d_{B,\sigma}^{\varepsilon}}
\Big ) F_{\beta}(\Phi)\Big \| \;.
\end{equation}
%
Using that $F_{\beta}$ satisfies IMB and that $\eps \leq 1$ we finally obtain
\begin{multline}
\!\!\!\!\!\sup _{\substack{\wp, \,\underline{\sharp}, \, \underline{\sigma}, \,s \\  \sigma,\sigma' }}\Big|\int \dd \mu^{\pm}_{\yy}(\Phi,s) \,
\big(\mathcal{F}_{\textsc{t}_{\wp},\,\underline{\hat{\sigma}}}^{\yy}(\Phi)\big)_{\sigma,\sigma'} \Big| \leq
\ee^{ \gamma^{-1}\, \| H\|_{\infty,1} \, |\YY |} \,\cdot 
\\
\cdot \,
\left [
\prod_{i = 1}^{|\YY|} \Big(\mathrm{Tay}_{|\s|}
\ee^{\gamma^{-1}(1+|E|)\, |\s|} \Big) \,
 \sup _{\wp, \, \underline{\sharp}, \, \underline{\sigma}}
\int \dd \phi_{x_{i}}\Big \| \Big (
\prod_{\varepsilon,}  \big (\phi_{ x'}^{\varepsilon}\big )^{d_{i,B}^{\varepsilon}}
\Big ) F_{\beta}(\Phi)\Big \| \right ]
\\
\leq \ee^{ \gamma ^{-1}\, \| H\|_{\infty,1} \, |\YY |} 
K^{|\YY|} \Big(\mathrm{Tay}_{|\s|}
\ee^{\gamma^{-1}(1+ |E|)\, |\s|} \Big)^{|\YY|}
\Big [
\prod_{i = 1}^{|\YY|}  
 (d_{i}!)^{p} \, M^{d_{i}} \Big]\,.
\end{multline}
To estimate the second line in \eqref{eq: partial_bound_g_N}, we shall exploit the exponential decay of the hopping Hamiltonian. If we define $ (H_{\theta})_{x,y} := \ee^{(1+\theta) \alpha|x-y|/2} H_{x,y} $, it follows that, see \eqref{lattice_operator_norm} and \eqref{eq: decay_H}, 
\begin{equation}
\begin{split}
& \big\| H_{\theta}\big \|_{\infty,1} 
= \sup _{x \in \mathbb{Z}^{\mathrm{D}}}\sum_{x' \in \mathbb{Z}^{\mathrm{D}}} \ee^{(1+\theta)\alpha|x-x'|/2} \sum_{\sigma,\sigma' \in \s}\big| (H_{x,x'})_{\sigma,\sigma'} \big| \leq  \mathcal{C}_{\theta} \, \alpha^{-\mathrm{D}} \;,
\end{split}
\end{equation}
where clearly $\mathcal{C}_{\theta} > \mathcal{C}$. The following quantity will be needed as well
\begin{equation}
 \big\| H_{\theta}\big \|_{\infty,\infty} := \sup _{x, x' \in \mathbb{Z}^{\mathrm{D}}}  \,\ee^{(1+\theta)\alpha|x-x'|/2} \, \sum_{\sigma,\sigma' \in \s}\big| (H_{x,x'})_{\sigma,\sigma'} \big| \leq \mathcal{C} \;.
\end{equation}
Standard tree-stripping estimates (see Fig.~\ref{fig: trees}) based on the exponential decay of $H$ give \cite{Brydges84}:

\begin{figure}

\centering
\begin{tikzpicture}
\node[shape = circle,draw,label={ $x$}, fill, scale=.4] at (0, 0)   (x) {};
\node[shape = circle,draw, fill,scale=.2] at (.8, .8)   (x4) {};
\node[shape = circle,draw, fill,scale=.2] at (.8,-.8)   (x5) {};
\node[shape = circle,draw, fill, scale=.2] at (1.414*0.8,0)   (x3) {};
\node[shape = circle,draw,fill,scale=.2] at (2.828*0.8,0)   (y) {};
\node[shape = circle,draw, fill,scale=.2] at (2.414*0.8,0.8)   (x1) {};
\node[shape = circle,draw, fill,scale=.2] at (2.414*0.8,-.8)   (x2) {};
\node[xshift=3mm,yshift=1mm] at (x1) {$x_{1}$};
\node[yshift=-2.5mm] at (y) {$y$};
\node[xshift=3mm,yshift=-1mm] at (x2) {$x_{2}$};
\node[xshift=-2mm,yshift=-2mm] at (x3) {$x_{3}$};
\node[xshift=1mm,yshift=-2mm] at (x5) {$x_{5}$};
\node[xshift=2mm,yshift=2mm] at (x4) {$x_{4}$};
\node[xshift=-8mm,yshift=-2mm] at (x) {\Large $\sum\limits_{\substack{x_{1},\dots,x_{5} \\ \mathrm{distinct}}}$};

\draw (x5) -- (x); \draw (x) -- (x4);
\draw (x3) -- (x); \draw (x3) -- (x1);
\draw (x3) -- (y); \draw (x3) -- (x2); 

\node[xshift=1cm,yshift=-0mm] at (y) (sup){ $\leq \Bigg( \sup\limits_{{\displaystyle{ x_{3}}}}$}; 
\node[xshift=2.25cm,yshift=-2mm] at (y) {\Large  $ \, \sum\limits_{\substack{x_{1}, x_{2} \\ \mathrm{distinct}}}$};

\node[xshift=3.2cm,shape = circle,draw, fill,scale=.2] at (y)  (x3') {};

\node[xshift=0.8*1.414cm,shape = circle,draw, fill,scale=.2] at (x3') (y'){};
\node[xshift=0.8cm,yshift=0.8cm,shape = circle,draw, fill,scale=.2] at (x3') (x1'){};
\node[xshift=0.8cm,yshift=-0.8cm,shape = circle,draw, fill,scale=.2] at (x3') (x2'){};

\node[xshift=-1.5mm,yshift=2mm] at (x3') {$x_{3}$};
\node[xshift=2mm,yshift=2mm] at (x1') {$x_{1}$};
\node[xshift=2mm,yshift=-2mm] at (x2') {$x_{2}$};
\node[xshift=0mm,yshift=-2.5mm] at (y') {$y$};
\draw (x3') -- (y'); \draw (x3') -- (x1');\draw (x3') -- (x2');  

\node[xshift = .45cm,yshift = -0mm] at (y') (bracket){$\Bigg)$};
\node[xshift=9mm,yshift = -1mm] at (bracket) { \Large $ \sum\limits_{\substack{x_{3},x_{4},x_{5} \\ \mathrm{distinct}}}$};

\node[xshift = 1.8cm,yshift = 1mm,shape = circle,draw,label={ $x$}, fill, scale=.4] at (bracket)  (x') {};
\node[xshift =.8*1.414cm ,shape = circle, draw, fill,scale=.2] at (x')   (x3'') {};
\node[xshift = .8cm,yshift = .8cm,shape = circle,draw, fill,scale=.2] at (x')   (x4') {};
\node[xshift = .8cm, yshift = -.8cm,shape = circle,draw, fill, scale=.2] at (x')  (x5') {};

\node[xshift=2mm,yshift=2mm] at (x4') {$x_{4}$};
\node[xshift=2mm,yshift=-2mm] at (x5') {$x_{5}$};
\node[xshift=0mm,yshift=-2.5mm] at (x3'') {$x_{3}$};

\node[xshift= 0mm,yshift = -1.4cm] at (sup) (sup') {};
\node[yshift = -1.4cm] at (bracket) (bracket') {};
\node[xshift=4mm,yshift=1.4cm] at (bracket) (bracket'') {};
\node[xshift=1mm,yshift = 1.4cm] at (x3'') (end) {};
\draw [thick,decoration={brace,amplitude = 8pt,mirror,raise = .2mm},decorate] (sup') -- (bracket');
\draw [thick,decoration={brace,amplitude = 8pt,raise = .2mm},decorate] (bracket'') -- (end);
\path (sup') -- (bracket') node[midway,yshift = -9mm] (middle) {$\leq (4!)^{-q} \,(C_{q,\theta})^{3}\, \| H_{\theta}\|_{\infty,1}^{2}\,\| H_{\theta}\|_{\infty,\infty} $};
\path (bracket'') -- (end) node[midway,yshift = 9mm] (middle') {$\leq (3!)^{-q}(C_{q,\theta})^{2} \, \| H_{\theta}\|_{\infty,1}^{3} $};

\draw (x') -- (x3'');  \draw (x') -- (x4');
\draw (x') -- (x5');

\end{tikzpicture}
\caption{Representation of the stripping procedure of Eq.~\eqref{eq: tree_like_estimate_H}. The vertices represent the spacial points and the lines the modulus of the entries of $H_{2\theta -1}$. For simplicity, we did not represent the internal degrees of freedom. We fix a root, say $x$, and start stripping the tree from its outermost branches, those with incidence number equal to one. The estimate is then carried out iteratively. 
\label{fig: trees}}
\end{figure}

\begin{multline}
\label{eq: tree_like_estimate_H}
\sum_{\underline{\sigma}}\sum_{\substack{x_{1},...,x_{N} \in \Lambda' \\ \mathrm{distinct}}}\prod_{\ell \in \textsc{T}_{\wp}}
\big| \big(H_{x_{\ell^{+}},x_{\ell^{-}}}\big)_{\sigma^{+}_{\ell},\sigma^{-}_{\ell}} \big| 
\\
\leq \,
\ee^{-\theta \alpha\, |x - y|} \, \sum_{\underline{\sigma}}\sum_{\substack{x_{1},...,x_{N} \in \Lambda' \\ \mathrm{distinct}}}\prod_{\ell \in \textsc{T}_{\wp}}
\big| \big((H_{2\theta-1})_{x_{\ell^{+}},x_{\ell^{-}}}\big)_{\sigma^{+}_{\ell},\sigma^{-}_{\ell}} \big| 
\\
\leq  \,
\ee^{-\theta \alpha\, |x - y|} \, \big\| H_{\theta}\big \|_{\infty,1}^{N} \, \big\| H_{\theta}\big \|_{\infty,\infty}^{1- \delta_{x,y}} 
\prod_{i = 1}^{| \YY|} 
(C_{q,\theta})^{d_{i}-1} \, (d_{i}!)^{-q}
 \;,
\end{multline}
where $q >0$ is a fixed parameter and where $C_{q,\theta} > 1$ is a constant that depends also on $\mathrm{D}$, the dimension of the lattice. 
The estimate is carried out by progressively stripping the outer branches as shown in Fig~\ref{fig: trees}. The branches that have been removed are then bounded as follows:
\begin{multline}
\label{eq: detail_stripping_procedure}
(d!)^{q} \,
\sup _{\bar{x}, \sigma} \sum_{\sigma_{1},\dots,\sigma_{d-1}}\sum_{\substack{x_{1},\dots,x_{d-1} \\ \mathrm{distinct}}} \prod _{j = 1}^{d-1} \,\big| \big((H_{2\theta-1})_{\bar{x},x_{j}}\big)_{\sigma, \sigma_{j}}\big|
\\
\leq 
\ee^{\,q\, d \ln d} \,
\sup _{\bar{x}, \sigma} \sum_{\sigma_{1},\dots,\sigma_{d-1}}\sum_{\substack{x_{1},\dots,x_{d-1} \\ \mathrm{distinct}}} \prod _{j = 1}^{d-1} \ee^{-(1-\theta)\alpha|\bar{x}-x_{j}|/2}\,\big| \big((H_{\theta})_{\bar{x},x_{j}}\big)_{\sigma, \sigma_{j}}\big|
\\
\leq \ee^{\,q \,d \ln d} \Big(
\sup _{\substack{x_{1},\dots,x_{d-1} \\ \mathrm{distinct}}} \prod_{j=1}^{d-1} \ee^{-(1-\theta)\alpha|\bar{x}-x_{j}|/2} \Big) \|H_{\theta} \|_{\infty,1}^{d-1}
\\
\leq \underbrace{\bigg(\sup _{d \geq 1} \Big(\ee^{\,2q\ln d- (1-\theta)\alpha \,\Omega_{\mathrm{D}}(d-1)^{1/\mathrm{D}}}  \Big)\bigg)^{d-1}}_{=: \,C_{q,\theta}^{d-1}} \,\| H_{\theta} \|_{\infty,1}^{d-1} \;.
\end{multline}
In the second inequality in \eqref{eq: detail_stripping_procedure} we used a simple bound for the factorial and we extracted an exponential weight from $H$ and pulled it out of the summation by taking the superior over all distinct points $x_{1}\neq\dots \neq x_{d-1} \neq \bar{x}$. The latter can be computed by noticing that $ \big| \{ x \in \mathbb{Z}^{\mathrm{D}} \, |\, |x| = r \} \big| \leq \Omega_{\mathrm{D}}' r^{\mathrm{D}-1}$, therefore $\sum_{j = 1}^{d-1}|\bar{x}-x_{j}|/2 \geq \Omega_{\mathrm{D}} (d-1)^{1+1/\mathrm{D}}$, for some constants $\Omega_{\mathrm{D}}<1$ and $\Omega_{\mathrm{D}}'>1$. 
While stripping the tree, if one of the outer vertices is $y$, there is no summation and this produces simply $\|H_{\theta} \|_{\infty,\infty}$ instead of $\|H_{\theta} \|_{\infty,1}$, see Fig.~\ref{fig: trees}.
We shall remark that the factorial $(d_{i}!)^{-q}$ is gained only because we are summing over distinct points and because the matrix elements of the Hamiltonian decay exponentially.

Plugging the bounds \eqref{eq: superintegral_bound_fourier} and \eqref{eq: tree_like_estimate_C} with $q= p+1$ into \eqref{eq: partial_bound_G_N} and using Cayley's theorem on the number of trees with fixed coordination numbers $\{ d_{i} \}_{i}$, see \cite{Brydges84}, we obtain \eqref{ineq: g_N_bound}:
\begin{equation}
\begin{split}
\big | \big(g_{N}(x,y;z_{\pm} \pm \ii \eps)\big)_{\sigma,\sigma'} \big |
& \leq  \ee^{-\theta \alpha|x-y| } 
K^{|\YY|} \ee^{\mathcal{C} |\YY|}\,\Big(\mathrm{Tay}_{|\s|}
\ee^{\gamma^{-1}(1+|E| )\, |\s|} \Big)^{|\YY|}
\mathcal{C} \;\cdot  
\\
& \cdot (\mathcal{C}_{\theta} \, \alpha^{-\mathrm{D}})^{N} 
\sum_{ \substack{\{ d_{i}\}_{i}, \, d_{i} \geq 1 \\  \sum_{i} d_{i} = 2| \YY| - 2}} \, \prod_{i = 1}^{|\YY |} \frac{(d_{i}!)^{p} (C_{p+1,\theta})^{d_{i} -1} \,M^{d_{i}}}{(d_{i} - 1)! \, (d_{i}!)^{p+1}}\,
\\
& \leq \big( 1+ \gamma^{-|\s|}(1+|E|)^{|\s|} \big)^{|\YY|}  \, (\overline{C}_{K,M,p,\theta})^{| \YY|} \,\alpha^{-\mathrm{D} \,N } \,\ee^{-\theta \alpha |x - y|} 
\end{split}
\end{equation}
where 
$
\overline{C}_{K,M,p,\theta} := 2 K M \ee^{\mathcal{C}} \mathcal{C}_{\theta}\sum_{n = 0}^{\infty} \frac{((n+1)!)^{p} \, (M\,C_{p+1,\theta})^{n}}{(n!)^{p+2}}$ $ < \infty \;.
$
\qed
\medskip

\section{SUSY cluster expansion at weak disorder}
\label{sec: SUSY_cluster_expansion_weak}


In this section, we prove the exponential decay of the disorder-averaged Green's function (Theorem~\ref{thm: exp_decay_weak}), the smoothness of the LDOS (Corollary \ref{thm: analyticity_LDOS_weak}) and we establish Lipshitz-tail-type estimates for the latter (Theorem~\ref{thm: Lifshitz_LDOS_weak})  at weak disorder and at energies away from the unperturbed spectrum.
The analysis of the dual SUSY integral is based on the SUSY cluster expansion presented in Proposition~\ref{thm: cluster_expansion_green_function_weak}. The proof of the theorems is again completed by means of tree estimates together with some suitable bounds on the norm of the superfunctions to be integrated. As was the case in Section \ref{sec: SUSY_cluster_expansion_strong}, such bounds can be achieved under some reasonable assumptions on the disorder distribution that we anticipated in (H2), see the Introduction. In order to make the assumption precise, we need the following definition.
\begin{definition}[Integrable Derivative Bounds]
\label{def: bounds_weak_disorder}
Let $f= f(\Phi)$ be a function of a supervector $\Phi = (\phi,\psi)$. We say that $f$ satisfies integrable derivative bounds (IDB) if for some $K \geq 0$, $  M \geq 1$ and $p \geq 0$ the following  holds true
\begin{equation}
\int \dd \phi \, \Big\|\Big( \prod_{\varepsilon = \pm} \prod_{\sigma \in \s} \Big(\frac{\partial}{\partial \phi^{\varepsilon}_{\sigma}} \Big)^{n^{\varepsilon}_{\sigma}}\Big)  f(\Phi) \Big\|\leq K \, M^{n} \, (n!)^{p}  \;,
\end{equation}
for all $ \{n^{\varepsilon}_{\sigma} \} \in \mathbb{N}^{\s \times \{\pm\}}$, having set $n := \sum_{\varepsilon, \sigma}n^{\varepsilon} _{\sigma}$.
\end{definition}

Throughout this section we will assume the following:
\medskip \medskip 

\noindent
$\,$ (H2-$\mathrm{I}$) $ \,$Let $\beta \geq 0$. The superfunction 
$ F_{\beta}(\Phi) = \ee^{\gamma^{-1}\beta \Phi^{+}\Phi^{-}} \hat{\nu}(\Phi^{+}\Phi^{-})$ satisfies \textcolor{white}{wwwwl} IDB for some $K,M$ and $p$.
\medskip \medskip

In passing, we notice that in the case of Gaussian disorder, the superfuction $F_{\beta}$ with $\beta \leq C \gamma$, $C$ being some universal constant, satisfies IDB for some $K$, $M$ and $p$ independent of $\gamma$.
Furthermore, we believe that our analysis could be extended to the case of weakly positively correlated disorder if we make assumptions on the regularity of $\nu$ that are stronger than (H2-$\mathrm{II}$), but still applicable to Gaussian disorder.

We can finally state the main theorems of this section. 
\begin{theorem}[Exponential decay]
\label{thm: exp_decay_weak}
Let $E \in \mathbb{R}$, $\theta \in [0,1)$, and set $z_{\pm} = E \mp \ii \beta$ together with $\delta:=\mathrm{dist}(E,\sigma(H))$. Assume that the covariance decays as
\begin{equation}
\label{eq: decay_covariance_hopping_operator}
\sum_{\sigma,\sigma' \in \s}\big| \big(C^{E \pm \ii \eps}_{x,y}\big)_{\sigma,\sigma'}\big|
\leq \mathcal{C}_{\mathrm{D}} \,\frac{\ee^{- \sqrt{\delta} \,|x - y|}} {1 + |x - y |^{\mathrm{D}-2}} \;, \qquad \forall \eps \geq 0 \;.
\end{equation}
There exists a constant $C_{K,M,p,\theta} > 0$ such that if $\delta \geq \gamma \, C_{K,M,p,\theta} $ then:
\begin{equation}
\label{eq: exp_decay_weak}
\sup _{\sigma,\sigma'}\Big| \big(\G_{\Lambda}(x,y;z_{\pm} \pm \ii \eps)\big)_{\sigma,\sigma'} \Big| \leq \gamma^{-\delta_{x,y}} \, (C_{K,M,p,\theta})^{2-\delta_{x,y}} \,\ee^{-\theta \,\sqrt{\delta}|x-y |} \;,
\end{equation}
uniformly in $\Lambda \subset \mathbb{Z}^{\mathrm{D}}$ and $\epsilon \geq 0$. 
\end{theorem}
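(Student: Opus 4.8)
The plan is to mirror the strong-disorder argument of Theorem~\ref{thm: exponential_decay_strong_disorder}, but starting from the \emph{dual} SUSY integral of Proposition~\ref{prop: SUSY_representation}(ii) rather than the direct one. First I would use Lemma~\ref{lemma: analytic_continuation_apriori} to make sense of $\G_\Lambda(x,y;z_\pm\pm\ii\eps)$ for $\eps\ge 0$ and $z_\pm=E\mp\ii\beta$; its hypothesis $F_\beta\in L^1(\mathcal{S},\mathscr{G})$ is guaranteed by (H2-$\mathrm{II}$), since IDB with $n=0$ gives $\|F_\beta\|_{L^1}\le K$. Pushing the continuation through super Plancherel exactly as in the proof of Proposition~\ref{prop: SUSY_representation}(ii), the parameter $\beta$ is absorbed entirely into the seed: one obtains a dual integral with oscillatory weight $\widehat{\mu}^{\pm}_\Lambda(\xi)=\ee^{\pm\ii\gamma\sum_{x,y}\xi^+_x C^{E\pm\ii\eps}_{x,y}\xi^-_y}$, seed $\widehat{F_\beta}$, and external data $\partial_{\eta^+_x}\partial_{\eta^-_y}$, so the covariance decay rate $\sqrt\delta$ is left intact. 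Since $\widehat\mu^\pm_\Lambda$ is the exponential of an even Grassmann-valued quadratic form, I would then apply the Battle--Brydges--Federbush representation (Theorem~\ref{thm: FBB_representation}), as in Proposition~\ref{thm: cluster_expansion_green_function_strong}, to get a dual cluster expansion whose $N$-th term is a sum over trees on the $N+2-\delta_{x,y}$ sites $\{x_1,\dots,x_N,x,y\}$; each of the $N+1-\delta_{x,y}$ edges carries a factor $\pm\ii\gamma$ together with the vertex $\hat v_{x,y}(\xi)=\xi^+_xC_{x,y}\xi^-_y+\xi^+_yC_{y,x}\xi^-_x$, which combined with the overall $\gamma^{-1}$ gives the power $\gamma^{N-\delta_{x,y}}$. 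The structural input is again supersymmetry: $\widehat{F_\beta}$, $\widehat\mu^\pm$ and $\hat v$ inherit $Q$-closedness, parity and $U(1)^{\times\s}$-invariance from their position-space counterparts through the super Fourier transform, so parity together with the SUSY localization formula (Proposition~\ref{prop: SUSY_localization_theorem}, via Lemma~\ref{lemma: decomposition_Q_operator}) forces every polymer $\YY$ with $\YY\cap\{x,y\}\ne\{x,y\}$ either to vanish or to collapse to a singleton, trimming the sum to connected trees containing both $x$ and $y$ as in \eqref{eq: trimmed_polymer_expansion_strong_disorder}.

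For the per-term estimate I would factorize the polymer contribution over sites as in \eqref{eq: factorization_F_expansion_strong}--\eqref{eq: mathcal_F_local_function}, bound the oscillatory superintegral by its Grassmann $L^1$-norm via \eqref{L1_bound_norms}, and use stability (Remark~\ref{rmk: BBF_representation}) to control the bosonic weight $|\ee^{\pm\ii\gamma\sum\kappa^+C\kappa^-}|\le 1$ together with $\|\ee^{\pm\ii\gamma\sum\eta^+C\eta^-}\|\le\ee^{\gamma\|C\|_{\infty,1}|\YY|}$ in analogy with \eqref{eq: bound_gibbs_free}. What remains at each site is an $L^1(\dd\kappa)$-integral of $\widehat{F_\beta}$ multiplied by the $\kappa$-moments produced by the vertices, and this is exactly where IDB enters. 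Multiplication by $\kappa$ in Fourier space is dual to differentiation $\partial/\partial\phi$ in position space, so Lemma~\ref{lemma: bound_fourier_transform} yields $\sup_\kappa\|\kappa^{\underline n}\widehat{F_\beta}\|\le\|\partial_\phi^{\underline n}F_\beta\|_{L^1}\le KM^{n}(n!)^p$; converting this $L^\infty$ control into the required $L^1(\dd\kappa)$ control by inserting a fixed polynomial weight $(1+|\kappa|)^{-(2|\s|+1)}$ produces moment bounds of the same IDB form, with renormalized constants $K,M,p$. This is the weak-disorder replacement for the IMB estimates used on the direct integral.

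Finally I would carry out the tree-stripping of \eqref{eq: tree_like_estimate_H}--\eqref{eq: detail_stripping_procedure} with the covariance $C^{E\pm\ii\eps}$ replacing $H$ and the rate $\sqrt\delta$ replacing $\alpha$. Using \eqref{eq: decay_covariance_hopping_operator}, the $\theta$-weighted covariance $(C_\theta)_{x,y}:=\ee^{(1+\theta)\sqrt\delta|x-y|/2}C_{x,y}$ satisfies $\|C_\theta\|_{\infty,\infty}\le\mathcal{C}_{\mathrm{D}}$ and $\|C_\theta\|_{\infty,1}\le\mathcal{C}'_{\mathrm{D}}\,\delta^{-1}$, the polynomial prefactor $1/(1+|x-y|^{\mathrm{D}-2})$ only helping. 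Extracting $\ee^{-\theta\sqrt\delta|x-y|}$ along the tree path, stripping the outer branches with the factorial gain $(d_i!)^{-q}$ at $q=p+1$, and bounding the number of trees of fixed coordination by Cayley's theorem, I obtain a per-site factor $\sim\gamma\|C_\theta\|_{\infty,1}\sim\gamma/\delta$; the geometric series in $N$ converges once $\gamma/\delta$ falls below a threshold $C_{K,M,p,\theta}^{-1}$, i.e.\ $\delta\ge\gamma\,C_{K,M,p,\theta}$, leaving the prefactor $\gamma^{-\delta_{x,y}}(C_{K,M,p,\theta})^{2-\delta_{x,y}}$ and the decay $\ee^{-\theta\sqrt\delta|x-y|}$ claimed. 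I expect the main obstacle to be the second step: taming the oscillatory bosonic $\kappa$-integral without exploiting cancellations, which forces the passage from the $L^\infty$ Fourier bound of Lemma~\ref{lemma: bound_fourier_transform} to genuine $L^1(\dd\kappa)$ moment control of $\widehat{F_\beta}$, while checking that the IDB exponents survive this conversion uniformly in the (arbitrarily large) vertex degrees $d_i$.
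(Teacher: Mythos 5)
Your proposal follows essentially the same route as the paper's proof: the dual SUSY representation combined with the Battle--Brydges--Federbush expansion (Proposition~\ref{thm: cluster_expansion_green_function_weak}), parity and SUSY localization to trim to connected trees through $x$ and $y$, Grassmann-norm and stability bounds on the oscillatory weight, IDB fed through Lemma~\ref{lemma: bound_fourier_transform} with extra derivatives traded for $\kappa$-moments to gain $L^{1}(\dd\kappa)$ integrability, and tree-stripping plus Cayley's theorem to close the geometric series in $\gamma\,\delta^{-1}$. The ``main obstacle'' you flag is precisely the step the paper resolves by bounding $\|(\kappa^{+}\kappa^{-})^{m}\widehat{\mathcal{F}}\|$ via IDB with $2m$ additional derivatives and choosing $m=|\s|+1$, exactly as you anticipate.
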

\begin{remark}
Notice that \eqref{eq: decay_covariance_hopping_operator} is satisfied when $H = -\Delta_{\mathbb{Z}^{\mathrm{D}}}$ and, more generally, when $H$ exhibits quadratic dispersion at the band edges. 
Notice that at $x=y$ the limit $\gamma \to 0$ of the r.h.s.~of \eqref{eq: exp_decay_weak} is divergent. This is due to the fact that the bound is uniform in $\eps \geq 0$; the meaningful way to compute the $\gamma \to 0$ limit is however at finite $\eps$. A bound which is uniform in $\eps \geq \eps _{0}$ for some $\eps _{0} >0$, and which does not diverge as $\gamma$ is sent to $0$ can also be obtained using our methods but this is beyond our scope.
\end{remark}
The case $\beta = 0$, $\eps \to 0^{+}$ gives the exponential decay of the disorder-averaged Green's function for energies up to the edges of the spectrum. On the other hand, at $\beta >0$ and $x = y$, the result implies analyticity of the LDOS in a suitable region of the complex plane.
\begin{corollary}[Analyticity of LDOS]
\label{thm: analyticity_LDOS_weak}
Assume that the covariance satisfies \eqref{eq: decay_covariance_hopping_operator}.
Then, there exists a constant $C_{K,M,p} > 0$ such that $\rho(E)$ can be extended to an  analytic function on 
\begin{equation}
D_{K,M,p}(\gamma,\beta) := \Big \{ z \in \mathbb{C} \;\Big | \; \mathrm{dist}\big( \re z, \sigma(H)\big) > \gamma\, C_{K,M,p} \,, \; \big| \im z\big| < \beta 
\Big\} \;.
\end{equation}
\end{corollary}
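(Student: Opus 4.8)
The plan is to mirror the proof of Corollary~\ref{thm: DOS_strong_disorder}, replacing its strong-disorder input Theorem~\ref{thm: exponential_decay_strong_disorder} by the weak-disorder estimate Theorem~\ref{thm: exp_decay_weak}. Write $\G_{\Lambda}(z) := \G_{\Lambda}(0,0;z) \in \mathbb{C}^{\s \times \s}$. Since $H$ is Hermitian and the disorder average is real, for real $E$ one has $\mathrm{Tr}_{\s}\G_{\Lambda}(E - \ii \eps) = \overline{\mathrm{Tr}_{\s}\G_{\Lambda}(E + \ii \eps)}$, so that
\[
\rho_{\eps,\Lambda}(E) = \frac{1}{2\pi \ii |\s|}\Big(\mathrm{Tr}_{\s}\G_{\Lambda}(E + \ii \eps) - \mathrm{Tr}_{\s}\G_{\Lambda}(E - \ii \eps)\Big)\;.
\]
Hence it suffices to prove that the two limits $\lim_{\eps \to 0^{+}}\lim_{\Lambda \nearrow \mathbb{Z}^{\mathrm{D}}}\mathrm{Tr}_{\s}\G_{\Lambda}(z \pm \ii \eps)$ exist and are analytic on $D_{K,M,p}(\gamma,\beta)$, with the analytic extension of $\rho$ given by their difference divided by $2\pi\ii|\s|$.

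First I would set $C_{K,M,p} := C_{K,M,p,\theta = 0}$, the constant of Theorem~\ref{thm: exp_decay_weak} at $\theta = 0$, which is \emph{independent of $E$}; the defining constraint $\mathrm{dist}(\re z, \sigma(H)) > \gamma\,C_{K,M,p}$ then guarantees that the hypothesis $\delta \geq \gamma\,C_{K,M,p,\theta}$ of that theorem is met for every $z \in D_{K,M,p}(\gamma,\beta)$. Because $F_{\beta}$ satisfies IDB it lies in $L^{1}(\mathcal{S},\mathscr{G})$, so Lemma~\ref{lemma: analytic_continuation_apriori} applies: at fixed $\Lambda$ the functions $\mathrm{Tr}_{\s}\G_{\Lambda}(z + \ii \eps)$ and $\mathrm{Tr}_{\s}\G_{\Lambda}(z - \ii \eps)$ extend analytically to $\im z > -\beta$ and $\im z < \beta$ respectively, and continuously to the closed strips. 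Applying Theorem~\ref{thm: exp_decay_weak} with $x = y = 0$ and $\theta = 0$ then bounds $\sup_{\sigma,\sigma'}|(\G_{\Lambda}(0,0;z \pm \ii \eps))_{\sigma,\sigma'}|$, and hence $|\mathrm{Tr}_{\s}\G_{\Lambda}(z \pm \ii \eps)|$, by a constant that is uniform in $\Lambda$, in $\eps \geq 0$, and—crucially—uniform in $z \in D_{K,M,p}(\gamma,\beta)$.

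Next I would pass to the infinite-volume limit. For fixed $\eps > 0$ the second resolvent identity gives the pointwise limit $\lim_{\Lambda \nearrow \mathbb{Z}^{\mathrm{D}}}\mathrm{Tr}_{\s}\G_{\Lambda}(z \pm \ii \eps) = \mathrm{Tr}_{\s}\G_{\mathbb{Z}^{\mathrm{D}}}(z \pm \ii \eps)$. Combining pointwise convergence with the uniform bound just obtained, the Vitali--Porter theorem upgrades the convergence to locally uniform on $D_{K,M,p}(\gamma,\beta)$ and shows the limit is analytic there, uniformly bounded in $\eps$ and continuous as $\eps \to 0^{+}$. Therefore both $\lim_{\eps \to 0^{+}}\mathrm{Tr}_{\s}\G_{\mathbb{Z}^{\mathrm{D}}}(z \pm \ii \eps)$ exist and are analytic on the domain, and so is their difference, which furnishes the desired analytic extension of $\rho$.

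The genuinely new point compared with Corollary~\ref{thm: DOS_strong_disorder}—and the main obstacle to check—is the uniformity of the estimate over the \emph{unbounded} range of $\re z$ permitted in $D_{K,M,p}(\gamma,\beta)$. In the strong-disorder setting the constant degrades like $|E|^{|\s|/(1+|\s|)}$, which is why there $\re z$ must be confined to a bounded set $\mathcal{B}$; here one must instead verify that the constant of Theorem~\ref{thm: exp_decay_weak} is truly $E$-independent once $\delta \geq \gamma\,C_{K,M,p,\theta}$, so that Vitali--Porter can be invoked on arbitrarily large compact subsets of the domain. A minor accompanying point is that $D_{K,M,p}(\gamma,\beta)$ need not be connected when $\sigma(H)$ consists of several bands; since analyticity is a local property this causes no difficulty, and one simply runs the argument on each connected component.
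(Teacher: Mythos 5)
Your proposal is correct and follows essentially the same route as the paper, which simply declares the proof ``identical to that of Corollary~\ref{thm: DOS_strong_disorder}'': Lemma~\ref{lemma: analytic_continuation_apriori} for finite-volume analyticity, Theorem~\ref{thm: exp_decay_weak} at $x=y=0$, $\theta=0$ for the uniform bounds, then the second resolvent identity and Vitali--Porter. Your explicit observations that the constant $C_{K,M,p,\theta=0}$ is $E$-independent (so the unbounded real range of $D_{K,M,p}(\gamma,\beta)$ causes no trouble, unlike in the strong-disorder case where $\re z$ must stay in a bounded set) and that possible disconnectedness of the domain is harmless are correct refinements of what the paper leaves implicit.
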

The proof of this corollary is identical to that of Corollary~\ref{thm: DOS_strong_disorder}: Lemma~\ref{lemma: analytic_continuation_apriori} gives the analyticity of $\G_{\Lambda}(0,0;E \pm \ii \eps)$ while Theorem~\ref{thm: exp_decay_weak} provides the uniform bounds that allow us to extend the analyticity to $\G_{\mathbb{Z}^{\mathrm{D}}}(0,0;E \pm \ii 0^{+})$.

Our last result is as follows.

\begin{theorem}[Lifshitz-tail-type estimate]
\label{thm: Lifshitz_LDOS_weak}
Let $E \in \mathbb{R}$ and set $\delta:=\mathrm{dist}(E,$ $\sigma(H))$. Assume  that the covariance satisfies \eqref{eq: decay_covariance_hopping_operator} and that $C_{x,y}^{E} \in \mathbb{R}$ for any $x,y \in \mathbb{Z}^{\mathrm{D}}$.
There exists a constant $C_{K,M,p}$ such that if $\delta \geq \gamma C_{K,M,p}$ then a Lifshitz-tail-type estimate is satisfied:
\begin{equation}
\label{lifshitz_bound}
\big| \rho_{\eps,\Lambda}(E) \big|
\leq \gamma^{-1} C'_{K,M,p} \, \ee ^{ - \overline{C}_{K,M,p} \,(\gamma^{-1} \delta)^{1/2p}}\;,
\end{equation}
for some constants $C'_{K,M,p}, \overline{C}_{K,M,p}>0$, uniformly in $\Lambda \subset \mathbb{Z}^{\mathrm{D}}$ and $\eps \geq 0$.
\end{theorem}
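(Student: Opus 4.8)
The plan is to bound $\rho_{\eps,\Lambda}(E)=\frac{1}{\pi|\s|}\im\mathrm{Tr}_\s\G_\Lambda(0,0;E+\ii\eps)$ by isolating the imaginary part of the diagonal Green's function inside the dual SUSY cluster expansion of Proposition~\ref{thm: cluster_expansion_green_function_weak}. First I would use the second representation \eqref{eq: SUSY_greens_function_fourier}, whose Gibbs' weight $\widehat{\mu}^{\pm}_{\Lambda}(\xi)=\ee^{\pm\ii\gamma\sum_{x,y}\xi_{x}^{+}C^{E\pm\ii\eps}_{x,y}\xi_{y}^{-}}$ carries all the $z$--dependence through the covariance $C^{E\pm\ii\eps}$. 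The key structural input is the extra hypothesis $C^{E}_{x,y}\in\mathbb{R}$: since $E\notin\sigma(H)$ the boundary value satisfies $\lim_{\eps\to0^{+}}\im C^{E+\ii\eps}=0$, and the bosonic bilinear $\sum_{x,y}\kappa_{x}^{+}C^{E}_{x,y}\kappa_{y}^{-}$ is real because $C^{E}$ is real and symmetric; moreover the vertex superfunction $\widehat{F_0}$, being the super Fourier transform of the real even $F_0=\hat{\nu}(\Phi^{+}\Phi^{-})$, has real bosonic coefficients (cf.\ Lemma~\ref{lemma: bound_fourier_transform}). Consequently, as $\eps\to0^{+}$ the \emph{only} source of an imaginary part is the oscillatory bosonic phase $\ee^{\ii\gamma\sum\kappa^{+}C^{E}\kappa^{-}}$, so that the real, $O(\gamma^{-1})$ bulk of $\G_\Lambda(0,0;E)$ decouples and does not enter $\rho$.

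Next I would extract this imaginary part by Taylor expanding the phase, $\ee^{\ii\gamma\Theta}=\sum_{k\geq0}(\ii\gamma)^{k}\Theta^{k}/k!$ with $\Theta=\tfrac12\sum_{x,y}s_{x,y}\,\xi_{x}^{+}C^{E}_{x,y}\xi_{y}^{-}$: an imaginary contribution survives only from an \emph{odd} number $k$ of such insertions. Each insertion produces a field bilinear that, upon super--integration, acts as a pair of derivatives on the vertex functions $\widehat{F_0}$, while the accompanying covariance factors are summed spatially using the decay \eqref{eq: decay_covariance_hopping_operator}, which gives $\|C^{E}\|_{\infty,1}\sim\mathcal{C}_{\mathrm{D}}\,\delta^{-1}$ per factor. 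Controlling the derivative strings by the integrable derivative bounds of Definition~\ref{def: bounds_weak_disorder} costs a factor $(k!)^{p}$ for each of the two conjugate fields in every insertion, so the net order--$k$ estimate is of the form $(k!)^{2p}\,(C_{K,M,p}\,\gamma\,\delta^{-1})^{k}$. In contrast with the real part treated in Theorem~\ref{thm: exp_decay_weak}, here the tree/distinct--point combinatorics no longer cancels the factorials, because the phase insertions are not edges of the spanning tree; the resulting series for $\rho$ is therefore asymptotic rather than geometrically convergent.

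I would then close the estimate by optimal truncation. The terms $(k!)^{2p}(C_{K,M,p}\gamma\delta^{-1})^{k}$ decrease as long as $k^{2p}\lesssim\delta/(C_{K,M,p}\gamma)$ and are minimized at $k_0\sim(\gamma^{-1}\delta)^{1/2p}$, where by Stirling the minimal term is $\ee^{-\overline{C}_{K,M,p}(\gamma^{-1}\delta)^{1/2p}}$. Truncating the odd--$k$ expansion at $k_0$ and bounding the remaining terms by this minimal value, while restoring the overall prefactor $\gamma^{-1}$ from \eqref{eq: SUSY_greens_function_fourier}, yields exactly \eqref{lifshitz_bound}. All spatial sums are controlled by the uniform bound \eqref{eq: decay_covariance_hopping_operator} and the per--vertex estimates are volume independent, so the final bound is uniform in $\Lambda$ and in $\eps\geq0$; the condition $\delta\geq\gamma\,C_{K,M,p}$ is precisely what makes $\gamma\delta^{-1}$ small enough for the truncation to be effective.

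The main obstacle is the first step: rigorously disentangling the small imaginary part from the large real part. One must show that, with $C^{E}$ real, every contribution not containing an odd number of phase insertions is real after the full super--integration — this uses parity, the supersymmetry of $\widehat{\mu}$ and $\widehat{F_0}$, and the reality of the vertex coefficients — so that the non--small $O(\gamma^{-1})$ bulk genuinely cancels in $\im\mathrm{Tr}_\s\G$; and one must carry the $\eps\to0^{+}$ limit through the expansion uniformly in $\Lambda$. A secondary technical point is verifying that the factorial growth extracted from Definition~\ref{def: bounds_weak_disorder} is exactly $(k!)^{2p}$, the doubling coming from the two conjugate fields per insertion, since this exponent is what fixes the rate $(\gamma^{-1}\delta)^{1/2p}$ in \eqref{lifshitz_bound}.
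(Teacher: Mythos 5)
Your overall architecture---dual SUSY cluster expansion, Taylor expansion of the super Gibbs' weight, a reality argument to decouple the $O(\gamma^{-1})$ bulk, derivative bounds from Definition~\ref{def: bounds_weak_disorder} producing $(k!)^{2p}(C\gamma\delta^{-1})^{k}$, and Stirling optimization of the truncation order---is the same as the paper's. But the step where you ``close the estimate by optimal truncation'' has a genuine gap. You assert that $\im\mathrm{Tr}_{\s}\G_{\Lambda}(0,0;E)$ is a sum over odd $k$ of terms each bounded by $(k!)^{2p}(C_{K,M,p}\gamma\delta^{-1})^{k}$, and then truncate at $k_{0}\sim(\gamma^{-1}\delta)^{1/2p}$. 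Such a sum is dominated by its lowest-order nonzero term: if the $k=1$ term is merely \emph{bounded} by $C\gamma\delta^{-1}$ rather than identically zero, you get $|\rho_{\eps,\Lambda}|\lesssim\delta^{-1}$, not the exponentially small bound \eqref{lifshitz_bound}. Optimal truncation controls the remainder \emph{after subtracting the partial sum}; it bounds the whole quantity exponentially only if every term of the partial sum up to order $k_{0}$ contributes exactly zero. Establishing that exact vanishing is the heart of the proof, and your proposal does not supply it.

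Two cancellation mechanisms are needed, and you invoke only one. A contribution with $N$ tree edges and $l$ phase insertions carries the factor $(\ii\gamma)^{N+l}$ --- your parity bookkeeping in the number of insertions alone omits the $(\ii\gamma)^{N}$ attached to the tree in \eqref{eq: cluster_expansion_green_function_weak} --- multiplying a quantity that is real by the reality of $C^{E}$ and of $\hat{\nu}^{(2n)}(0)$. Hence the terms with $N+l$ odd are purely imaginary and drop out of $\re\big((\ii\gamma)^{N}G_{N}\big)$, i.e.\ out of $\rho$; this is the part you capture. The terms with $N+l$ even would contribute, and they are killed only because the superintegral reduces, after undoing the Fourier transform as in \eqref{eq: definition_G_NN}, to derivatives of $F_{0}^{\YY}$ at $\Phi=0$, which vanish for that parity since $\nu$ is even and therefore $\hat{\nu}^{(2n+1)}(0)=\ii^{2n+1}\,\mathbb{E}_{\omega}\,\omega^{2n+1}=0$. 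You never use the evenness of $\nu$, so these dangerous low-order terms survive in your argument. Once both cancellations are in place, the \emph{entire} Taylor polynomial up to order $\overline{N}$ contributes nothing to $\rho$, which is then given by the Lagrange remainder alone (Lemma~\ref{lemma: Lagrange_remainder}); at that point your $(\overline{N}!)^{2p}(\gamma\delta^{-1})^{\overline{N}}$ estimate and the minimization over $\overline{N}$ do yield \eqref{lifshitz_bound}.
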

\begin{remark}
Since $E \in \mathbb{R}$, $\beta = 0$ is sufficient for proving \eqref{lifshitz_bound}.
\end{remark}
\begin{remark}
Notice that the condition $C_{x,y}^{E} \in \mathbb{R}$ is satisfied in the case $H = -\Delta_{\mathbb{Z}^{\mathrm{D}}}$ and, more generally, in systems for which the Bloch Hamiltonian is such that $\hat{H}(-k) = \overline{\hat{H}(k)}$. This property holds in many condensed-matter systems, an example of which is graphene \cite{GiulianiMastropietro}.
\end{remark}
\begin{remark}
\label{rmk: IDOS_rmk}
Since the LDOS is finite, the integrated density of states IDOS can be obtained via \cite{Simon}
\begin{equation}
\mathcal{N}(E) = \int_{-\infty}^{E} \dd E' \, \rho(E') \;.
\end{equation}
As it turns out, the IDOS controls the probability of having an eigenvalue of $H_{\omega,\Lambda}$ below $E$ \cite{Klopp, KloppWolff}:
\begin{equation}
\label{eq: probability_estimate_IDOS}
\mathbb{P}\Big(\big\{ \sigma(H_{\omega,\Lambda}) \cap (-\infty,E) \neq \emptyset \big\} \Big) \leq C\, |\Lambda| \, \mathcal{N}(E)\;,
\end{equation}
for some constant $C >0$.
A sufficiently small IDOS, e.g., as $\gamma \to 0$, allows us to prove localization via finite-volume criteria \cite{AizenmanSchenker}, see \cite{Klopp} for details.
Integration of the Lifshitz-tail-type estimate of Theorem~\ref{thm: Lifshitz_LDOS_weak}, up to energies below the bottom of the spectrum such that $ \,(\gamma^{-1} \delta)^{1/2p} \gtrsim  |\ln \gamma|$, provides a sufficiently small upper bound on the IDOS.
\end{remark}
The proof of the theorems above is based on a dual SUSY cluster expansion. The approach is the same as the one in Proposition~\ref{thm: cluster_expansion_green_function_strong}. This time we shall expand the super Gibbs' weight $\widehat{\mu}^{\pm}_{\Lambda}(\xi)$ and obtain polymers connected by the covariance. The expansion is well-suited for our purposes because it exploits the smallness of $\gamma \, \delta^{-1}$. 
\begin{proposition}
\label{thm: cluster_expansion_green_function_weak}
Let $E\in \mathbb{R}$ and set $z_{\pm}:= E \mp \ii \beta$. The following representation holds true for any $\eps \geq 0$:
\begin{eqnarray}
&&  \G_{\Lambda}(x,y; z_{\pm} \pm \ii \epsilon ) =  -(\pm \ii \gamma)^{-\delta_{x,y}} \sum_{N = 0}^{|\Lambda'|}
\big( \pm \ii \gamma\big)^{N}
 G_{N}(x,y; z_{\pm}  \pm \ii \eps) \qquad
\label{eq: cluster_expansion_green_function_weak}
\\
&& G_{N}(x,y;z_{\pm} \pm \ii \eps):= \frac{1}{N!}
 \sum_{\substack{\textsc{t} \,\,\, \mathrm{on} \\
 \,\{1,...,N+2 - \delta_{x,y}\}}} \,\,
\sum_{\substack{x_{1},...,x_{N} \in \Lambda' \\ \mathrm{distinct}}}  
\int \dd p_{\textsc{t}}(s)  \;\cdot
\nonumber \\
&&
\qquad \qquad \qquad \qquad \cdot \; \int \dd \widehat{\mu}^{\pm}_{\yy}(\xi,s) \,
 \prod_{\{i,j\}\in \textsc{t}} \hat{v}^{\pm}_{x_{i},x_{j}}(\xi) \,\, \frac{\partial}{\partial \eta^{+}_{x}} \frac{\partial}{\partial \eta^{-}_{y}}\widehat{F_{\beta}}^{\yy}(\xi)\,,\qquad \;
 \label{eq: G_N_definition}
\end{eqnarray}
where we have set $\dd \widehat{\mu}^{\pm}_{\yy}(\xi,s) : = \dd \xi_{\yy}  \widehat{\mu}^{\pm}_{\yy}(\xi,s)$ with
\begin{equation}
\label{eq: definition_mu_v_weak_disorder}
\begin{split}
& \widehat{\mu}^{\pm}_{\yy}(\xi,s) : = \ee^{ \pm \frac{\ii }{2}\gamma \sum_{x,y \in \yy} s_{x,y} \hat{v}_{x,y}^{\pm}(\xi)} \;,
\\
& \hat{v}_{x,y}^{\pm}(\xi): =  \xi_{x}^{+}\, C^{E\pm \ii \eps}_{x,y}\,\xi_{y}^{-}  +\xi_{y}^{+} \, C^{E \pm \ii \eps}_{y,x}\,\xi_{x}^{-}\;.
\end{split}
\end{equation}
Above, $s = (s_{x,y}) \in [0,1]^{\mathcal{P}(\YY)}$ while $\dd p_{\textsc{t}}(s)$ is a probability measure with support on $s$ such that the exponent in the super Gibbs' weights $\widehat{\mu} _{\yy}^{\pm}(\xi,s)$ satisfies respectively $ \re \pm \ii \sum_{x,y \in \YY}  s_{x,y} \hat{v}^{\pm}_{x,y}((\kappa,0))  \leq 0$. Furthermore, we have set $x_{N+1} := x$ and $x_{N+2} := y$, $\YY := \{x_{1},...,x_{N+2} \}$ and $\Lambda' := \Lambda \setminus \{x,y\}$.
\end{proposition}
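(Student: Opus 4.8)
The plan is to mirror the proof of Proposition~\ref{thm: cluster_expansion_green_function_strong}, replacing the direct SUSY integral by its dual and the super Gibbs' weight $\mu^\pm_\Lambda$ by $\widehat\mu^\pm_\Lambda$. First I would record the starting point. Since $F_\beta$ satisfies IDB (Definition~\ref{def: bounds_weak_disorder}), the case $n=0$ gives $F_\beta\in L^1(\mathcal S,\mathscr G)$, so Lemma~\ref{lemma: analytic_continuation_apriori} applies and $\G_\Lambda(x,y;z_\pm\pm\ii\eps)$ is the analytic continuation to $z_\pm=E\mp\ii\beta$. Arguing as in the proof of Proposition~\ref{prop: SUSY_representation}(ii)---that is, absorbing the energy $E\pm\ii\eps$ together with $H$ into the resolvent $C^{E\pm\ii\eps}=(H-E\mp\ii\eps)^{-1}$ and the shift $\beta$ into $F_\beta$, then applying the super Plancherel identity (Proposition~\ref{prop: super_Planchere_identity})---I would obtain
\begin{equation}
\G_\Lambda(x,y;z_\pm\pm\ii\eps)=\pm\ii\,\gamma^{-1}\int\dd\widehat\mu^\pm_\Lambda(\xi)\,\frac{\partial}{\partial\eta^+_x}\frac{\partial}{\partial\eta^-_y}\widehat{F_\beta}^\Lambda(\xi)\,,
\end{equation}
with $\widehat\mu^\pm_\Lambda(\xi)=\ee^{\pm\ii\gamma\sum_{x,y\in\Lambda}\xi^+_x C^{E\pm\ii\eps}_{x,y}\xi^-_y}$. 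Here IDB is exactly the dual of the IMB hypothesis used at strong disorder: under the super Fourier transform the powers of $\kappa$ produced by the $\hat v^\pm_{x,y}(\xi)$ correspond, via Lemma~\ref{lemma: bound_fourier_transform}, to $\phi$-derivatives of $F_\beta$, whose integrated Grassmann norms IDB controls.

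Next I would apply the Battle--Brydges--Federbush formula (Theorem~\ref{thm: FBB_representation}) to $\widehat\mu^\pm_\Lambda=\ee^{\widehat V^\pm_\Lambda}$, $\widehat V^\pm_\Lambda=\pm\frac\ii2\gamma\sum_{x,y\in\Lambda}\hat v^\pm_{x,y}(\xi)$, which is legitimate because the $\hat v^\pm_{x,y}(\xi)$ are even elements of a Grassmann algebra. This produces a polymer expansion over partitions of $\Lambda$ with weights $\widehat K(\YY)$, each a sum over trees on $\YY$ carrying a factor $\pm\ii\gamma\,\hat v^\pm_\ell$ per link and the decoupled weight $\widehat\mu^\pm_\YY(\xi,s)$, the derivatives $\partial_{\eta^+_x}\partial_{\eta^-_y}$ acting only on the factors at $x$ and $y$ (on the single factor at $x$ when $x=y$).

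I would then prune the expansion exactly as at strong disorder, using that $\widehat\mu^\pm_\YY(\cdot,s)$, $\hat v^\pm_{x,y}$ and $\widehat{F_\beta}$ are even, supersymmetric and $U(1)^{\times\s}$-invariant, these properties being inherited from $F_\beta$ under the super Fourier transform (whose kernel is $Q$-invariant). By parity of the $\int\dd\xi_\YY$ integration, $\widehat K(\YY)=0$ unless $\YY$ contains both or neither of $x,y$: if exactly one derivative falls inside $\YY$ the integrand is odd in the $\eta_\YY$ variables. When $\YY\cap\{x,y\}=\emptyset$ the integrand is supersymmetric in $\xi_\YY$, and, after checking the $L^1$ hypotheses of the SUSY localization formula (Proposition~\ref{prop: SUSY_localization_theorem}) through IDB and Lemma~\ref{lemma: decomposition_Q_operator}, localization collapses the integral to its value at $\xi=0$; since $\hat v^\pm_{x,y}(0)=0$ every tree with at least one link vanishes, leaving only the singleton $|\YY|=1$, whose weight is $\widehat{F_\beta}(0)=\int\dd\Phi\,F_\beta(\Phi)=F_\beta(0)=1$ by a further application of localization to $F_\beta$ itself. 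Hence $\widehat K(\YY)=\delta_{1,|\YY|}$ on polymers disjoint from $\{x,y\}$, and these contribute trivially. The surviving sum runs over polymers $\YY\supset\{x,y\}$; writing $\YY=\{x_1,\dots,x_N,x,y\}$ with distinct $x_i\in\Lambda'$, inserting the usual $1/N!$ overcounting factor, setting $x_{N+1}=x$, $x_{N+2}=y$ and converting the sum over trees on $\YY$ into one over trees on $\{1,\dots,N+2-\delta_{x,y}\}$ yields \eqref{eq: cluster_expansion_green_function_weak}--\eqref{eq: G_N_definition}; collecting the $\pm\ii\gamma$ from the $N+1-\delta_{x,y}$ links with the overall $\pm\ii\gamma^{-1}$ and using $(\pm\ii)^2=-1$ reproduces the prefactor $-(\pm\ii\gamma)^{-\delta_{x,y}}$.

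The point requiring the most care---and the genuine difference from Proposition~\ref{thm: cluster_expansion_green_function_strong}---is stability. At strong disorder the $s$-dependent part of $(\mu^\pm_\YY(s))_\emptyset$ is purely imaginary, so the decoupled weights are integrable for every $s\in[0,1]^{\mathcal P(\YY)}$ and the support of $\dd p_\textsc{t}$ is immaterial. Here $\kappa^+ C^{E\pm\ii\eps}\kappa^-$ is genuinely complex, so $\re\,(\widehat\mu^\pm_\YY(s))_\emptyset\le0$ need not hold for arbitrary $s$. I would therefore retain the information that $\dd p_\textsc{t}(s)$ is supported on the convex decouplings furnished by Theorem~\ref{thm: FBB_representation}: since $\widehat V^\pm_\YY=\ii\gamma\sum\xi^+ C^{E\pm\ii\eps}\xi^-$ is stable for $\eps\ge0$ and, by Remark~\ref{rmk: BBF_representation}, its convex decouplings are stable as well, every $\widehat\mu^\pm_\YY(\xi,s)$ appearing in the expansion defines a convergent superintegral. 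Verifying that this stability, combined with IDB, makes each $\xi_\YY$-superintegral absolutely convergent uniformly in $s$---so that the swaps of summation and integration and the applications of localization are all justified---is the main obstacle; the remaining combinatorics are identical to the strong-disorder case.
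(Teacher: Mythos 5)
Your proposal is correct and follows essentially the same route as the paper: dual SUSY representation via Plancherel, BBF expansion of $\widehat{\mu}^{\pm}_{\Lambda}$, pruning by parity and SUSY localization (with the supersymmetry of $\widehat{F_{\beta}}$ obtained from that of $F_{\beta}$, which the paper formalizes as Lemma~\ref{lemma: SUSY_fourier_transform}), and the same combinatorial repackaging. You also correctly isolate the one genuine difference from the strong-disorder case — that stability of $\widehat{\mu}^{\pm}_{\YY}(\xi,s)$ holds only on the support of $\dd p_{\textsc{t}}$, i.e.\ for the convex decouplings guaranteed by Theorem~\ref{thm: FBB_representation} — which is exactly the point the paper emphasizes.
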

\begin{remark}
Notice that only the empty-tree term contributes to the quantity $G_{0}(x,x;z_{\pm}\pm \ii \eps)$.
\end{remark}
In order to prove this SUSY cluster expansion, we need an auxiliary result that allows us to apply the SUSY localization formula in this context as well.
\begin{lemma}
\label{lemma: SUSY_fourier_transform}
Let $f \in L^{1}(\mathcal{S},\mathscr{G})$ be even, supersymmetric and invariant under $U(1)^{\times \s}$ fermionic transformations, see Lemma~\ref{lemma: decomposition_Q_operator}. Assume that the function $\Phi \mapsto \psi^{\varepsilon}_{\sigma}(\partial/\partial \phi^{\varepsilon}_{\sigma}) f(\Phi)$ is in $L^{1}(\mathcal{S},\mathscr{G})$ for any $\varepsilon,\sigma$. Then
\begin{equation}
\widehat{f}(\xi) = \int \dd \Phi \, \ee^{-\ii \xi^{+}\Phi^{-} - \ii \Phi^{+}\xi^{-}} \, f(\Phi) \;,
\end{equation}
is even, satisfies
\begin{equation}
\label{eq: SUSY_fourier_transform}
\eta^{\varepsilon}_{\sigma} \frac{\partial}{\partial \kappa^{\varepsilon}_{\sigma}} \widehat{f}(\xi) = - \varepsilon 
\kappa^{-\varepsilon}_{\sigma} \frac{\partial}{\partial \eta^{-\varepsilon}_{\sigma}} \widehat{f}(\xi)
\end{equation}
and is therefore supersymmetric.
\end{lemma}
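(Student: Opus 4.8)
The plan is to prove the three assertions in order: that $\widehat{f}$ is even, that it satisfies \eqref{eq: SUSY_fourier_transform}, and that the identity \eqref{eq: SUSY_fourier_transform} forces $\widehat f$ to be $Q$-closed. Evenness is immediate: in $\widehat f(-\xi)$ I would substitute $\Phi \mapsto -\Phi$, which leaves the superintegration measure $\dd\Phi$ invariant (an even number $2|\s|$ of anticommuting variables is integrated, so the Berezin Jacobian is $+1$) and flips the sign of the exponent; hence $\widehat f(-\xi) = \int\dd\Phi\,\ee^{-\ii(\xi^{+}\Phi^{-}+\Phi^{+}\xi^{-})}f(-\Phi) = \widehat f(\xi)$ by evenness of $f$. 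The substance of the lemma is \eqref{eq: SUSY_fourier_transform}, which I would obtain by transporting the identity of Lemma \ref{lemma: decomposition_Q_operator}, namely $\psi^{\varepsilon}_{\sigma}(\partial/\partial\phi^{\varepsilon}_{\sigma})f = -\varepsilon\,\phi^{-\varepsilon}_{\sigma}(\partial/\partial\psi^{-\varepsilon}_{\sigma})f$, through the super Fourier transform. This is legitimate because both sides lie in $L^{1}(\mathcal S,\mathscr G)$: the left by hypothesis, the right because it equals the left.

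The key tool is a set of four intertwining relations describing how the super Fourier transform exchanges multiplication and differentiation in the $\Phi$-variables with the corresponding operations in the dual $\xi$-variables. Writing $E = -\ii(\xi^{+}\Phi^{-}+\Phi^{+}\xi^{-})$ and recalling from the proof of Lemma \ref{lemma: bound_fourier_transform} that the fermionic part of $\ee^{E}$ factorises as $\prod_{\sigma,\varepsilon}(1 - \ii\varepsilon\,\eta^{\varepsilon}_{\sigma}\psi^{-\varepsilon}_{\sigma})$, a direct computation of $(\partial/\partial\phi^{\varepsilon}_{\sigma})\ee^{E}$, $(\partial/\partial\kappa^{\varepsilon}_{\sigma})\ee^{E}$, $(\partial/\partial\psi^{\varepsilon}_{\sigma})\ee^{E}$, $(\partial/\partial\eta^{\varepsilon}_{\sigma})\ee^{E}$ gives, for any $g \in L^{1}(\mathcal S,\mathscr G)$,
\begin{equation}
\begin{aligned}
&\widehat{\partial_{\phi^{\varepsilon}_{\sigma}} g} = \ii\,\kappa^{-\varepsilon}_{\sigma}\,\widehat g,
\qquad
\widehat{\phi^{\varepsilon}_{\sigma} g} = \ii\,\partial_{\kappa^{-\varepsilon}_{\sigma}}\widehat g,
\\
&\widehat{\psi^{\varepsilon}_{\sigma} g} = -\ii\varepsilon\,\partial_{\eta^{-\varepsilon}_{\sigma}}\widehat g,
\qquad
\widehat{\partial_{\psi^{\varepsilon}_{\sigma}} g} = \ii\varepsilon\,\eta^{-\varepsilon}_{\sigma}\,\widehat g .
\end{aligned}
\end{equation}
The first two follow from ordinary integration by parts in $\phi^{\varepsilon}_{\sigma}$ (boundary terms vanishing by integrability) and from differentiating under the integral sign; the last two follow from the graded Leibniz rule together with the Berezin identity $\int\dd\psi^{\varepsilon}_{\sigma}\,(\partial/\partial\psi^{\varepsilon}_{\sigma})(\,\cdot\,) = 0$, used as an integration by parts in the anticommuting variables.

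With these relations in hand, I would apply $\widehat{\;\cdot\;}$ to both sides of the Lemma \ref{lemma: decomposition_Q_operator} identity. On the left, treating $\psi^{\varepsilon}_{\sigma}(\partial/\partial\phi^{\varepsilon}_{\sigma})f$ by first integrating by parts in $\phi^{\varepsilon}_{\sigma}$ and then applying the $\psi$-relation yields $\widehat{\psi^{\varepsilon}_{\sigma}\partial_{\phi^{\varepsilon}_{\sigma}}f} = \ii\,\kappa^{-\varepsilon}_{\sigma}\,\widehat{\psi^{\varepsilon}_{\sigma}f} = \varepsilon\,\kappa^{-\varepsilon}_{\sigma}\,\partial_{\eta^{-\varepsilon}_{\sigma}}\widehat f$; on the right, treating $\phi^{-\varepsilon}_{\sigma}(\partial/\partial\psi^{-\varepsilon}_{\sigma})f$ by the $\phi$- and $\partial_{\psi}$-relations yields $-\varepsilon\,\widehat{\phi^{-\varepsilon}_{\sigma}\partial_{\psi^{-\varepsilon}_{\sigma}}f} = -\eta^{\varepsilon}_{\sigma}\,\partial_{\kappa^{\varepsilon}_{\sigma}}\widehat f$. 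Equating the two gives exactly \eqref{eq: SUSY_fourier_transform}. Supersymmetry of $\widehat f$ then follows by summing \eqref{eq: SUSY_fourier_transform} over $\varepsilon$ and $\sigma$: relabelling $\varepsilon\mapsto-\varepsilon$ on the right-hand side turns the sum into $\sum_{\varepsilon,\sigma}\big[\eta^{\varepsilon}_{\sigma}\,\partial_{\kappa^{\varepsilon}_{\sigma}} - \varepsilon\,\kappa^{\varepsilon}_{\sigma}\,\partial_{\eta^{\varepsilon}_{\sigma}}\big]\widehat f = 0$, which is the condition $Q_{\xi}\widehat f = 0$ for the SUSY operator $Q_{\xi}$ obtained from $Q_{\Phi}$ by the substitution $(\phi,\psi)\mapsto(\kappa,\eta)$.

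The hard part will be the careful sign bookkeeping in the two fermionic relations and in their composition: Grassmann derivatives are graded, the external variables $\eta^{\varepsilon}_{\sigma}$ must be commuted through the Berezin measure $\dd\Phi$ (harmless only because $2|\s|$ anticommuting variables are integrated), and the factors $\pm\ii$ and $\pm\varepsilon$ must be tracked through each step, so that the two $-\ii\varepsilon$'s and the $\ii$'s collapse to the precise coefficients in \eqref{eq: SUSY_fourier_transform}. A secondary technical point is the rigorous justification of the bosonic integration by parts, i.e.\ the vanishing of the boundary terms, which I would handle using the $L^{1}$ hypotheses on $f$ and on $\psi^{\varepsilon}_{\sigma}(\partial/\partial\phi^{\varepsilon}_{\sigma})f$, approximating if needed by Schwartz superfunctions for which all manipulations are unambiguous.
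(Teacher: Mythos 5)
Your proposal is correct and follows essentially the same route as the paper: both proofs move multiplications and derivatives through the super Fourier transform by bosonic and Grassmann integration by parts (justified by the $L^{1}$ hypotheses), invoke Lemma~\ref{lemma: decomposition_Q_operator} to trade $\phi^{-\varepsilon}_{\sigma}\partial_{\psi^{-\varepsilon}_{\sigma}}f$ for the integrable object $\psi^{\varepsilon}_{\sigma}\partial_{\phi^{\varepsilon}_{\sigma}}f$, and obtain evenness from the invariance of $\dd\Phi$ under $\Phi\mapsto-\Phi$. Your version merely packages the integration by parts as four explicit intertwining relations, which is a presentational rather than a substantive difference.
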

\begin{proof}
Integration by parts gives
\begin{equation}
\begin{split}
-\varepsilon \kappa^{\varepsilon}_{\sigma} \frac{\partial}{\partial \eta^{\varepsilon}_{\sigma}} \, \widehat{f}(\xi)
 & = 
\int \dd \Phi \, \ee^{-\ii \xi^{+}\Phi^{-} - \ii \Phi^{+}\xi^{-}} \,
\psi^{-\varepsilon}_{\sigma} \frac{\partial}{\partial \phi^{-\varepsilon}_{\sigma}} f(\Phi)\;,
\\ 
\eta^{\varepsilon}_{\sigma} \frac{\partial}{\partial \kappa^{\varepsilon}_{\sigma}} \, \widehat{f}(\xi)
 & = 
\int \dd \Phi \, \ee^{-\ii \xi^{+}\Phi^{-} - \ii \Phi^{+}\xi^{-}}\,\varepsilon 
\phi^{-\varepsilon}_{\sigma} \frac{\partial}{\partial \psi^{-\varepsilon}_{\sigma}} f(\Phi) \;,
\end{split}
\end{equation}
provided that the integrands are in $L^{1}(\mathcal{S},\mathscr{G})$. In the first line this is the case by assumption. In the second line, by Lemma~\ref{lemma: decomposition_Q_operator} we have that
\begin{equation}
\label{eq: consequence_lemma}
- \varepsilon 
\phi^{-\varepsilon}_{\sigma} \frac{\partial}{\partial \psi^{-\varepsilon}_{\sigma}} f(\Phi) = \psi^{\varepsilon}_{\sigma} \frac{\partial}{\partial \phi^{\varepsilon}_{\sigma}} f(\Phi) 
\end{equation}
and thus the integrand is in $L^{1}(\mathcal{S},\mathscr{G})$ as well. Taking the difference of the two equations at fixed $\varepsilon$ and $\sigma$, and using Eq.~\eqref{eq: consequence_lemma} gives identity \eqref{eq: SUSY_fourier_transform}. Parity follows by using that $\dd \Phi$ is invariant under $\Phi \mapsto - \Phi$.
\end{proof}
\begin{proof}[Proof of Proposition~\ref{thm: cluster_expansion_green_function_weak}] We shall skip many details since the proof is analogous to the one of Proposition~\ref{thm: cluster_expansion_green_function_strong}.
Since $F_{\beta}$ satisfies IDB, $\widehat{F_{\beta}}(\xi)$ decays in norm faster than any power of $\kappa$. In particular,
 $\widehat{F_{\beta}} \in L^{1}(\mathcal{S},\mathscr{G})$ and we can make sense of the SUSY integral in Eq.~\eqref{eq: SUSY_greens_function_direct} when $E \to z_{\pm}$, $\eps \geq 0$.
We set $\widehat{\mu}_{\xx}^{\pm} =: \ee^{\widehat{V}_{\XX}^{\pm}}$, apply the BBF formula and, by exploiting the decay properties of $\widehat{F_{\beta}}$, obtain a polymer expansion like the one in the proof of Proposition~\ref{thm: cluster_expansion_green_function_strong}. The stability condition $\re (\hat{V}_{\XX}^{\pm})_{\emptyset} = \re \pm \ii \gamma\sum_{x,y\in \XX} \kappa^{+}_{x} C^{E\pm \ii \eps}_{x,y} \kappa^{-}_{y} \leq 0$ is satisfied, since  $C^{E\pm \ii \eps} = (H-E \pm \ii \eps)/((H-E)^{2} + \eps ^{2})$ with $H$ Hermitian and $\eps \geq 0$. Accordingly, $\dd p_{\textsc{t}}(s)$ is supported on $s$ such that $ \re \pm \ii \sum_{x,y \in \YY}  s_{x,y} \hat{v}^{\pm}_{x,y}((\kappa,0))  \leq 0$ respectively. In the final formula for $G_{N}$ we have swapped integration with respect to $\dd p_{\textsc{t}}(s)$ with the superintegral with respect to $\dd \xi_{\YY}$ because $F_{\beta}$ satisfies IDB.

Again, we notice that $\widehat{\mu}^{\pm}_{\yy}(\cdot,s)$ and $\hat{v}^{\pm}_{x,y}$ are even and supersymmetric by inspection.
 We have already pointed out that $F_{\beta}$ is even, supersymmetric and invariant under $U(1)^{\times \s}$ fermionic transformation, see discussion before Proposition~\ref{prop: SUSY_representation}. Since by assumption it satisfies IDB, by Lemma~\ref{lemma: SUSY_fourier_transform} we have that $\widehat{F_{\beta}}$ is even, supersymmetric and satisfies identity \eqref{eq: SUSY_fourier_transform}, or equivalently:
\begin{equation}
\frac{\eta^{\varepsilon}_{\sigma}}{(1+|\kappa^{\varepsilon}_{\sigma}|)}
\frac{\partial}{\partial \kappa^{\varepsilon}_{\sigma}}\widehat{F_{\beta}}(\xi) =-\varepsilon 
\frac{\kappa^{-\varepsilon}_{\sigma}}{(1+|\kappa^{\varepsilon}_{\sigma}|)}
\frac{\partial}{\partial \eta^{-\varepsilon}_{\sigma}} \widehat{F_{\beta}}(\xi) \;.
\end{equation}
Following the argument below \eqref{eq: integrand}, we finally have that the superfunction
$\xi \mapsto \widehat{\mu}^{\pm}(\xi,s)$ $\big( \prod_{ \{x',y'\} \in \textsc{T}} \hat{v}_{x',y'}(\xi) \big)\; \widehat{F_{\beta}^{\yy}}(\xi)$ satisfies the hypotheses of Proposition~\ref{prop: SUSY_localization_theorem}.
The rest of the proof is identical to the one of Proposition~\ref{thm: cluster_expansion_green_function_strong}.
\end{proof}

\subsection*{Proof of Theorem \ref{thm: exp_decay_weak}}
Since $F_{\beta}$ satisfies IDB we can apply Proposition~\ref{thm: cluster_expansion_green_function_weak}. We shall prove that if $\gamma \, \delta^{-1}$ is small enough then for some constant $C_{K,M,p,\theta}$ the following bounds holds true:
\begin{equation}
\label{eq: main_bound_weak_disorder}
\sup _{\sigma,\sigma'} \big| \big(G_{N}(x,y;z_{\pm} \pm \ii \eps)\big)_{\sigma,\sigma'} \big| \leq 
(C_{K,M,p,\theta}/2)^{N + 2 - \delta_{x,y}}\, \delta^{-N}  \, \ee^{-\theta \sqrt{\delta} |x - y|} \;.
\end{equation}
Plugging these bounds into the expansion of Proposition~\ref{thm: cluster_expansion_green_function_weak} proves the convergence of the expansion and hence the claim.
The proof of the bound~\eqref{eq: main_bound_weak_disorder} follows closely the strategy in the proof of Theorem~\ref{thm: exponential_decay_strong_disorder}, with some small differences which we shall stress.

The setting is as in the proof of Theorem~\ref{thm: exponential_decay_strong_disorder}, but we shall recall it for the sake of clarity.
Let us consider a polymer $\YY \ni x,y$ and an oriented tree $\textsc{T}_{\wp}$ on $\YY$, $\wp$ denoting the choice of orientations on the links of the tree. We denote by $\ell^{+}$ ($\ell^{-}$) the starting (ending) vertex of the oriented link $\ell \in \textsc{T}_{\wp}$. Links have to be oriented in order to select one of the two elements in $\hat{v}^{\pm}_{\ell}(\xi)$, see in Eq.~\eqref{eq: definition_mu_v_weak_disorder}. Furthermore, we introduce the sequences $\underline{\sigma} = \big\{ \sigma^{\varepsilon}_{\ell} \in \s \big\}_{\ell \in \textsc{T},\,\varepsilon = \pm}$ and $\underline{\sharp} = \big\{ \sharp_{\ell} \in \{B,F\} \big\}_{\ell \in \textsc{T}}$. We set $\xi^{\varepsilon}_{B,x,\sigma} = \kappa^{\varepsilon}_{x,\sigma}$ and $\xi^{\varepsilon}_{F,x,\sigma} = \eta^{\varepsilon}_{x,\sigma}$, and we shall henceforth write $|\YY|$ instead of $N+2 - \delta_{x,y} = |\YY|$. Let us define:
\begin{equation}
\label{eq: F_mathcal_weak}
\begin{split}
\widehat{\mathcal{F}}_{\textsc{t}_{\wp},\,\underline{\sharp}, \,\underline{\sigma}}^{\yy}(\xi) &:=
\Big ( \prod_{\ell \in \mathrm{T}_{\wp}} \,\xi^{+}_{\sharp_{\ell},x_{\ell^{+}},\sigma_{\ell}^{+}} \, \xi^{-}_{\sharp_{\ell},x_{\ell^{-}},\sigma_{\ell}^{-}} \Big )
\, \frac{\partial}{\partial\eta^{+}_{x}} \frac{\partial}{\partial\eta^{-}_{y}} \widehat{F_{\beta}}^{\yy}(\xi) \;.
\end{split}
\end{equation}
We can rewrite Eq.~\eqref{eq: G_N_definition} as
\begin{multline}
\label{eq: polarised_tree_expansion_green_function_weak}
G_{N}(x,y;z_{\pm} \pm \ii \eps) =  \frac{1}{N!}\sum_{\substack{\textsc{t} \,\,\, \mathrm{on} \\
 \,\{1,...,|\yy|\}}} \,\sum_{\wp, \,\underline{\sharp}, \, \underline{\sigma}}\,\,
\sum_{\substack{x_{1},...,x_{N} \in \Lambda' \\ \mathrm{distinct}}}  
\,
\int \dd p_{\textsc{t}}(s) \; \cdot
\\
\cdot \; \int \dd \widehat{\mu}^{\pm}_{\yy}(\xi,s) \,
\widehat{\mathcal{F}}_{\textsc{t}_{\wp},\,\underline{\sharp}, \, \underline{\sigma}}^{\yy}(\xi)  \,\Big (\prod_{\ell \in \textsc{T}_{\wp}}
\big(C^{E \pm \ii \eps}_{x_{\ell^{+}},x_{\ell^{-}}}\big)_{\sigma^{+}_{\ell},\sigma^{-}_{\ell} }\Big ) \;,
\end{multline}
which can be bounded as follows:
\begin{multline}
\sup _{\sigma,\sigma'}\big| \big(G_{N}(x,y;z_{\pm} \pm \ii \eps) \big)_{\sigma,\sigma'}\big| \\ \leq  \frac{4^{| \YY |}}{N!}\sum_{\substack{\textsc{t} \,\, \mathrm{on} 
 \,\{1,...,|\yy | \}}} \,
 \sup _{x_{1}, \dots, x_{N}} \sup _{\substack{\wp, \, \underline{\sharp}, \, \underline{\sigma},\, s \\ \sigma, \sigma'} } \Big|
\int \dd \widehat{\mu}^{\pm}_{\yy}(\xi,s) \,
\big(\widehat{\mathcal{F}}_{\textsc{t}_{\wp},\,\underline{\sharp}, \, \underline{\sigma}}^{\yy}(\xi)\big)_{\sigma,\sigma'} \Big|  \;
\\
\; \sup _{\wp} \,
\sum_{\underline{\sigma}}
 \sum_{\substack{x_{1},...,x_{N} \in \Lambda' \\ \mathrm{distinct}}}  
\prod_{\ell \in \textsc{T}_{\wp}}
\big| \big(C^{E \pm \ii \eps}_{x_{\ell^{+}},x_{\ell^{-}}}\big)_{\sigma^{+}_{\ell},\sigma^{-}_{\ell}}\big|  \;,
\label{eq: partial_bound_G_N}
\end{multline}
where the $\sup _{s}$ is taken over $s$ in the support of $\dd p_{\textsc{t}}$.
We shall first obtain a useful bound for the superintegral. The analysis differs slightly from the one of the proof of Theorem~\ref{thm: exponential_decay_strong_disorder} because we need to prove suitable decay bounds for $\big\| \big(\widehat{\mathcal{F}}_{\textsc{t}_{\wp},\,\underline{\sharp}, \, \underline{\sigma}}^{\yy}(\xi)\big)_{\sigma,\sigma'} \big \|$. We again see by inspection that 
 $\widehat{\mathcal{F}}_{\textsc{t}_{\wp},\,\underline{\sharp}, \, \underline{\sigma}}^{\yy}(\xi) $ is a local function: 
\begin{equation}
\label{eq: factorization_F_expansion}
\widehat{\mathcal{F}}_{\textsc{t}_{\wp},\,\underline{\sharp}, \, \underline{\sigma}}^{\YY}(\xi) = \pm \prod_{i = 1}^{|\YY|} \, \widehat{\mathcal{F}}_{x,y}^{\, \underline{d_{i}}}(\xi_{x_{i}}) \;,
\end{equation}
having set:
\begin{equation}
\label{eq: mathcal_F_local_function}
\widehat{\mathcal{F}}_{x,y}^{\,\underline{d}}(\xi_{x'}):=
\Big (
\prod_{\varepsilon,\sigma, \sharp}  \big (\xi_{\sharp,x',\sigma}^{\varepsilon}\big )^{d_{\sharp,\sigma}^{\varepsilon}}
\Big )
\,\left (\frac{\partial}{\partial \eta_{x}^{+}}\right )^{\delta_{x',x}} \left (\frac{\partial}{\partial \eta_{y}^{-}}\right )^{\delta_{x',y}} \widehat{F_{\beta} }(\xi_{x'}) \;,
\end{equation}
where $\underline{d}_{i}= \{(d_{i})_{\sharp,\sigma}^{\varepsilon}\}_{\varepsilon = \pm, \,\sigma \in \s,\,\sharp = B,F} $ are as in the proof of Theorem~\ref{thm: exponential_decay_strong_disorder}.
We estimate $\big\|\big( \widehat{\mathcal{F}}_{x,y}^{\, \underline{d_{i}}}(\xi_{x_{i}}) \big)_{\sigma,\sigma'}\big\|$ by means of the IDB on $F_{\beta}$. In fact, for any $m \in \mathbb{N}$, by Lemma \ref{lemma: bound_fourier_transform}:
\begin{multline}
\big\| (\kappa^{+}_{x'}\kappa^{-}_{x'})^{m}\big(\widehat{\mathcal{F}}_{x,y}^{\, \underline{d_{i}}}(\xi_{x'})\big)_{\sigma,\sigma'} \big\|
 \leq \Big \|(\kappa^{+}_{x'}\kappa^{-}_{x'})^{m}\Big( \prod_{\varepsilon,\sigma}  \big (\kappa_{x',\sigma}^{\varepsilon}\big )^{(d_{i}) _{B,\sigma}^{\varepsilon}} \widehat{F_{\beta}}(\xi_{x'})
\Big ) \Big \|
\\
 \leq  \int \dd \phi_{x'} \,\Big \| \Big[\Big ( \frac{\partial}{\partial\phi^{+}_{x'}}\Big ) \Big ( \frac{\partial}{\partial\phi^{-}_{x'}}\Big ) \Big]^{m}\Big (\prod_{\varepsilon,\sigma} \Big ( \frac{\partial}{\partial\phi^{\varepsilon}_{x'}}\Big )^{(d_{i})^{\varepsilon}_{B,\sigma}}\Big ) F_{\beta}(\Phi_{x'}) \Big \| \; ;
\end{multline}
thus, by the IDB on $F_{\beta}$ we obtain:
\begin{equation}
\big\| \widehat{\mathcal{F}}_{x,y}^{\, \underline{d_{i}}}(\xi_{x_{i}}) \big\|
\leq 2 \,
K  |\s|^{m}\,\frac{ ((d_{i,B} + 2m)!)^{p} \, M^{d_{i,B}}}{1 + M^{-2m}\,(\kappa_{x_{i}}^{+}\kappa_{x_{i}}^{-})^{m}}  \;,
\qquad
\forall m \in \mathbb{N} \;.
\end{equation}
Finally, since $M \geq  1$ 
\begin{equation}
 \sup _{\wp, \, \underline{\sharp}, \, \underline{\sigma} }
 \Big \| \widehat{\mathcal{F}}_{\textsc{t}_{\wp},\,\underline{\sharp}, \, \underline{\sigma} }^{\yy} (\xi)\Big \| \leq 
\prod_{i = 1}^{|\YY |} \; 2 K |\s|^{m}\, \frac{((d_{i}+2m)!)^{p} \, M^{d_{i}} }{1 + M^{-2m}\, (\kappa_{x_{i}}^{+} \kappa_{x_{i}}^{-})^{m}}, \qquad \forall m \in \mathbb{N} \;.
\label{eq: decay_global_fourier_transform}
\end{equation}
On the other hand, the non-local part in the superintegral is bounded as:
\begin{equation}
\label{simple_bound_gibbs_weight}
\sup _{ s} \,\big \| \widehat{\mu}^{\pm}_{\yy}(\xi,s) \big \| \leq \ee^{ \gamma \, \| C^{E \pm \ii \eps}\|_{\infty,1} \, |\YY |} \;,
\end{equation}
where $\| C^{E \pm \ii \eps}\|_{\infty,1} \leq \mathcal{C}\delta^{-1}$ is bounded thanks to \eqref{eq: decay_covariance_hopping_operator}, where for simplicity we have dropped the dependence of $\mathcal{C}$ on $\mathrm{D}$, the dimension of the lattice.
All in all, if we apply the bound in Eq.~\eqref{eq: decay_global_fourier_transform} with $m = |\s|+1$ and use $|\int \dd \xi _{\yy} f(\xi) | \leq \| f\|_{L^{1}(\mathcal{S}^{\YY},\mathscr{G}^{\YY})}$, we finally obtain:
\begin{multline}
\label{eq: superintegral_bound_fourier}
 \sup _{\substack{\wp, \, \underline{\sharp}, \, \underline{\sigma},\, s \\ \sigma ,\sigma'} } \Big|
\int \dd \widehat{\mu}^{\pm}_{\yy}(\xi,s) \,
\big(\mathcal{F}_{\textsc{t}_{\wp},\,\underline{\sharp}, \, \underline{\sigma}}^{\yy}(\xi)\big)_{\sigma,\sigma'} \Big|
\\
\leq  
  \Big(K'\,|\s|^{|\s| +1} \, M^{2|\s|}\,\ee^{ \gamma \, \| C^{E \pm \ii \eps}\|_{\infty,1}} \Big)^{|\YY |} \, \prod_{i = 1}^{|\YY|}
  ((d_{i}+2|\s|+2)!)^{p} \,M^{d_{i}} \;.
\end{multline}
for some $K' >K$. To estimate the second line in \eqref{eq: partial_bound_G_N}, the strategy is the same as in the proof of Theorem~\ref{thm: exponential_decay_strong_disorder}. We exploit the exponential decay of the covariance \eqref{eq: decay_covariance_hopping_operator}.
If we define $ (C_{\theta}^{E \pm \ii \eps})_{x,y} := \ee^{(1+\theta) \sqrt{\delta}|x-y|/2} C^{E \pm\ii \eps}_{x,y} $, it follows that 
\begin{equation}
\begin{split}
& \big\| C_{\theta}^{E \pm \ii \eps}\big \|_{\infty,1} 
= \sup _{x \in \mathbb{Z}^{\mathrm{D}}}  \sum_{x'\in \mathbb{Z}^{\mathrm{D}}} \ee^{(1+\theta) \sqrt{\delta}|x'|/2} \,\sum_{\sigma,\sigma' \in \s} \big| \big(C_{x,x'}^{E \pm \ii \eps}\big)_{\sigma,\sigma'} \big| \leq  \mathcal{C}_{\theta} \, \delta^{-1} \;,
\\
& \big\| C_{\theta}^{E \pm \ii \eps}\big \|_{\infty,\infty} =  \sup _{x,x' \in \mathbb{Z}^{\mathrm{D}}}  \,\ee^{(1+\theta)\sqrt{\delta} |x'|/2} \,\sum_{\sigma,\sigma' \in \s} \big| \big(C_{x,x'}^{E \pm \ii \eps} \big)_{\sigma,\sigma'}\big| \leq \mathcal{C} \;,
\end{split}
\end{equation}
for some new $\mathcal{C}_{\theta} > \mathcal{C}$. Standard tree-stripping estimates (see Fig.~\ref{fig: trees}) based on the exponential decay of $C^{E \pm \ii \eps}$ give \cite{Brydges84}:
\begin{multline}
\label{eq: tree_like_estimate_C}
 \sum _{\underline{\sigma}}\sum_{\substack{x_{1},...,x_{N} \in \Lambda' \\ \mathrm{distinct}}}\prod_{\ell \in \textsc{T}_{\wp}}
\big| \big(C^{E \pm \ii \eps}_{x_{\ell^{+}},x_{\ell^{-}}}\big)_{\sigma^{+}_{\ell},\sigma^{-}_{\ell}} \big| 
\\
\leq  \ee^{-\theta \sqrt{\delta} |x - y|} \, 
 \sum _{\underline{\sigma}}\sum_{\substack{x_{1},...,x_{N} \in \Lambda' \\ \mathrm{distinct}}}\prod_{\ell \in \textsc{T}_{\wp}}
\big| \big((C^{E \pm \ii \eps}_{2\theta-1})_{x_{\ell^{+}},x_{\ell^{-}}}\big)_{\sigma^{+}_{\ell},\sigma^{-}_{\ell}} \big| 
\\
\leq 
\ee^{-\theta \sqrt{\delta} |x - y|} \, \big\| C_{\theta}^{E \pm \ii \eps}\big \|_{\infty,1}^{N} \, \big\| C_{\theta}^{E \pm \ii \eps}\big \|_{\infty,\infty}^{1- \delta_{x,y}} 
\prod_{i = 1}^{| \YY|} 
(C_{q,\theta})^{d_{i}-1} \, (d_{i}!)^{-q}
 \;.
\end{multline}
The details on the tree-stripping procedure are explained in \eqref{eq: detail_stripping_procedure}, text around it and in Fig.~\ref{fig: trees}.
Finally, we plug into \eqref{eq: partial_bound_G_N} the bounds \eqref{simple_bound_gibbs_weight} (recall that $\delta^{-1} \gamma \leq 1$), \eqref{eq: superintegral_bound_fourier} and \eqref{eq: tree_like_estimate_C} with $q= p+1$, and we use Cayley's theorem on the number of trees with fixed coordination numbers $\{ d_{i} \}_{i}$, see \cite{Brydges84}, to obtain:
\begin{equation}
\begin{split}
\big | \big(G_{N}(x,y;z_{\pm} \pm \ii \eps)\big)_{\sigma,\sigma'} \big |
& \leq  \ee^{-\theta \sqrt{\delta}|x-y| } 
\big(  K'\,|\s|^{|\s| +1} \, M^{2|\s|} \, \ee^{ \mathcal{C}}\big)^{| \YY|} 
\mathcal{C}
 (\mathcal{C}_{\theta} \, \delta^{-1})^{N}  \;
\\
&
\sum_{ \substack{\{ d_{i}\}_{i}, \, d_{i} \geq 1 \\  \sum_{i} d_{i} = 2| \YY| - 2}} \, \prod_{i = 1}^{|\YY |} \frac{((d_{i}+2|\s|+2)!)^{p} (C_{p+1,\theta})^{d_{i} -1} \,M^{d_{i}}}{(d_{i} - 1)! \, (d_{i}!)^{p+1}}\,
\\
& \leq (C_{K,M,p,\theta}/2)^{| \YY|} \,\delta^{-N} \,\ee^{-\theta \sqrt{\delta} |x - y|} 
\end{split}
\end{equation}
where 
$
C_{K,M,p,\theta} := 2 K'\,|\s|^{|\s| +1} \, M^{2|\s|+1} \ee^{\mathcal{C}} \mathcal{C}_{\theta}\sum_{n = 0}^{\infty} \frac{((n+2|\s|+2)!)^{p} \, (M\,C_{p+1,\theta})^{n}}{(n!)^{p+2}}$ $< \infty \;.
$
\qed
\medskip

%
%

\subsection*{Proof of Theorem \ref{thm: Lifshitz_LDOS_weak}}
To prove Theorem~\ref{thm: Lifshitz_LDOS_weak}, we need to further expand the super Gibbs' weight. We will need the following lemma.
\begin{lemma}[Super Lagrange Remainder]
\label{lemma: Lagrange_remainder}
Let $f: \mathcal{S}^{\XX} \to \mathscr{G}^{\XX}$. If $\| f(\Phi) \| < \infty$, then the following Lagrange estimate holds
\begin{equation}
\Big \| \ee^{f(\Phi)} - \sum_{j = 0}^{k} \frac{\big(f(\Phi)\big)^{j}}{j!} \Big \| \leq \frac{1}{(k+1)!}\sup _{t \in (0,1)} \Big \| \ee^{tf(\Phi)} \big( f(\Phi) \big)^{k+1}\Big \| \;.
\end{equation}
\end{lemma}
\begin{proof}
Since $\| f(\Phi) \|$ is finite, the function $t \mapsto \ee^{t f(\Phi)} = \sum_{n \geq 0}(t f(\Phi))^{n}/n! \in \mathscr{G}^{\XX}$ is analytic.
As usual, define the integral Lagrange remainder
\begin{equation}
R_{k}(\Phi):= \int_{0}^{1} \dd t \frac{(1-t)^{k}}{k!}  \Big( \frac{\dd }{\dd t}\Big)^{k} \, \ee^{t f(\Phi)}
= \ee^{f(\Phi)} - \sum_{j = 0}^{k} \frac{\big(f(\Phi)\big)^{j}}{j!} \;,
\end{equation}
which we estimate in Grassmann norm, hence the claim.
\end{proof}

For the sake of brevity, we shall write $G_{N}(E + \ii \eps) := G_{N}(0,0;E + \ii \eps)$.
By Proposition~\ref{thm: cluster_expansion_green_function_weak}
we have that for $E \in \mathbb{R}$
\begin{equation}
\rho_{\varepsilon,\Lambda}(E) = -
\frac{\gamma^{-1}}{\pi|\s|} \sum_{N = 0}^{|\Lambda|} \re \Big ( (\ii \gamma)^{N} \mathrm{Tr}_{\s} \,G_{N}(E + \ii \eps) \Big) \;,
\end{equation}
where, by the proof of Theorem~\ref{thm: exp_decay_weak} the following bound holds for $C_{K,M,p} := C_{K,M,p,\theta = 0}$:
\begin{equation}
\label{eq: bound_G_N_generic}
\sup _{\sigma}\big| \big(G_{N}(E + \ii \eps) \big)_{\sigma,\sigma} \big| \leq 
(C_{K,M,p}/2)^{N + 1}\, \delta^{-N}   \;,
\end{equation}
provided that $\delta \geq \gamma \, C_{M,K,p}$.
Let us now fix $ \overline{N} \in \mathbb{N}$ sufficiently large to be optimized later. To prove the claim, it suffices to improve bound \eqref{eq: bound_G_N_generic} for $N < \overline{N}$ as follows:
\begin{equation}
\label{eq: bound_G_N_improved}
\sup _{\sigma} \Big |  \re \Big ( (\ii \gamma)^{N}\big( G_{N}(E + \ii \eps)\big)_{\sigma,\sigma} \Big) \Big| 
\leq
(\overline{N}!)^{2p} (\gamma \,\delta^{-1}\widetilde{C}_{M,p})^{\overline{N}} 
  (\widetilde{C}_{K,M,p})^{N+1}
\end{equation}
for some constants $\widetilde{C}_{M,p}$ and $\widetilde{C}_{K,M,p}$.
Indeed, if bounds \eqref{eq: bound_G_N_generic} and \eqref{eq: bound_G_N_improved} hold true, then for any $\overline{N}$:
\begin{equation}
\big|\rho_{\varepsilon,\Lambda}(E) \big| \leq \gamma^{-1} C'_{K,M,p}
(\overline{N}!)^{2p} \big(C'_{K,M,p} \gamma\, \delta^{-1}\big)^{ \overline{N}} \;
\end{equation}
for some other constant $C'_{K,M,p}$. Therefore, by minimizing over $\overline{N} \in \mathbb{N}$ we obtain the statement for a suitable constant $\bar{C}_{K,M,p}$:
\begin{equation}
\big| \rho_{\Lambda}(E) \big|
\leq \gamma^{-1} C'_{K,M,p} \, \ee ^{-\bar{C}_{K,M,p} \,(\gamma^{-1}\, \delta)^{1/2p}} \;.
\end{equation}
Let us now prove \eqref{eq: bound_G_N_improved}. To begin, 
we Taylor expand the super Gibbs' weight $\widehat{\mu}_{\yy}^{+}(\xi,s)$:
\begin{equation}
\begin{split}
 \widehat{\mu}_{\yy}^{+}(\xi,s) &= \sum_{j = 0}^{\overline{N} - N -1} \frac{(\ii \gamma )^{j}}{j!} \Big(\frac{1}{2} \sum_{x,y \in \yy} s_{x,y} \hat{v}_{x,y}^{+}(\xi)\Big)^{j} 
 \, + \, \mathrm{Rem}_{_{\overline{N} - N}}\,\widehat{\mu}_{\yy}^{+}(\xi,s)
 \end{split}
\end{equation}
and accordingly introduce the splitting,
\begin{equation}
G_{N}(E + \ii \eps) = 
G_{\overline{N},N}(E + \ii \eps) + R_{\overline{N},N}(E + \ii \eps)
\end{equation}
where
\begin{equation}
\label{eq: definition_G_NN}
\begin{split}
& G_{\overline{N},N}(E + \ii \eps) := 
\\
& =
\frac{1}{N!}
\sum_{l = 0}^{\overline{N}-N-1} \frac{(\ii \gamma)^{l}}{l!}
 \sum_{\substack{\textsc{t} \,\,\, \mathrm{on} \\
 \,\{1,...,N+1\}}} \,\,
\sum_{\substack{x_{1},...,x_{N} \in \Lambda' \\ \mathrm{distinct}}}  
\int \dd p_{\textsc{t}}(s)  \int \dd \xi^{\yy} \;\cdot
\\
& \cdot \;  \Big( \sum_{x,y \in \yy}
 \frac{1}{2} s_{x,y} \,\hat{v}_{x,y}^{+}(\xi)
\Big)^{l} \,
 \prod_{\{i,j\} \in \textsc{t}} \hat{v}^{+}_{x_{i},x_{j}}(\xi) \,\, \frac{\partial}{\partial \eta^{+}_{x}} \frac{\partial}{\partial \eta^{-}_{y}}\widehat{F_{0}}^{\yy}(\xi)
 \\
 &
 = \frac{1}{N!}
\sum_{l = 0}^{\overline{N}-N-1} \frac{(\ii \gamma)^{l}}{l!}
 \sum_{\substack{\textsc{t} \,\,\, \mathrm{on} \\
 \,\{1,...,N+1\}}} \,\,
\sum_{\substack{x_{1},...,x_{N} \in \Lambda' \\ \mathrm{distinct}}}  
\int \dd p_{\textsc{t}}(s) \; \cdot 
\\
& \cdot \;  \Big( - \frac{1}{2}\sum_{x,y \in \yy}
s_{x,y} \, \hat{v}^{+}_{x,y}(\partial /\partial \Phi)
\Big)^{l} \,
 \prod_{\{i,j\} \in \textsc{t}} 
 \hat{v}^{+}_{x_{i},x_{j}}(\partial /\partial \Phi) \,\, \psi^{-}_{0}\psi^{+}_{0} \, F_{0}^{\yy}(\Phi) \Big|_{\Phi = 0}
\end{split}
\end{equation}
with $\hat{v}^{+}_{x,y}(\partial /\partial \Phi) :=  \partial /\partial \Phi_{x}^{+} C_{x,y}^{E + \ii \eps} \partial /\partial \Phi_{y}^{-} + \partial /\partial \Phi_{y}^{+} C_{y,x}^{E + \ii \eps} \partial /\partial \Phi_{x}^{-}$.
Since $\nu$ is even by assumption, $\hat{\nu}^{(2n+1)}(0)= \mathbb{E} \, \omega^{2n+1} = 0$ and thus $\psi^{-}_{0} \psi^{+}_{0}F^{\yy}_{0}(\Phi)$ has non-vanishing derivatives in $\Phi = 0$ only of order $2(2n+1)$, $n \in \mathbb{N}$.
As a consequence, only the terms such that $l + N$ is odd survive in the sum above. Finally, $ C_{x,y}^{E + \ii 0^{+}} \in \mathbb{R}$ for any $x,y$, and $\hat{\nu}^{(2n)}(0) \in \mathbb{R}$ imply that $G_{\overline{N},N}(E + \ii \eps)$ does not contribute to $\rho_{\Lambda}(E)$, that is
\begin{equation}
\re \Big ( (\ii \gamma)^{N} G_{\overline{N},N}(E + \ii 0^{+}) \Big) = 0 \;.
\end{equation}
We are left with bounding $\big(R_{\overline{N},N}(E + \ii \eps )\big)_{\sigma,\sigma}$:
\begin{equation}
\begin{split}
 & R_{\overline{N},N}(E + \ii \eps)  =
 \frac{1}{N!}\sum_{\substack{\textsc{t} \,\,\, \mathrm{on} \\
 \,\{1,...,N+1\}}} \,\sum_{\wp, \, \underline{\sharp}, \, \underline{\sigma}}\,\,
\sum_{\substack{x_{1},...,x_{N} \in \Lambda' \\ \mathrm{distinct}}}  
\,
\int \dd p_{\textsc{t}}(s) \; \cdot
\\
& \qquad \cdot \; \int \dd \xi_{\yy} 
\Big(\mathrm{Rem}_{\overline{N}-N-1}
 \widehat{\mu}_{\yy}^{+}(\xi,s)\Big) \,
\mathcal{F}_{\textsc{t}_{\wp},\,\underline{\sharp}, \, \underline{\sigma}}^{\yy}(\xi)  \,\Big (\prod_{\ell \in \textsc{T}_{\wp}}
\big(C^{E \pm \ii \eps}_{x_{\ell^{+}},x_{\ell^{-}}}\big)_{\sigma_{\ell}^{+}, \sigma^{-}_{\ell}}\Big ) \;,
\end{split}
\end{equation}
where the notation is again borrowed from the previous proof, see \eqref{eq: F_mathcal_weak} and the text above.
We introduce $\tilde{x} = (\tilde{x}^{+},\tilde{x}^{-}) \in \big(\YY^{\overline{N}-N}\big)^{ 2}$, $ \tilde{\sharp} \in \{B,F \}^{\overline{N}-N}$ and $\tilde{\sigma} = (\tilde{\sigma}^{+},\tilde{\sigma}^{-}) \in \big(\s^{\overline{N}-N}\big)^{2}$. For any $\tilde{x}$, $\tilde{\sharp}$ and $\tilde{\sigma}$, we define a sequence $\underline{\tilde{d}_{i}} := \{ (\tilde{d}_{i})^{\varepsilon}_{\sharp,\sigma}\}_{\varepsilon = \pm, \sharp = B,F}$ at any vertex $i \in \{1,...,N+1 \}$ of the tree,  where
$(\tilde{d}_{i})^{\varepsilon}_{\sharp,\sigma}:= \sum_{j = 1}^{\overline{N} - N} \delta_{x_{i},\tilde{x}_{j}^{\varepsilon}} \delta_{\sharp,\tilde{\sharp}_{j}} \delta_{\sigma,\tilde{\sigma}^{\varepsilon}_{j}}$. Furthermore, we define the following local function:
\begin{equation}
\label{eq: def_F_remainder_local_function}
\mathcal{F}_{\substack{\textsc{t}_{\wp},\,\underline{\sharp}, \, \underline{\sigma} \\ \tilde{x},\,\tilde{\sharp},\,\tilde{\sigma}}}^{\yy,}(\xi) := 
\Big (\prod_{i =1}^{N+1} \prod_{\varepsilon,\sharp,\sigma} \big(\xi^{\varepsilon}_{\sharp,x_{i},\sigma} \big)^{(\tilde{d}_{i})_{\sharp,\sigma}^{\varepsilon}} \Big)
 \mathcal{F}_{\textsc{t}_{\wp},\,\underline{\sharp},\,\underline{\sigma}}^{\yy}(\xi)
 = \pm \prod_{i = 1}^{N+1} \, \mathcal{F}_{x,y}^{\, \underline{d_{i}} + \underline{\tilde{d}_{i}}}(\xi_{x_{i}}) \;,
\end{equation}
where again the sign is unimportant for our purposes and where $\widehat{\mathcal{F}}_{x,y}^{\,\underline{d}}(\xi_{x'})$ was introduced in \eqref{eq: mathcal_F_local_function}. Finally, by Lemma~\ref{lemma: Lagrange_remainder}, we estimate $\mathrm{Rem}_{\overline{N} - N}$ in Grassmann norm as a super Lagrange remainder:
\begin{multline}
\big| \big(R_{\overline{N},N}(E + \ii \eps) \big)_{\sigma,\sigma}\big| 
\leq  \, \frac{4^{N+1}}{N!} \frac{(2\gamma)^{\overline{N}-N}}{(\overline{N}-N)!}
\\
\sum_{\substack{\textsc{t} \,\,\, \mathrm{on} \\
 \,\{1,...,N+1 \}}} \,
\sup _{\substack{x_{1}, \dots, x_{N} \\ s}} \sup _{\substack{\wp , \,\underline{\sharp},\,\underline{\sigma} \\  \tilde{x},\, \tilde{\sharp}, \, \tilde{\sigma}}} 
\int \dd \kappa_{\yy}  \Big \| \widehat{\mu}^{\pm}_{\yy}(\xi,s) \Big \| \Big\|\;
\Big(\mathcal{F}_{\substack{\textsc{t}_{\wp},\,\underline{\sharp}, \, \underline{\sigma} \\ \tilde{x},\,\tilde{\sharp},\,\tilde{\sigma}}}^{\yy}(\xi) \Big)_{\sigma,\sigma}\Big\| 
 \\
 \qquad\bigg ( \sup _{\wp} \, \sum_{\underline{\sigma}} \sum_{\substack{x_{1},...,x_{N} \in \Lambda' \\ \mathrm{distinct}}}  
\prod_{\ell \in \textsc{T}_{\wp}}
\big| \big(C^{E + \ii \eps}_{x_{\ell^{+}},x_{\ell^{-}}}\big)_{\sigma^{+}_{\ell},\sigma^{-}_{\ell}}\big|  \bigg) 
\\
 \bigg(
\sum_{\tilde{x},\, \tilde{\sigma}} \prod_{j = 1}^{\overline{N} -N} \big| \big(C^{E + \ii \eps}_{\tilde{x}^{+}_{j},\tilde{x}^{-}_{j}}\big)_{\tilde{\sigma}^{+}_{j},\tilde{\sigma}^{-}_{j}}\big| \bigg) \, 
 \;.
\end{multline}
In order to bound the integral, we follow the strategy used in the previous proof. Using Eq.~\eqref{eq: def_F_remainder_local_function}
we notice that
\begin{equation}
 \sup _{\substack{\wp , \,\underline{\sharp},\,\underline{\sigma} \\  \tilde{x},\, \tilde{\sharp}, \, \tilde{\sigma}}} 
 \Big \|\Big(\mathcal{F}_{\substack{\textsc{t}_{\wp},\,\underline{\sharp}, \, \underline{\sigma} \\ \tilde{x},\,\tilde{\sharp},\,\tilde{\sigma}}}^{\yy}(\xi) \Big)_{\sigma,\sigma} \Big \| \leq 
\prod_{i = 1}^{|\YY |} \; 2 K \, \frac{((d_{i}+\tilde{d}_{i}+2m)!)^{p} \, M^{d_{i} + \tilde{d}_{i}} }{1 + M^{-2m}\, (\kappa_{x_{i}}^{+} \kappa_{x_{i}}^{-})^{m}}, \qquad \forall m \in \mathbb{N} \;.
\label{eq: decay_global_fourier_transform_rem}
\end{equation}
Thus:
\begin{multline}
\sup _{\substack{x_{1}, \dots, x_{N} \\ s}} \sup _{\substack{\wp , \,\underline{\sharp},\,\underline{\sigma} \\  \tilde{x},\, \tilde{\sharp}, \, \tilde{\sigma}}} 
\int \dd \kappa_{\yy}  \Big \| \widehat{\mu}^{\pm}_{\yy}(\xi,s)\Big \| \; \Big \|
\Big(\mathcal{F}_{\substack{\textsc{t}_{\wp},\,\underline{\sharp}, \, \underline{\sigma} \\ \tilde{x},\,\tilde{\sharp},\,\tilde{\sigma}}}^{\yy}(\xi) \Big)_{\sigma,\sigma}\Big\| 
\\
\leq
  (K'_{M})^{|\YY |} \,
  \big((2\overline{N} - 2N)!\big)^{p} \prod_{i = 1}^{|\YY |}
  ((d_{i}+ 2|\s|+2)!)^{p} \,(2^{p}M)^{d_{i} + \tilde{d}_{i}} \;.
\end{multline}
for some constant $K'_{M}$ depending on $|\s|$ as well. Finally, by noticing that 
\begin{equation}
\sum_{\tilde{x}, \, \tilde{\sigma}} \prod_{j = 1}^{\overline{N} -N} \big| \big(C^{E + \ii \eps}_{\tilde{x}^{+}_{j},\tilde{x}^{-}_{j}} \big)_{\tilde{\sigma}^{+}_{j},\,\tilde{\sigma}^{-}_{j}}\big|  \leq \big( |\YY | \, |\s|\,\| C^{E + \ii \eps}\|_{\infty,1} \big)^{\overline{N} - N} \;,
\end{equation}
we obtain the following bound for some constant $\widetilde{C}_{M,p}$,
\begin{equation}
\begin{split}
\big| \big(R_{\overline{N},N}(E + \ii \eps) \big)_{\sigma,\sigma}\big|  \leq
(\overline{N}!)^{2p} (\delta^{-1}\widetilde{C}_{M,p})^{\overline{N}} 
 \gamma^{\overline{N}-N} (C_{K,2^{p} M,p, \theta = 0})^{N+1} \;,
\end{split}
\end{equation}
which implies the desired estimate \eqref{eq: bound_G_N_improved}.
\qed
\medskip

\noindent{\bf Acknowledgements.} 
I express my sincere gratitude to Marcello Porta for inspiring discussions and valuable guidance.
This work has been supported by the Swiss National Science Foundation via the grant ``Mathematical Aspects of Many-Body Quantum Systems'' and by the European Research Council (ERC) under  the European  Union's  Horizon  2020  research  and  innovation programme, ERC  Starting Grant  MaMBoQ, Grant Agreement
No.~802901.
I thank the anonymous referee for useful comments on a previous version of
the paper.

\appendix

\section{Disorder Distribution}


We discuss two examples that connect the density $\nu$ with the IMB of Definition~\ref{def: bounds_strong_disorder}:
\medskip

\underline{\textit{Example} $\mathrm{I}.1$:}
We can state the following lemma.
\begin{lemma}
Let $\nu \in \mathscr{S}(\mathbb{R})$ be analytic in the strip $| \im \,t| < W$. Then, if $\gamma^{-1}z < W$, $F_{z}$ satisfies IMB with $M = (W/2 - \gamma^{-1}z/2 )^{-1/2}$ and $p = 1/2$. 
\end{lemma}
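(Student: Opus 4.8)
The plan is to reduce the IMB to a single-site Gaussian computation, exploiting that $F_z$ is a radial superfunction. I would first write $g(r):=\ee^{\gamma^{-1}zr}\hat{\nu}(r)=\int \ee^{(\ii\omega+\gamma^{-1}z)r}\nu(\omega)\dd\omega$, so that, expanding in the nilpotent even element $\psi^{+}\psi^{-}$,
\[
F_z(\Phi)=g(\Phi^{+}\Phi^{-})=\sum_{k=0}^{|\s|}g^{(k)}(\phi^{+}\phi^{-})\,\frac{(\psi^{+}\psi^{-})^{k}}{k!}\,,\qquad g^{(k)}(r)=\int(\ii\omega+\gamma^{-1}z)^{k}\,\ee^{(\ii\omega+\gamma^{-1}z)r}\,\nu(\omega)\,\dd\omega\,.
\]
The monomial prefactor in Definition~\ref{def: bounds_strong_disorder} is a bosonic scalar, and $(\psi^{+}\psi^{-})^{k}/k!=\sum_{|S|=k}\prod_{\sigma\in S}\psi^{+}_{\sigma}\psi^{-}_{\sigma}$ is a sum of $\binom{|\s|}{k}$ distinct basis monomials with unit-modulus coefficients, so the Grassmann norm factorizes cleanly. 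Setting $m_\sigma:=n^{+}_\sigma+n^{-}_\sigma$ and $n=\sum_\sigma m_\sigma$, and using $|\phi^{\varepsilon}_\sigma|=|\phi^{+}_\sigma|$ together with $\phi^{+}\phi^{-}=\sum_\sigma|\phi^{+}_\sigma|^{2}\ge 0$,
\[
\Big\|\Big(\textstyle\prod_{\varepsilon,\sigma}(\phi^{\varepsilon}_\sigma)^{n^{\varepsilon}_\sigma}\Big)F_z(\Phi)\Big\|=\Big(\textstyle\prod_\sigma|\phi^{+}_\sigma|^{m_\sigma}\Big)\sum_{k=0}^{|\s|}\binom{|\s|}{k}\big|g^{(k)}(\phi^{+}\phi^{-})\big|\,.
\]

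The decisive step, and the only place the hypothesis enters, is a pointwise bound on $|g^{(k)}(r)|$ for $r=\phi^{+}\phi^{-}\ge 0$. Since $\nu$ is analytic in $|\im t|<W$ and Schwartz, I would shift the contour defining $g^{(k)}$ from $\mathbb{R}$ to $\mathbb{R}+\ii s$ at the midpoint $s=(W+\gamma^{-1}z)/2$. Writing $a:=s-\gamma^{-1}z=(W-\gamma^{-1}z)/2>0$, the substitution $\omega=u+\ii s$ turns $\ii\omega+\gamma^{-1}z$ into $\ii u-a$ and $\ee^{(\ii\omega+\gamma^{-1}z)r}$ into $\ee^{(\ii u-a)r}$, whence
\[
\big|g^{(k)}(r)\big|\le \ee^{-a r}\int(u^{2}+a^{2})^{k/2}\,|\nu(u+\ii s)|\,\dd u=:\ee^{-a r}A_k\,,
\]
with $A_k<\infty$ for $0\le k\le|\s|$ because $s<W$ and $\nu$ decays rapidly on the line $\im=s$. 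The choice of the midpoint balances the decay rate $a$ against the growth of $A_k$ as $s\to W$; any $s<W$ would give a valid IMB and only change the constant $K$.

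It remains to integrate over $\phi$. Combining the two displays above,
\[
\int\dd\phi\,\Big\|\Big(\textstyle\prod_{\varepsilon,\sigma}(\phi^{\varepsilon}_\sigma)^{n^{\varepsilon}_\sigma}\Big)F_z(\Phi)\Big\|\le\Big(\sum_{k=0}^{|\s|}\binom{|\s|}{k}A_k\Big)\int\dd\phi\,\Big(\textstyle\prod_\sigma|\phi^{+}_\sigma|^{m_\sigma}\Big)\ee^{-a\sum_\sigma|\phi^{+}_\sigma|^{2}}\,,
\]
and the last integral factorizes over $\sigma$; in polar coordinates each factor equals $2\int_0^\infty\rho^{m_\sigma+1}\ee^{-a\rho^{2}}\dd\rho=a^{-m_\sigma/2-1}\Gamma(m_\sigma/2+1)$, so the product is $a^{-n/2-|\s|}\prod_\sigma\Gamma(m_\sigma/2+1)$. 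A Stirling estimate gives $\Gamma(m/2+1)\le C\,(m!)^{1/2}$, and the multinomial bound $\prod_\sigma m_\sigma!\le(\sum_\sigma m_\sigma)!=n!$ then yields $\prod_\sigma\Gamma(m_\sigma/2+1)\le C^{|\s|}(n!)^{1/2}$. Absorbing the $k$-sum and all $\sigma$-independent factors into a single constant $K$, and reading off $M=a^{-1/2}=(W/2-\gamma^{-1}z/2)^{-1/2}$ and $p=1/2$, gives precisely the claimed IMB. I expect the contour shift to be the main obstacle: one must justify the deformation by Cauchy's theorem together with the decay of the Schwartz integrand on the vertical segments, and verify that analyticity in the \emph{open} strip propagates to rapid decay on the interior line $\im=s$ so that the constants $A_k$ are finite. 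Everything downstream is a routine Gaussian and Stirling computation.
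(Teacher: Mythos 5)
Your proof is correct and follows essentially the same route as the paper's: analyticity of $\nu$ in the strip yields exponential decay of $F_{z}$ in $\phi^{+}\phi^{-}$ at rate $a=(W-\gamma^{-1}z)/2$ (the paper shifts to the full rate $W$ and then spends half of it absorbing the monomial via $|\phi^{\varepsilon}_{\sigma}|^{m}\leq (m/2)!\,a^{-m/2}\ee^{a\,\phi^{+}_{\sigma}\phi^{-}_{\sigma}}$, whereas you shift only to the midpoint and evaluate the Gaussian moments exactly with $\Gamma$-functions), and in both cases the Gaussian moments produce $M^{n}(n!)^{1/2}$ with $M=a^{-1/2}$. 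The only caveat --- which you correctly flag and which the paper glosses over just as silently --- is that rapid decay of $\nu$ on interior horizontal lines of the strip does not follow automatically from analyticity in the open strip plus Schwartz decay on $\mathbb{R}$; both arguments implicitly read the hypothesis as including such uniform control.
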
 
\begin{proof}
The assumptions on $\nu$ imply that $\|\hat{\nu}(\Phi^{+}\Phi^{-}) \| \leq C \ee^{-W \phi^{+}\phi^{-}}$, for some universal constant $C$. Using that $|(\phi^{+}_{\sigma})^{n_{\sigma}^{+}}(\phi^{-}_{\sigma})^{n_{\sigma}^{-}}| \leq (n_{\sigma}/2)!\, M ^{-n_{\sigma}/2}\,\ee^{M\phi^{+}_{\sigma}\phi^{-}_{\sigma}}$, where $n_{\sigma} =n_{\sigma}^{+} + n_{\sigma}^{-} $ and $M = W/2 - \gamma^{-1}z/2 $ the claim follows. Here $K$ grows with $M^{-1}$.
\end{proof}
Notice that Example I.1 include Gaussian distributions.
\medskip

\underline{\textit{Example} $\mathrm{I}.2$:} The decay of the Fourier transform of test functions can be quantitatively characterised, see, e.g., \cite{Johnson}. For instance, if $\nu(t) = \ee^{-\frac{1}{1-t^{2}}}\mathbf{1}_{|t|\leq 1}$, then $F_{0}$ satisfies IMB with $M = 2$ and $p = 1$.
\bigskip

We then discuss two examples that connect the density $\nu$ with the IDB of Definition~\ref{def: bounds_weak_disorder}. Both examples include Gaussian distributions.
\medskip

\underline{\textit{Example} $\mathrm{II}.1$:} Introduce the following seminorm on the functions of a supervector:
\begin{equation}
\| f \|_{1,W} := 
\sup _{\zeta \in \mathbb{R}_{W}^{2}}
\int_{\mathbb{R}^{2\s}} \dd \phi \,\big\| f(\Phi + (\zeta,0))\big\| \;,
\end{equation}
where $ \mathbb{R}_{W}^{2\s} = \big \{  \phi \in \mathbb{C}^{2\s} \; | \; |\im \, \phi_{{\scriptscriptstyle{i}},\sigma}| \leq W , \; i = 1,2 \;, \;\,\sigma \in \s \big \} $. We can state the following lemma.
\begin{lemma}
Assume that for any $\alpha \in \mathbb{R}$, $\ee^{\alpha |t|} \nu (t)$ is bounded and that  $ \| F_{z} \|_{1,W} $ is finite for some $0 < W  \leq 1$. Then $F_{z}$ satisfies IDB with $K = \| F_{z} \|_{1,W}$, $M = W^{-1}$ and $p = 1$.
\end{lemma}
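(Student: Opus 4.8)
The plan is to verify the integrable derivative bounds (IDB) of Definition~\ref{def: bounds_weak_disorder} by writing each mixed derivative $\prod_{\varepsilon,\sigma}(\partial/\partial\phi^{\varepsilon}_{\sigma})^{n^{\varepsilon}_{\sigma}}$ of $F_{z}$ as a multivariate Cauchy integral and estimating it in Grassmann norm directly through the seminorm $\|\cdot\|_{1,W}$. The assumption that $\ee^{\alpha|t|}\nu(t)$ is bounded for every $\alpha$ guarantees that $\hat{\nu}(w)=\int\ee^{\ii wt}\nu(t)\,\dd t$ is entire, so that the coefficients of $F_{z}(\Phi)=\ee^{\gamma^{-1}z\,\Phi^{+}\Phi^{-}}\hat{\nu}(\Phi^{+}\Phi^{-})$ extend to entire functions of the complexified bosonic variables. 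In particular the shifted superfunction $\Phi\mapsto F_{z}(\Phi+(\zeta,0))$ is well defined for every $\zeta\in\mathbb{C}^{2\s}$, and the finiteness of $\|F_{z}\|_{1,W}$ provides a uniform $L^{1}$ bound for its Grassmann norm on the slices $|\im\zeta_{\sigma,i}|\leq W$.

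First I would fix the multi-index $\{n^{\varepsilon}_{\sigma}\}$ and Taylor-expand the analytic shift in the $2|\s|$ complex variables $\zeta^{\varepsilon}_{\sigma}:=\zeta_{\sigma,1}+\varepsilon\ii\,\zeta_{\sigma,2}$, which identifies the desired derivative with a Taylor coefficient and hence with a multivariate Cauchy integral over the product of circles $|\zeta^{\varepsilon}_{\sigma}|=W$:
\begin{equation}
\Big(\prod_{\varepsilon,\sigma}\big(\tfrac{\partial}{\partial\phi^{\varepsilon}_{\sigma}}\big)^{n^{\varepsilon}_{\sigma}}\Big)F_{z}(\Phi)
=\Big(\prod_{\varepsilon,\sigma}\frac{n^{\varepsilon}_{\sigma}!}{2\pi\ii}\oint_{|\zeta^{\varepsilon}_{\sigma}|=W}\frac{\dd\zeta^{\varepsilon}_{\sigma}}{(\zeta^{\varepsilon}_{\sigma})^{n^{\varepsilon}_{\sigma}+1}}\Big)F_{z}(\Phi+(\zeta,0))\,.
\end{equation}
The key geometric check is that this contour lands inside the admissible strip: inverting $\zeta^{\pm}_{\sigma}=\zeta_{\sigma,1}\pm\ii\zeta_{\sigma,2}$ gives $\zeta_{\sigma,1}=(\zeta^{+}_{\sigma}+\zeta^{-}_{\sigma})/2$ and $\zeta_{\sigma,2}=(\zeta^{+}_{\sigma}-\zeta^{-}_{\sigma})/(2\ii)$, and a one-line estimate shows $|\im\zeta_{\sigma,1}|\leq W$ and $|\im\zeta_{\sigma,2}|\leq W$ whenever $|\zeta^{+}_{\sigma}|=|\zeta^{-}_{\sigma}|=W$, so that $\zeta\in\mathbb{R}^{2\s}_{W}$.

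Next I would pass the Grassmann norm inside the contour integral via the triangle inequality for $\|\cdot\|$, integrate over $\phi\in\mathbb{R}^{2\s}$, apply Fubini, and bound the inner integral $\int\dd\phi\,\|F_{z}(\Phi+(\zeta,0))\|$ by $\|F_{z}\|_{1,W}$ for each point on the contour, the real parts of $\zeta_{\sigma,i}$ merely translating the integration variable. Since each angular integral contributes $\oint|\dd\zeta^{\varepsilon}_{\sigma}|=2\pi W$, this yields
\begin{equation}
\int\dd\phi\,\Big\|\Big(\prod_{\varepsilon,\sigma}\big(\tfrac{\partial}{\partial\phi^{\varepsilon}_{\sigma}}\big)^{n^{\varepsilon}_{\sigma}}\Big)F_{z}(\Phi)\Big\|
\leq\|F_{z}\|_{1,W}\prod_{\varepsilon,\sigma}\frac{n^{\varepsilon}_{\sigma}!}{W^{n^{\varepsilon}_{\sigma}}}
=\|F_{z}\|_{1,W}\,W^{-n}\prod_{\varepsilon,\sigma}n^{\varepsilon}_{\sigma}!\,.
\end{equation}
Using $\prod_{\varepsilon,\sigma}n^{\varepsilon}_{\sigma}!\leq\big(\sum_{\varepsilon,\sigma}n^{\varepsilon}_{\sigma}\big)!=n!$ then gives the IDB with $K=\|F_{z}\|_{1,W}$, $M=W^{-1}\geq1$ and $p=1$.

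The main obstacle is not the combinatorics but the analytic bookkeeping: rigorously justifying the holomorphic extension of the (radial) coefficients of $F_{z}$ to the polydisc $\{|\zeta^{\varepsilon}_{\sigma}|\leq W\}$, the validity of the multivariate Cauchy representation there, and the interchange of the Grassmann norm with the $\phi$-integration and the contour integrals. Each of these is secured precisely by the two standing hypotheses — the super-exponential decay of $\nu$, which makes $\hat{\nu}$ (hence every coefficient of $F_{z}$) entire, and the finiteness of $\|F_{z}\|_{1,W}$, which supplies the dominating $L^{1}$ bound needed for Fubini — so that once these are invoked the estimate above is purely mechanical.
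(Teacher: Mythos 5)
Your proof is correct and follows essentially the same route as the paper: both arguments rest on the multivariate Cauchy integral formula over polydiscs of radius $W$ lying inside $\mathbb{R}^{2\s}_{W}$, followed by passing the Grassmann norm and the $\phi$-integration through the contour integral and invoking the seminorm $\|F_{z}\|_{1,W}$. The only (cosmetic) difference is that you run Cauchy's formula directly in the complexified variables $\phi^{\pm}_{\sigma}$ and verify that the contour sits in the admissible strip, whereas the paper works in the real coordinates $\phi_{\sigma,1},\phi_{\sigma,2}$ and converts at the end via $\partial/\partial\phi^{\varepsilon}_{\sigma}=\tfrac{1}{2}(\partial/\partial\phi_{\sigma,1}-\ii\varepsilon\,\partial/\partial\phi_{\sigma,2})$.
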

\begin{proof}
We notice that $\hat{\nu}$ is entire and so is $F_{z}$ in $\phi$. 
We thus apply multi-variable Cauchy integral formula in $\mathbb{R}_{W}^{2\s}$. Accordingly,
\begin{multline}
 \bigg[\prod_{i, \sigma} \Big(\frac{\partial}{\partial \phi_{i,\sigma}} \Big)^{n_{i,\sigma}} 
 \bigg] F_{z}(\Phi) 
 = \Big(\prod_{i,\sigma} \frac{n_{i,\sigma}!}{2\pi \ii} \Big) \oint_{(\partial D(0,W))^{2\s}} \dd w \, \frac{F_{z}\big(\Phi + (w,0)\big)}{\prod_{i,\sigma} w_{i,\sigma}^{n_{i,\sigma}+1}} \,,
\end{multline}
where $D(0,W) = \big\{ w \in \mathbb{C} \, \big| \,|w| \leq W \big\}$ and $\dd w = \bigtimes_{i,\sigma} \dd w_{i,\sigma}$; therefore:
\begin{equation}
\begin{split}
\int \dd \phi  \,\Big\|\bigg[\prod_{i, \sigma} \Big(\frac{\partial}{\partial \phi_{i,\sigma}} \Big)^{n_{i,\sigma}} 
 \bigg] F_{z}(\Phi) \Big \| 
\leq n! \, W^{-n}
\,\|F_{z} \|_{1,W} \;,
\end{split}
\end{equation}
where $n = \sum_{i,\sigma} n_{i,\sigma}$. 
The claim follows because $\partial /\partial\phi^{\varepsilon}_{\sigma} = 1/2 ( \partial /\partial \phi_{{\scriptscriptstyle{1}},\sigma}$ $ -\ii \varepsilon  \partial /\partial \phi_{{\scriptscriptstyle{2}},\sigma} )$.
\end{proof}
\medskip 

\underline{\textit{Example} $\mathrm{II}.2$:} Introduce the following seminorm on the functions of a supervector:
\begin{equation}
\triple{f}_{1,W} := \sup _{\zeta \in  ( \partial D(0,W))^{2\s}}
\int _{\mathbb{R}^{2\s}} \dd \phi \,\, \ee^{\,\sum_{\sigma \in \s} (\phi^{+}_{\sigma} \phi^{-}_{\sigma})^{1/2}} \, \|f\big(\Phi + (\zeta,0)\big) \| \;,
\end{equation}
where $D(0,W) \subset \mathbb{C}$ was defined above. We can state the following lemma.
\begin{lemma}
Assume $\ee^{W|t|} \nu(t)$ is bounded and $ \triple{F_{z}}_{1,W} $ is finite for some $0 < W  \leq 1$. Then $F_{z}$ satisfies factorial bounds with $K = |\s|!\,2^{|s|}\,\triple{F_{z}}_{1,W}$, $M = 4(W^{-1}+ \gamma^{-1}z)$, $p = 2$.
\end{lemma}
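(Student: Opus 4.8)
The plan is to reduce the integrable derivative bounds (IDB) of Definition~\ref{def: bounds_weak_disorder} to estimates on the scalar profile $g(t):=\ee^{\gamma^{-1}z\,t}\hat{\nu}(t)$, and then to extract the two factorials from two distinct sources: one from the strip analyticity of $\hat{\nu}$ and one from the exponential weight in $\triple{\cdot}_{1,W}$. First I would expand $F_z$ in the nilpotent fermionic bilinear. Writing $\phi^{+}\phi^{-}=\sum_{\sigma}\phi^{+}_{\sigma}\phi^{-}_{\sigma}$ and $\psi^{+}\psi^{-}=\sum_{\sigma}\psi^{+}_{\sigma}\psi^{-}_{\sigma}$, one has $F_z(\Phi)=\sum_{n=0}^{|\s|} g^{(n)}(\phi^{+}\phi^{-})\,(\psi^{+}\psi^{-})^{n}/n!$, a finite sum. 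Since $\|(\psi^{+}\psi^{-})^{n}\|=n!\binom{|\s|}{n}$, the fermionic content is harmless and, together with the combinatorics below, accounts for the factor $|\s|!\,2^{|\s|}$ in $K$. Because the $\phi$-derivatives act only on the coefficients $g^{(n)}(\phi^{+}\phi^{-})$, the whole problem reduces to bounding $\int\dd\phi\,\big\|\prod_{\varepsilon,\sigma}(\partial/\partial\phi^{\varepsilon}_{\sigma})^{n^{\varepsilon}_{\sigma}} g^{(n)}(\phi^{+}\phi^{-})\big\|$ for each $n\le|\s|$.

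Next I would produce the two factorials. Differentiating the composite $g^{(n)}(\phi^{+}\phi^{-})$ by the chain rule yields a finite sum of terms, each of the shape (monomial in $\phi^{\pm}$ of degree $n_{\mathrm{tot}}-2j$) times $g^{(n+m)}(\phi^{+}\phi^{-})$, with $m=n_{\mathrm{tot}}-j\le n_{\mathrm{tot}}$ and $n_{\mathrm{tot}}:=\sum_{\varepsilon,\sigma}n^{\varepsilon}_{\sigma}$; here $j$ counts the internal contractions and the combinatorial weights are controlled by powers of $2$. Since $\ee^{W|t|}\nu$ is bounded, $\hat{\nu}$, hence $g$, is analytic in the strip $|\im t|<W$, so Cauchy's formula in the scalar variable $t$, on a circle of radius $\to W$ about the real point $t=\phi^{+}\phi^{-}$, gives $|g^{(n+m)}(t)|\lesssim (n+m)!\,(W^{-1}+\gamma^{-1}z)^{\,n+m}\,\sup_{|s-t|=W}|g(s)|$; the $\ee^{\gamma^{-1}z t}$ prefactor is exactly what turns the bare Cauchy radius $W^{-1}$ into $W^{-1}+\gamma^{-1}z$. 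Using $n+m\le|\s|+n_{\mathrm{tot}}$ and $(|\s|+n_{\mathrm{tot}})!\le 2^{|\s|+n_{\mathrm{tot}}}|\s|!\,n_{\mathrm{tot}}!$ this produces the first factorial $n_{\mathrm{tot}}!$, the constant $|\s|!$, and harmless powers of $2$. The second factorial comes from the $\phi$-integration: the surviving monomial has degree $d\le n_{\mathrm{tot}}$, and with $r:=\sum_{\sigma}(\phi^{+}_{\sigma}\phi^{-}_{\sigma})^{1/2}$ the elementary inequality $r^{d}\le d!\,\ee^{r}$ bounds it by $d!\,\ee^{\sum_{\sigma}(\phi^{+}_{\sigma}\phi^{-}_{\sigma})^{1/2}}$, which is precisely the weight appearing in $\triple{\cdot}_{1,W}$. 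Collecting $(n+m)!\,d!\le(n_{\mathrm{tot}}!)^{2}\,(\mathrm{const})^{n_{\mathrm{tot}}}$ gives $p=2$, while the scale $W^{-1}+\gamma^{-1}z$, the factors $\tfrac12$ in $\partial/\partial\phi^{\varepsilon}_{\sigma}=\tfrac12(\partial_{\phi_{\sigma,1}}-\ii\varepsilon\partial_{\phi_{\sigma,2}})$, and the chain-rule powers of $2$ assemble into $M=4(W^{-1}+\gamma^{-1}z)$.

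The hard part will be to close the estimate against the triple norm rather than against a crude strip bound. Replacing $\sup_{|s-t|=W}|g(s)|$ by $\ee^{\gamma^{-1}z(t+W)}\sup_{|\im s|\le W}|\hat{\nu}(s)|$ would leave an integrand growing like $\ee^{+\gamma^{-1}z\,\phi^{+}\phi^{-}}$, which is not integrable against $\ee^{r}$; the decay of $\hat{\nu}$ at the large complex argument must be retained, and this is exactly the information encoded in the finiteness of $\triple{F_z}_{1,W}$. The delicate point is that the inner map $\phi\mapsto\phi^{+}\phi^{-}$ is quadratic, so a constant shift of $\phi$ on the torus $(\partial D(0,W))^{2\s}$ does not translate into a constant shift of the strip variable $t$. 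Reconciling the Cauchy contour in $t$ with the fixed radius-$W$ shifts in the definition of $\triple{\cdot}_{1,W}$, while keeping $\phi^{+}\phi^{-}$ inside the analyticity strip, is the technical heart of the argument; I expect to handle it by a contour deformation adapted to $|\phi|$. Once this matching is in place, the remaining $\phi$-integral is dominated by $\triple{F_z}_{1,W}$, and the stated constants $K=|\s|!\,2^{|\s|}\triple{F_z}_{1,W}$, $M=4(W^{-1}+\gamma^{-1}z)$ and $p=2$ follow.
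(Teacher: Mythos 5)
Your proposal follows essentially the same route as the paper's proof: view $F_{z}$ as a composite function of the bilinear $\Phi^{+}\Phi^{-}$, differentiate by the chain rule, extract one factorial from Cauchy estimates on the analyticity strip of $\hat{\nu}$ and the other from the elementary bound $r^{d}\le d!\,\ee^{r}$ matched against the exponential weight in $\triple{\cdot}_{1,W}$, then integrate against the seminorm. The contour-matching subtlety you flag at the end is real, but the paper's own proof disposes of it just as tersely (``taking the superior over the contour variable''), so your sketch is at least as detailed as the original.
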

\begin{proof}
We notice that $\hat{\nu}(t)$ is holomorphic on the strip $| \im \, t| \leq W$ which we shall use to estimate its derivative. We think of $F_{z}(\Phi)$ as the composite function $f(\Phi^{+}\Phi^{-})$ and compute its derivative accordingly. For simplicity, we carry out the computation in the case of $|\s| = 1$ and we take, e.g.,  $n_{\sigma}^{+}\geq n_{\sigma}^{-}$ and $n_{\sigma} = n_{\sigma}^{+}+n_{\sigma}^{-}$. We accordingly bound the norm of the derivative as follows:
\begin{equation}
\Big\|\Big( \prod_{\varepsilon = \pm}  \Big(\frac{\partial}{\partial \phi^{\varepsilon}_{\sigma}} \Big)^{n^{\varepsilon}_{\sigma}}\Big)  F_{z}(\Phi) \Big\| \leq 2^{n_{\sigma}} (\phi^{+}_{\sigma}\phi_{\sigma}^{-})^{n_{\sigma}/2} 
\sum_{i = 0}^{n_{\sigma}^{-}} 
i! \|f^{(n_{\sigma}-i)}(\Phi^{+}\Phi^{-}) \| \;.
\end{equation}
We apply Cauchy integral representation to estimate the derivatives of $f(\cdot)$ and we use the bound  $(\phi^{+}_{\sigma}\phi_{\sigma}^{-})^{n_{\sigma}/2} \leq n_{\sigma}! \,\ee^{(\phi^{+}_{\sigma}\phi^{-}_{\sigma})^{1/2}}$. Integrating in $\dd \phi_{\sigma}$ and taking the superior over the contour variable gives the claim.
\end{proof}

\end{document}